\tikzset{>=stealth}
\def\pgfplots@drawtickgridlines@INSTALLCLIP@onorientedsurf#1{}
          \renewcommand{\bibsection}{
          \begin{center}
  \section*{\refname\@mkboth{\MakeUppercase{\refname}}
  {\MakeUppercase{\refname}}}
              \end{center}
              }
\newtheorem{theorem}{Theorem}
\newtheorem{assumption}{Assumption}
\newtheorem{corollary}[theorem]{Corollary}
\newtheorem{definition}[theorem]{Definition}
\newtheorem{example}[theorem]{Example}
\newtheorem{lemma}[theorem]{Lemma}
\newtheorem{proposition}[theorem]{Proposition}
\newtheorem{theorem-app}{Theorem}[section]
\newtheorem{lemma-app}[theorem-app]{Lemma}
\newtheorem{proposition-app}[theorem-app]{Proposition}
\newenvironment{proof}[1][\proofname]{
                \par\normalfont\trivlist\item[\hskip\labelsep\textbf{#1}.]\ignorespaces}
                {\hfill Q.E.D.\endtrivlist}
\newcommand{\proofname}{Proof}
\newcommand{\gD}{\Delta}
\newcommand{\ga}{\alpha}
\newcommand{\gl}{\lambda}
\newcommand{\eps}{\varepsilon}
\newcommand{\gd}{\delta}
\newcommand{\R}{{\mathbb R}}
\newcommand{\N}{{\mathbb N}}
\newcommand{\Z}{{\mathbb Z}}
\newcommand{\gO}{\Omega}
\newcommand{\go}{\omega}
\newcommand{\diag}{\operatorname{diag}}
\newcommand{\cI}{{\mathcal I}}
\newcommand{\cE}{{\mathcal E}}
\newcommand{\cP}{\mathcal{P}}
\newcommand{\cD}{\mathcal D}
\newcommand{\cR}{\mathcal{R}}
\newcommand{\cH}{{\mathcal H}}
\newcommand{\cA}{{\mathcal A}}
\newcommand{\cG}{{\mathcal G}}
\newcommand{\cK}{{\mathcal K}}
\newcommand{\cB}{{\mathcal B}}
\newcommand{\Supp}{{\operatorname{Supp}}}
\long\def\cmt#1{{}}
\newcommand{\be}{\begin{equation}}
\newcommand{\ee}{\end{equation}}
\newcommand{\bea}{\begin{eqnarray}}
\newcommand{\eea}{\end{eqnarray}}
\newcommand{\bee}{\begin{equation*}}
\newcommand{\eee}{\end{equation*}}
\begin{document}


\title{Optimal Transport of Information\thanks{We thank Darrell Duffie, Piotr Dworczak, Bart Lipman,  Jean-Charles Rochet, and Stephen Morris (AEA discussant) as well as seminar participants at Caltech, UBC, SUFE, SFI and conference participants at the 2020 AEA meeting in San Diego  for helpful comments. Parts of this paper were written when Malamud visited the Bank for International Settlements (BIS) as a research fellow. The views in this article are those of the authors and do not necessarily represent those of BIS.}}

\author{Semyon Malamud\thanks{%
Swiss Finance Institute, EPF Lausanne, and CEPR; E-mail: \texttt{semyon.malamud@epfl.ch%
}}, Anna Cieslak\thanks{Duke University, Fuqua School of Business, CEPR and NBER, E-mail: \texttt{anna.cieslak@duke.edu}}, and Andreas Schrimpf\thanks{Bank of International Settlements (BIS) and CEPR; E-mail: \texttt{andreas.schrimpf@bis.org%
} }\\
 }

\date{This version: \today}

\maketitle


\begin{abstract} \noindent  We study the general problem of Bayesian persuasion (optimal information design) with continuous actions and continuous state space in arbitrary dimensions. First, we show that with a finite signal space, the optimal information design is always given by a partition. Second, we take the limit of an infinite signal space and characterize the solution in terms of a Monge-Kantorovich optimal transport problem with an endogenous information transport cost. We use our novel approach to:
1. Derive necessary and sufficient conditions for optimality based on Bregman divergences for non-convex functions.
2. Compute exact bounds for the Hausdorff dimension of the support of an optimal policy.
3. Derive a non-linear, second-order partial differential equation whose solutions correspond to regular optimal policies. 
We illustrate the power of our approach by providing explicit solutions to several non-linear, multidimensional Bayesian persuasion problems. 

\vspace{5pt}
\noindent
\textbf{Keywords}:  Bayesian Persuasion, Information Design, Signalling, Optimal Transport\\
\textbf{JEL}: D82, D83, E52, E58, E61

\vspace{5pt}
\end{abstract}

\renewcommand{\thefootnote}{\number\value{footnote}}

\pagenumbering{arabic}
\def\baselinestretch{1.617}\small\normalsize%

\clearpage

\section{Introduction}

We study the general problem of Bayesian persuasion (optimal information design) introduced in the seminal work of \cite{KamGenz2011}.\footnote{See also \cite{aumann1995}, \cite{pavan2006}, \cite{ostrovsky2010} and \cite{RayoSegal2010} for important prior contributions to the literature on communication with commitment. The term ``information design" was introduced in \cite{RePEc:edn:esedps:256} and \cite{bergemann2016information}. See \cite{BergMor2017} and \cite{kamenica2019bayesian} for excellent reviews.}  We show that solutions to optimal information design problem exhibit a remarkable mathematical structure when the private information of the sender (the state) is a random vector in $\R^L$ with a prior distribution absolutely continuous with respect to the Lebesgue measure. 

We start by solving a restricted problem in which the sender is constrained to a finite set of signals. In this case, we show that the optimal signal is always given by a partition of the state space. When the receivers' actions are functions of the expected state, the partition is given by convex polygons, being a natural analog of a monotone partition in many dimensions.\footnote{\cite{KleinbergMullainathan2019} argue that clustering (partitioning the state space into discrete cells) is the most natural way to simplify information processing in complex environments. Our results provide a theoretical foundation for such clustering. Note that, formally, in a Bayesian persuasion framework, economic agents (signal receivers) would need to use (potentially complex) calculations underlying the Bayes rule to compute the conditional probabilities. One important real-world problem arises when the receivers do not know the "true" probability distribution, in which case methods from robust optimization need to be used. See \cite{DworzakPavan2020}.}

Our explicit characterization of optimal partitions allows us to take the continuous limit and show that these partitions converge to a solution to the unconstrained problem. We establish a surprising connection between optimal information design and the Monge-Kantorovich theory of optimal transport whereby the sender effectively finds an optimal way of "transporting information" to the receiver, with an endogenous {\it information transport cost}. 

In the case when public actions are a function of expectations about (multiple, arbitrary) functions of the state (the moment persuasion, see \cite{DworzakKolotilin2019}), the information transport cost coincides with the classical Bregman divergence function (see \cite{rockafellar1970convex}), albeit the function involved is not convex, and hence the Bregman divergence loses all of its classic properties. We derive explicit necessary and sufficient conditions characterizing optimal policies that reduce to a system of non-linear partial (integro-)differential equations under enough regularity. We illustrate the power of this approach by providing explicit solutions to several non-linear, multidimensional information design problems. 

Unfortunately, in general, the smoothness of optimal policies cannot be guaranteed, and understanding their structure is very difficult. We use the theory of Hausdorff dimension in metric spaces and derive sharp upper bounds on the Hausdorff dimension of the support of the optimal policy, linking it explicitly to the "degree of convexity" of the underlying problem. 
 
\vspace{5pt}

Most existing papers on Bayesian persuasion with continuous signals consider the case of a one-dimensional signal space. See, e.g., \cite{gentzkow2016rothschild}, \cite{Kolotilin2018}, \cite{Hopenhayn2019}, \cite{DworczakMartini2019}, \cite{Arielietal2020} and \cite{kleiner2020extreme}. However, little is known about optimal information design in the multidimensional case. The only results we are aware of are due to \cite{DworczakMartini2019}.  They show that, under some technical conditions, with a two-dimensional state, four possible actions, and sender's utility that only depends on the expected state, the optimal information design is given by a partition into four convex polygons. See; also, \cite{DworzakKolotilin2019} who derive criteria for the support of the optimal policy to lie on a line. By contrast, we characterize the general solution to the problem of moment persuasion in any dimension, opening up a road to numerous real-world applications (see, e.g., \cite{das2017reducing}). Our characterization implies a certain type of monotonic association between the actual state and the optimal signal. This result could be viewed as a multidimensional extension of the results in \cite{kolotilin2020assortative}. 

Our paper is also related to the growing literature on martingale optimal transport in mathematical finance, the classic Monge-Kantorovich optimal transport under the constraint that the target is a conditional expectation of the origin. See, \cite{beiglbock2016problem} and \cite{ghoussoub2019structure}. 

Some of the results in our paper have been established in \cite{rochet1994insider} and \cite{kramkov2019optimal} for the special case of a two-dimensional first moment persuasion with a quadratic cost function. \cite{kramkov2019optimal} were the first to show a connection between such problems and optimal transport. As \cite{rochet1994insider} show, such problems are also related to a special class of signaling games.

A key technical contribution that makes our analysis possible is the solution to the finite signal case. The fact that optimal design is always a partition satisfying certain remarkable optimality conditions allows us to show that the full solution inherits the same properties. Heuristically, partition structure implies that the problem has a sufficiently regular solution (a map), and supports of maps are (in some sense) continuous in the limit. By contrast, with randomization, the optimal policy is a measure, and supports of measures are not continuous in the limit. In turn, quite remarkably, the proof of the partition result is based on the theory of real analytic functions. To the best of our knowledge, such techniques have never been used in optimal transport theory before. 

Finally, we note that our solution to the problem with a finite signal space relates this paper to the literature on optimal rating design (see, e.g., \cite{Hopenhayn2019}). Indeed, in practice, most ratings are discrete. For example, credit rating agencies use discrete rating buckets (e.g., above BBB-); restaurant and hotel ratings take a finite number of values. 

\section{The Model}

There are four time periods, $t\ =\ 0-,\ 0,\ 0+,\ 1.$ The information designer (the sender) believes that the state $\go$ (the private information of the sender) is a random vector taking values in  $\gO\subset \R^L$, an open subset of $\R^L$, and distributed with a density $\mu_0(\go)$ that is strictly positive on $\gO.$ There are $N\ge 1$ agents (receivers) who share the same prior $\mu_0(\go).$ 

Following \cite{KamGenz2011}, we assume that the sender is able to \emph{commit} to an information design at time $t=0-,$ before the state $\go$ is realized. 
The sender learns the realization of the state $\go$ at time $t=0,$ while the receivers only learn it at time $t=1.$ The sender's objective is then to decide how much, and what kind of, information about $\go$ to reveal to the receivers at time $t=0$.

\begin{definition}[Finite Information Design] An information design is a probability space $\cK$ (hereinafter signal space) and a probability measure $\cP$ on $\cK\times \gO.$ An information design is $K$-finite if the signal space $\cK$ has exactly $K$ elements: $|\cK|=K.$ An information design is finite if it is $K$-finite for some $K\in\N.$ In this case, without loss of generality we assume that $\cK\ =\ \{1,\cdots,K\}.$
\end{definition}

Once the receivers observe a signal $k\in \{1,\cdots,K\}$, they update their beliefs about the probability distribution of $\go$ using the Bayes rule. To do this, the receivers just need to know $\pi_k(\go)$ -- the probability of the state being $\go$ given the observed signal $k$. As such, a $K$-finite information design can be equivalently characterized by a set of measurable functions $\pi_k(\go),\ k=\{1,\cdots,K\}$ satisfying conditions $\pi_k(\go)\in[0,1]$ and $\sum_k\pi_k(\go)\ =\ 1$ with probability one.

Intuitively, an information design is a map from the space $\gO$ of possible states to a ``dictionary'' of $K$ messages, whereby the sender commits to a precise rule of selecting a signal from the dictionary for every realization of $\go.$ In principle, it is possible that this rule involves randomization, whereby, for a given $\go,$ the sender randomly picks a signal from a non-singleton subset of messages in the dictionary.
An information design does not involve randomization if and only if it is a partition of the state space $\gO.$ 

\begin{definition}[Randomization]\label{dfn-main}
We say that information design involves randomization if $\cP(\pi_k(\go)\not\in\{0,1\})>0$  for some $k$. We say that information design is a partition if $\cP(\pi_k(\go)\in\{0,1\})=1$ for all $k=1,\cdots,K.$ In this case,
\begin{equation}\label{part-1}
\cup_{k=1}^K \{\go:\ \pi_k(\go)=1\}
\end{equation}
is a Lebesgue-almost sure partition of $\gO$ in the sense that $\gO\setminus \cup_{k=1}^K \{\go:\ \pi_k(\go)=1\}$ has Lebesgue measure zero, and the subsets of the partition \eqref{part-1} are Lebesgue-almost surely disjoint.
\end{definition}

We use $\bar\pi\ =\ (\pi_k(\go))_{k=1}^K\in [0,1]^K$ to denote the random $K$-dimensional vector representing the information design. As we show below, a key implication of this setting is that, with a continuous state space and under appropriate regularity conditions, randomization is never optimal, and hence optimal information design is always given by a {\it partition}. While this result might seem intuitive, its proof is non-trivial and is based on novel techniques that, to the best of our knowledge, have never been used in the literature before. It is this result that is key our subsequent analysis of the unconstrained problem. 

\subsection{Receivers}

At time $t=0,$ upon observing a signal $k,$ each agent (receiver) $n=1,\cdots,N$ selects an action $a_n\in\R^m$ to maximize the expected utility function
\[
E[U_n(a_{n},a_{-n},\go)|k]\,,
\]
where we use $a_{-n}\ =\ (a_i)_{i\not=n}\in \R^{(N-1)m}$ to denote the vector of actions of other agents. We denote by  $a=(a_n)_{n=1}^N\in \R^M,\ M=Nm$ the vector of actions of all agents. A Nash equilibrium action profile $a(k)\ =\ (a_n(k))_{n=1}^N$ is a solution to the fixed point system
\begin{equation}\label{sys-1}
a_n(k)\ = \ \arg\max_{a_n} E[U_n(a_{n},a_{-n}(k),\go)|k]\,.
\end{equation}
We use $C^2(\gO)$ to denote the set of functions that are twice continuously differentiable in $\gO$. We will also use $D_a$ and $D_{aa}$ to denote the gradient and the Hessian with respect to the variable $a.$ Let 
\begin{equation}
G_n(a,\go)\ \equiv\ D_{a_n}U_n(a_{n},a_{-n},\go)\,.
\end{equation}

\begin{assumption}\label{ac} There exists an integrable majorant $Y(\go)\ge 0$ such that $Y(\go)\ge U_n(a_n,a_{-n},\go)$ for all $a\in \R^{M},\ \go\in\gO,\ n=1,\cdots,N.$ 
The function $U_n(a_n,a_{-n},\go)\in C^2(\R^m\times\R^{Nm}\times\gO)$ is strictly concave in $a_n$, and is such that $\lim_{\|a_n\|\to\infty}U_n=-\infty$.\footnote{This is a form of Inada condition ensuring that the optimum is always in the interior. An integrable majorant is needed to apply Fatou lemma and conclude that $\lim_{\|a_n\|\to\infty}E[U_n(a_n,a,\go)|k]= -\infty$ always.}

Furthermore, the map $G\ =\ (G_n)_{n=1}^N\,:\ \R^{M}\times \gO\to\R^{M}$ satisfies the following conditions:  
\begin{itemize}
\item $G$ is uniformly monotone in $a$ for each $\go$ so that $\eps\|z\|^2\ \le -z^\top D_aG(a,\go)z\le\eps^{-1}\|z\|^2$ for some $\eps>0$ and all $z\in \R^M;$\footnote{Strict monotonicity is important here. Without it, there could be multiple equilibria.} 

\item the unique solution $a_*(\go)$ to $G(a_*(\go),\go)=0$ is square integrable: $E[\|a_*(\go)\|^2]<\infty.$
\end{itemize}
\end{assumption}

Assumption \ref{ac} implies that the following is true:

\begin{lemma} \label{existence} For any information design $\pi,$ there exists a unique equilibrium $a\ =\ a_*(\pi)$. It is the unique solution to the fixed point system
\begin{equation}\label{sys-3}
E[G(a(k),\go)|k]\ =\ 0\,,
\end{equation}
and $\|a(k)\|^2\ \le\ \kappa E[\|a_*(\go)\|^2|k]$ for some universal $\kappa>0.$ This equilibrium depends smoothly on $\pi.$ 
\end{lemma}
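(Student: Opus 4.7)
The plan is: (i) reduce each receiver's problem to the first-order condition and conclude that a Nash profile satisfies $E[G(a(k),\go)\mid k]=0$; (ii) solve this system signal-by-signal using the uniform monotonicity of $G$ in $a$; (iii) derive the a priori norm bound from monotonicity combined with $G(a_*(\go),\go)=0$; (iv) obtain smoothness in $\pi$ from the implicit function theorem.

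First I would fix any signal $k$ with $\Prr(k)>0$ and analyze agent $n$'s problem $\max_{a_n} E[U_n(a_n,a_{-n}(k),\go)\mid k]$. Strict concavity of $U_n$ in $a_n$ is preserved under conditional expectation, so the maximizer, if it exists, is unique and characterized by the FOC $E[G_n(a,\go)\mid k]=0$. To get existence, the integrable majorant $Y(\go)$ together with $U_n\to-\infty$ as $\|a_n\|\to\infty$ lets me invoke reverse Fatou to conclude $E[U_n(a_n,a_{-n},\go)\mid k]\to -\infty$ as $\|a_n\|\to\infty$, so the argmax is attained. Stacking the FOCs across $n$ gives the Nash system $E[G(a(k),\go)\mid k]=0$.

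Next, fix $k$ and define $\Phi_k:\R^M\to\R^M$ by $\Phi_k(a)=-E[G(a,\go)\mid k]$. The two-sided Hessian bound $\eps\|z\|^2\le -z^\top D_a G(a,\go)z\le \eps^{-1}\|z\|^2$ passes under conditional expectation, so $\Phi_k$ is $C^1$ with $\eps I\preceq D\Phi_k(a)\preceq \eps^{-1}I$ everywhere. By the Hadamard--Minty global inversion theorem for uniformly monotone globally Lipschitz $C^1$ maps, $\Phi_k$ is a diffeomorphism of $\R^M$ onto itself, yielding a unique zero $a(k)$ and hence a unique Nash equilibrium. For the norm bound, set $b_k:=E[a_*(\go)\mid k]$ (finite by Jensen and square integrability of $a_*$). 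Since $a(k)$ and $b_k$ are constants given $k$, uniform monotonicity and the equilibrium condition give
\[
\eps\|a(k)-b_k\|^2 \le -(a(k)-b_k)^\top E[G(a(k),\go)-G(b_k,\go)\mid k] = (a(k)-b_k)^\top E[G(b_k,\go)\mid k].
\]
Because $G(a_*(\go),\go)=0$ pointwise, $E[G(b_k,\go)\mid k]=E[G(b_k,\go)-G(a_*(\go),\go)\mid k]$; the upper Hessian bound yields $\eps^{-1}$-Lipschitzness of $G$ in $a$, so Cauchy--Schwarz and Jensen give $\|a(k)-b_k\|\le \eps^{-2}E[\|a_*(\go)-b_k\|\mid k]\le 2\eps^{-2}E[\|a_*(\go)\|\mid k]$. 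Combining with $\|b_k\|\le E[\|a_*(\go)\|\mid k]$ and $E[\|a_*\|\mid k]^2\le E[\|a_*\|^2\mid k]$ gives $\|a(k)\|^2\le \kappa\, E[\|a_*(\go)\|^2\mid k]$ with $\kappa=\kappa(\eps)$ universal.

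Finally, smoothness in $\pi$ follows from the implicit function theorem applied to $F(a,\pi):=E[G(a,\go)\mid k]=\int G(a,\go)\pi_k(\go)\mu_0(\go)\,d\go\big/\int \pi_k(\go)\mu_0(\go)\,d\go$, which is linear-rational in $\pi$; its partial derivative in $a$ is $E[D_aG(a,\go)\mid k]$, uniformly invertible by the lower Hessian bound. The main subtlety I expect is making the smoothness claim precise, since $\pi$ lives in an infinite-dimensional function space and ``smooth'' must be interpreted in an appropriate Fréchet sense; the rest is essentially finite-dimensional monotone-operator theory. The a priori bound is routine once one thinks to compare $a(k)$ with the (constant) conditional mean $E[a_*\mid k]$ rather than with $a_*(\go)$ itself.
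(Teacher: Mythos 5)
Your proof is correct and follows essentially the same route as the paper's: existence and uniqueness come from the uniform monotonicity of $a\mapsto E[G(a,\go)\mid k]$ (the paper runs the explicit contraction $a\mapsto a+\gd E[G(a,\go)\mid k]$, which is just the constructive version of the Minty--Browder/global-inversion fact you cite), and the a priori bound comes from pairing the equilibrium condition with monotonicity and $G(a_*(\go),\go)=0$ (the paper tests against $a_*(\go)$ pointwise rather than against $E[a_*(\go)\mid k]$, a cosmetic difference). Your reliance on $\eps^{-1}$-Lipschitzness of $G$ in $a$ reads the upper quadratic-form bound as an operator-norm bound on $D_aG$, but the paper's own proof makes the identical reading, so this is not a gap relative to the reference argument.
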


\subsection{Optimal Information Design}

Without loss of generality, we may assume that at the optimum we always have $a(k)\not=a(\tilde k)$ for any $k\not=\tilde k.$ That is, different signals always induce different actions. We assume that the sender chooses the information design to maximize the expected public welfare function $W(a,\go)$ over all possible action profiles satisfying the participation (optimality) constraints of the receivers:
\begin{equation}\label{w-w}
\begin{aligned}
&\bar\pi^*\ =\ \arg\max_{\bar\pi}E[W(a_*(\bar\pi),\go)]\ =\ \arg\max_{\bar\pi}\{E[W(a,\go)]:\ a\ =\ {\rm maximizes\ agents'\ utilities}\}\\
&=\ \max_{\bar\pi,\ a}\{E[W(a(k),\go)]\ :\ E[G(a(k),\go)|k]\ =\ 0\ \forall\ k\}\,.
\end{aligned}
\end{equation}
By direct calculation, we can rewrite the expected social welfare function as
\begin{equation}\label{welfare-e}
E[W(a_*(\bar\pi),\go)]\ =\ \sum_{k=1}^K \int_\gO W(a_*(k,\bar\pi),\go)\,\pi_k(\go)\mu_0(\go)d\go\,.
\end{equation}

\begin{example}[Moment Persuasion]\label{exampl1} The most important example throughout this paper will be a setup where
\[
G_n(a,\go)\ =\ g_n(\go)\ -\ a_n\,
\]
for some functions $g_n(\go),\ n=1,\cdots,M.$ \cite{DworzakKolotilin2019} refer to this setup as ``moment persuasion." It is known that any continuous function $W(a,\go)$ can be uniformly approximated by a separable function, 
\[
W(a,\go)\ \approx\ \sum_{k=1}^\kappa f_k(a) \varphi_k(\go)
\]
for some smooth functions $\varphi_k,\ f_k$ (e.g., polynomials). As a result, defining $G_{n+l}=\varphi_l(\go)-a_{n+l},\ l=1,\cdots,\kappa,$ and 
\[
\tilde W(a)\ =\ \sum_{k=1}^\kappa f_k(a) a_{n+k}\,,
\]
we get 
\[
E[W(a,\go)]\ \approx\ E[\tilde W(a)]\,. 
\]
More generally, if we approximate 
\[
G(a,\go)\ \approx\ \sum_i \psi_i(a)\phi_i(\go)\,,
\]
we get that $E[G(a,\go)|k]=0$ is equivalent to the optimal action, $a,$ being a function of conditional expectations of $\phi_i.$ Thus, any optimal information design problem considered in this paper can be approximated by a moment persuasion problem. 
\end{example}

We also need a technical condition motivated by Example \ref{exampl1}.

\begin{assumption}\label{integrability} There exists a function $g:\gO\to \R_+$ such that $g(\go)\ge \|a_*(\go)\|^2$ and the set $\{\go:g(\go)\le A\}$ is compact for all $A>0,$ and a convex, increasing function $f\ge 1$ such that  $|W(a,\go)|+\|D_aW(a,\go)\|\ \le\ g(\go) f(\|a\|^2)$ and 
\begin{equation}\label{integrability1}
E[g^2(\go) f(g(\go))]<\infty.
\end{equation}
\end{assumption}

To state the main result of this section ---the optimality of partitions---we need also the following definition.

\begin{definition}
We say that functions $\{f_1(\go),\cdots,f_{L_1}(\go)\},\ \go\in\gO,$ are linearly independent modulo $\{g_1(\go),\cdots,g_{L_2}(\go)\}$ if there exist no real vectors $h\in \R^{L_1},\ k\in \R^{L_2}$ with $\|h\|\not=0,$ such that
\[
\sum_i h_i f_i(\go)\ =\ \sum_j k_j g_j(\go)\qquad for\ all\ \go\in\go\,.
\]
In particular, if $L_1=1,$ then $f_1(\go)$ is linearly independent modulo $\{g_1(\go),\cdots,g_{L_2}(\go)\}$ if $f_1(\go)$ cannot be expressed as a linear combination of $\{g_1(\go),\cdots,g_{L_2}(\go)\}.$
\end{definition}

We also need the following technical condition.

\begin{definition}\label{main-ass-indep} We say that $W, G$ are in a generic position if for any fixed $a,\tilde a\in \cR^N,\ a\not=\tilde a$, the function $W(a,\go)-W(\tilde a,\go)$ is linearly independent modulo $\left\{\{G_n(a,\go)\}_{n=1}^N,\{G_n(\tilde a,\go)\}_{n=1}^N\right\}$;
\end{definition}

$W, G$ are in generic position for generic functions $W$ and $G$.\footnote{The set of $W,G$ that are not in generic position is nowhere dense in the space of continuous functions.} We will also need a key property of real analytic functions\footnote{A function is real analytic if it can be represented by a convergent power series in the neighborhood of any point in its domain.} that we use in our analysis (see, e.g., \cite*{HugonnierMalamudTrubowitz2012}).

\begin{proposition}\label{zero-go} If a real analytic function $f(\go)$ is zero on a set of positive Lebesgue measure, then $f$ is identically zero. Hence, if real analytic functions $\{f_1(\go),\cdots,f_{L_1}(\go)\}$ are linearly dependent modulo $\{g_1(\go),\cdots,g_{L_2}(\go)\}$ on some subset $I\subset\gO$ of positive Lebesgue measure, then this linear dependence also holds on the whole $\gO$ except, possibly, a set of Lebesgue measure zero.
\end{proposition}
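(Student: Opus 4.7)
The plan is to prove the first sentence by reducing to the classical one-variable identity theorem via the Lebesgue density theorem, polar coordinates, and an elementary polynomial-vanishing lemma, and then to obtain the second sentence as a direct corollary by applying the first to a suitable real analytic linear combination. I may assume $\gO$ is connected and work component-by-component otherwise.

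First I would fix $f$ real analytic on $\gO$ with zero set $Z = \{\go\in\gO : f(\go)=0\}$ of positive Lebesgue measure, and pick a ball $B_\delta(p_0) \subset \gO$ with $|Z \cap B_\delta(p_0)| > 0$ (for instance a density point of $Z$). Writing $\go = p_0 + rv$ with $r>0$, $v \in S^{L-1}$ and applying Fubini to $\mathbf{1}_Z$ in polar coordinates, I obtain a subset $V \subset S^{L-1}$ of positive $(L-1)$-dimensional surface measure such that for every $v\in V$ the radial section $\{r\in(0,\delta) : p_0+rv \in Z\}$ has positive one-dimensional Lebesgue measure. For each such $v$, the function $r\mapsto f(p_0+rv)$ is real analytic near $0$ and vanishes on a set with an accumulation point, so the classical one-variable identity theorem forces it to vanish identically; equivalently,
\begin{equation*}
D^k f(p_0)[v,\ldots,v] \;=\; 0 \qquad \text{for every } k\ge 0 \text{ and every } v\in V.
\end{equation*}

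The next step, which I expect to be the main obstacle, is to upgrade ``vanishing along all directions in $V$'' to the vanishing of every mixed partial derivative at $p_0$. For fixed $k$, the map $v\mapsto D^k f(p_0)[v,\ldots,v]$ is a homogeneous polynomial of degree $k$ on $\R^L$; extending radially it vanishes on the full-dimensional cone generated by $V$, which has positive $L$-dimensional Lebesgue measure. A short Fubini induction on dimension --- based on the fact that a nonzero univariate polynomial of degree $d$ has at most $d$ zeros --- shows that any polynomial on $\R^L$ vanishing on a set of positive Lebesgue measure must be identically zero. Hence this homogeneous polynomial is identically zero on $\R^L$, and polarization yields $D^k f(p_0)\equiv 0$ as a $k$-linear form for every $k$. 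Therefore the Taylor series of $f$ at $p_0$ vanishes, so $f\equiv 0$ on a neighborhood of $p_0$, and the standard identity theorem for real analytic functions on a connected open set propagates this to $f\equiv 0$ on $\gO$.

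For the second sentence, suppose $h\in \R^{L_1}\setminus\{0\}$ and $k\in \R^{L_2}$ satisfy $\sum_i h_i f_i(\go)=\sum_j k_j g_j(\go)$ for all $\go\in I$ with $|I|>0$. The function
\begin{equation*}
F(\go) \;=\; \sum_{i=1}^{L_1} h_i f_i(\go) \;-\; \sum_{j=1}^{L_2} k_j g_j(\go)
\end{equation*}
is a real analytic linear combination that vanishes on the positive-measure set $I$. The first sentence applied to $F$ (on each connected component of $\gO$ that meets $I$) gives $F\equiv 0$ off a Lebesgue null set, which is precisely the claimed extension of the linear dependence to $\gO$.
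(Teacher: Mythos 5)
Your argument is correct. Note, though, that the paper does not actually prove Proposition \ref{zero-go}: it is stated as a known property of real analytic functions with a citation to \cite{HugonnierMalamudTrubowitz2012}, and no proof appears in the appendix. What you have supplied is therefore a complete, self-contained derivation of the cited fact, and each step checks out: the polar-coordinate Fubini argument produces a positive-surface-measure set $V$ of directions whose radial sections of the zero set have positive one-dimensional measure (and a positive-measure subset of an interval cannot be discrete, so it has an accumulation point in the open interval, which is what the one-variable identity theorem needs); the passage from ``all diagonal values $D^k f(p_0)[v,\ldots,v]$ vanish on the cone over $V$'' to $D^k f(p_0)=0$ via the polynomial-vanishing lemma and polarization is sound (in fact polarization is not even needed to kill the Taylor series, since the series only involves the diagonal values); and the final propagation is the standard open-closed argument on a connected component. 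The one caveat, which you flag yourself, is connectedness of $\gO$: the first sentence of the proposition is false verbatim for a disconnected $\gO$, and your component-by-component treatment of the second sentence only extends the linear dependence to components meeting $I$ in positive measure. This is a defect of the proposition's statement rather than of your proof, and is harmless in the paper's use of the result, but it is worth stating the connectedness hypothesis explicitly.
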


Using Proposition \ref{zero-go}, it is possible to prove the main result of this section:

\begin{theorem}[Optimal finite information design]\label{mainth1} There always exists an optimal $K$-finite information design $\bar\pi^*$ which is a partition. Furthermore, if $W,\ G$ are real analytic in $\go$ for each $a$ and are in generic position, then any $K$-finite optimal information design is a partition.
\end{theorem}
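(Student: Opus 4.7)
My plan has three parts: existence of some optimum by compactness, extraction of a partition optimum via a Lyapunov-convexity (purification) argument, and, under real analyticity plus generic position, ruling out randomization at \emph{every} optimum via Lagrangian first-order conditions.

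The feasible set of $\bar\pi\in L^\infty(\gO)^K$ with $\pi_k\ge 0$ and $\sum_k\pi_k=1$ is weak-* compact. Lemma \ref{existence} gives continuity of $\bar\pi\mapsto a_*(\bar\pi)$, and Assumption \ref{integrability} lets dominated convergence upgrade this to upper semi-continuity of the welfare functional \eqref{welfare-e}, so the supremum is attained by some $\bar\pi^*$ with equilibrium actions $a^*(k)$. To produce a partition optimum, I apply the Dvoretzky-Wald-Wolfowitz purification principle (a consequence of Lyapunov's theorem on non-atomic vector measures). Consider the $\R^{M+2}$-valued functions
\[
\Phi_k(\go):=\bigl(G(a^*(k),\go),\,W(a^*(k),\go),\,1\bigr),\qquad k=1,\dots,K,
\]
and the integrals $\int\pi_k\Phi_k\mu_0\,d\go\in\R^{K(M+2)}$. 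Because $\mu_0\,d\go$ is non-atomic, the range over all simplex-valued $\bar\pi$ coincides with the range over all partition-valued $\tilde\pi$. Hence there is a partition $\tilde\pi$ with $\int\tilde\pi_k\Phi_k\mu_0\,d\go=\int\pi^*_k\Phi_k\mu_0\,d\go$ for every $k$; matching the last coordinate enforces $\sum_k\tilde\pi_k=1$ and recovers the partition structure, matching the $G$-coordinates preserves the Nash constraint so $a^*$ remains the equilibrium response to $\tilde\pi$, and matching the $W$-coordinates preserves welfare. Thus $\tilde\pi$ is an optimal partition.

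For the second statement, let $\bar\pi^*$ be any optimizer and perturb $\pi_k\to\pi_k+\eps\phi_k$ with $\sum_k\phi_k=0$. Implicitly differentiating the Nash equation $\int G(a(k),\go)\pi_k\mu_0\,d\go=0$ using the uniform monotonicity in Assumption \ref{ac} gives $d a(k)/d\eps\bigr|_{\eps=0}=-\bigl(\int D_aG\,\pi^*_k\mu_0\bigr)^{-1}\int G(a^*(k),\go)\phi_k\mu_0\,d\go$, so setting the first variation of \eqref{welfare-e} to zero yields, with $\gl_k^\top:=\bigl(\int D_aW(a^*(k),\go)\pi^*_k\mu_0\bigr)\bigl(\int D_aG(a^*(k),\go)\pi^*_k\mu_0\bigr)^{-1}$ and a scalar multiplier $\nu(\go)$ for $\sum_k\pi_k=1$, the pointwise stationarity condition
\[
W(a^*(k),\go)-\gl_k^\top G(a^*(k),\go)=\nu(\go)\quad\text{on }\{\go:0<\pi^*_k(\go)<1\}.
\]
If the set $R_{k,\tilde k}:=\{\go:\pi^*_k(\go),\pi^*_{\tilde k}(\go)\in(0,1)\}$ had positive Lebesgue measure for some $k\ne\tilde k$, then on $R_{k,\tilde k}$ we would have
\[
W(a^*(k),\go)-W(a^*(\tilde k),\go)=\gl_k^\top G(a^*(k),\go)-\gl_{\tilde k}^\top G(a^*(\tilde k),\go).
\]
Both sides are real analytic in $\go$, so Proposition \ref{zero-go} extends the identity to $\gO$ off a null set. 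This contradicts Definition \ref{main-ass-indep} (invoked at $a^*(k)\ne a^*(\tilde k)$, which we may assume by the WLOG convention stated before \eqref{w-w}), so $R_{k,\tilde k}$ has measure zero and $\bar\pi^*$ is a partition.

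The main obstacle I expect is the rigorous derivation of the first-order condition: because $a^*(k)$ is implicitly determined by the Nash constraint rather than freely chosen, the $\gl_k$ must be identified as shadow prices of that constraint and $\nu(\go)$ must be shown to be an $L^1$ function rather than a singular distribution. The uniform monotonicity of Assumption \ref{ac} ensures the matrix $\int D_aG\,\pi^*_k\mu_0$ is invertible with a controlled inverse, and the integrability in Assumption \ref{integrability} bounds the relevant $L^1$ norms, but threading through the implicit differentiation and the interchange of $d/d\eps$ with the integrals is where most of the care lies.
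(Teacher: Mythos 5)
Your proposal is correct, and its two halves relate to the paper differently. The second half --- deriving the stationarity condition $W(a^*(k),\go)-\gl_k^\top G(a^*(k),\go)=W(a^*(\tilde k),\go)-\gl_{\tilde k}^\top G(a^*(\tilde k),\go)$ on the set where two weights are interior, extending it to all of $\gO$ by Proposition \ref{zero-go}, and contradicting Definition \ref{main-ass-indep} via the convention $a^*(k)\neq a^*(\tilde k)$ --- is essentially the paper's own argument; the paper even sidesteps your worry about the multiplier $\nu(\go)$ by working directly with the paired perturbation $\phi_k=\eta$, $\phi_{\tilde k}=-\eta$ supported on $R_{k,\tilde k}$, which yields the pairwise indifference identity without ever introducing $\nu$ (you effectively do this too when you subtract the two stationarity conditions, so nothing is lost). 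The first half, however, is a genuinely different route: the paper obtains existence of a partition optimum for general $W,G$ by approximating them with real analytic functions in generic position and invoking the second half, whereas you purify a given optimum directly via Dvoretzky--Wald--Wolfowitz/Lyapunov, matching the integrals of $\bigl(G(a^*(k),\cdot),W(a^*(k),\cdot),1\bigr)$ so that the fixed actions $a^*(k)$ remain the (unique, by Lemma \ref{existence}) equilibrium and welfare is unchanged. Your purification argument is valid here --- $\mu_0\,d\go$ is non-atomic and the finitely many functions are integrable under Assumptions \ref{ac} and \ref{integrability} --- and it is arguably cleaner, since it requires no analyticity and no limiting argument; what it does not give you (and what the paper's approximation route also only sketches) is any structure on the resulting partition, which the paper recovers separately in Theorem \ref{regular-partition}. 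The only technical point worth making explicit in your write-up is that the two-sided perturbation is feasible only after shrinking $R_{k,\tilde k}$ to a positive-measure subset on which $\pi^*_k$ and $\pi^*_{\tilde k}$ are bounded away from $0$ and $1$.
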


While Theorem \ref{mainth1} may seem intuitive, its proof is non-trivial and is based on novel techniques (see the appendix). First, we prove the 
second part. The existence of an optimum $\bar\pi$ follows by standard compactness arguments. Suppose now on the contrary that $\bar\pi$ is not a partition. Then, for some $k,$ we have $\pi_k(\go)\in (0,1)$ on some positive measure subset $I\subset \gO.$  At the global maximum, under arbitrary small perturbations, social welfare should decrease. We show that this can only be true if conditions of Definition \ref{main-ass-indep} are violated for $\go\in I.$ However, since $I$ has a positive Lebesgue measure, Proposition \ref{zero-go} implies that it has to be violated on the whole of $\gO$. Finally, the first part follows by a simple approximation argument because any function can be approximated by an analytic function satisfying the generic conditions of  Definition \ref{main-ass-indep}. Note that the regularity of both $W$ and the equilibrium (ensured by Lemma \ref{existence}) are crucial for the partition result. Without such regularity, classic examples of Bayesian persuasion (see, e.g., \cite{KamGenz2011}) show that randomization can be optimal. 

\subsection{The Structure of Optimal Partitions}

The goal of this section is to provide a general characterization of the ``optimal clusters" in Theorem \ref{mainth1}.

We use $D_aG(a,\go)\in \R^{M\times M}$ to denote the Jacobian of the map $G$, and, similarly, $D_aW(a,\go)\in \R^{1\times M}$ the gradient of the welfare function $W(a,\go)$ with respect to $a.$ For any vectors $x_k\ \in\ \R^{M},\ k=1,\cdots,K$ and actions $\{a(k)\}_{k=1}^K,$ let us define the partition
\begin{equation}\label{partitions1}
\begin{aligned}
\gO_k^*(\{x_\ell\}_{\ell=1}^K,\{a_\ell\}_{\ell=1}^K)\ &=\
\Bigg\{
\go\ \in\ \gO\ :\ W(a(k),\ \go)-x_k^\top G(a(k),\ \go)\\
& =\ \max_{1\le l\le K} \left(W(a(l),\ \go)\ -\ x_l^\top G(a(l),\ \go)\right)
\Bigg\}
\end{aligned}
\end{equation}
Equation \eqref{partitions1} is basically the first-order condition for the optimization problem \eqref{w-w}, whereby $x_k$ are the Lagrange multipliers of agents' participation constrains.

\begin{theorem}\label{regular-partition} Any optimal partition in Theorem \ref{mainth1} satisfies the following conditions:
\begin{itemize}
\item local optimality holds: $\gO_k\ =\ \gO_k^*(\{x_\ell\}_{\ell=1}^K,\{a_\ell\}_{\ell=1}^K)$ with $x_k^\top\ =\ \bar D_aW(k)(\bar D_aG(k))^{-1}\,,$
where we have defined for each $k=1,\cdots,K\,$
\begin{equation}
\begin{aligned}
&\bar D_aW(k)\ =\ \int_{\gO_k} D_aW(a(k),\go)\mu_0(\go)d\go\,,\ \bar D_aG(k)\ =\ \int_{\gO_k} D_{a}G(a(k),\go)\mu_0(\go)d\go
\end{aligned}
\end{equation}
\item the actions $\{a(k)\}_{k=1}^K$ satisfy the fixed point system
\begin{equation}\label{gak1}
\int_{\gO_k} G(a(k),\go)\mu_0(\go)d\go\ =\ 0,\ k=1,\,\cdots,\,K\,.
\end{equation}
\item the boundaries of $\gO_k$ are a subset of the variety\footnote{This variety is real analytic when so are $W$ and $G.$ A real analytic variety in $\R^m$ is a subset of $\R^m$ defined by a set of identities $f_i(\go)=0,\ i=1,\cdots,I$ where all functions $f_i$ are real analytic. If at least one of functions $f_i(\go)$ is non-zero, then a real analytic variety is always a union of smooth manifolds and hence has a Lebesgue measure of zero. When $W,G$ are real analytic and are in generic position, the variety $\left\{\go\in \R^m\ :\ W(a(k),\ \go)\ -\ x_k^\top G(a(k),\ \go)\ =\ W(a(l),\ \go)\ -\  x_l^\top G(a(l),\ \go)\right\}$ has a Lebesgue measure of zero for each $k\not=l.$}
\begin{equation}\label{indiff}
\cup_{k\not=l}\left\{\go\in \R^m:W(a(k),\go)-x_k^\top G(a(k),\go)=W(a(l),\go)-x_l^\top G(a(l),\go)\right\}\,.
\end{equation}
\end{itemize}
\end{theorem}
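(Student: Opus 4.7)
The plan is to apply the Lagrangian method to the constrained problem \eqref{w-w}. Attach multipliers $x_k\in\R^M$ to each receiver-equilibrium constraint $\int_\gO G(a(k),\go)\pi_k(\go)\mu_0(\go)d\go=0$ and form
\[
\cL(\bar\pi,\{a(k)\},\{x_k\})\ =\ \sum_{k=1}^K\int_\gO\bigl[W(a(k),\go)\ -\ x_k^\top G(a(k),\go)\bigr]\pi_k(\go)\mu_0(\go)d\go,
\]
subject to the simplex constraints $\sum_k\pi_k(\go)\equiv 1,\ \pi_k\ge 0$. The structural observation that drives everything is that, once $\{a(k)\}$ and $\{x_k\}$ are fixed, $\cL$ is linear in the probability vector $\bar\pi(\go)$ at each $\go$, so inner maximization reduces to a pointwise linear program on the $(K-1)$-simplex.

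First I would carry out that pointwise maximization. At each $\go$ the optimum concentrates all mass on indices $k\in\argmax_\ell\{W(a(\ell),\go)-x_\ell^\top G(a(\ell),\go)\}$. Theorem \ref{mainth1} guarantees that the global optimum is a partition, i.e.\ the argmax is a.s.\ a singleton, so the partition cell $\gO_k$ coincides up to a Lebesgue-null set with $\gO_k^*(\{x_\ell\},\{a_\ell\})$ from \eqref{partitions1}; this is the first half of condition~1. Condition~2 is then primal feasibility of the equilibrium constraint, rewritten with $\pi_k(\go)=\mathbf{1}_{\gO_k}(\go)$, which yields $\int_{\gO_k}G(a(k),\go)\mu_0(\go)d\go=0$. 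Condition~3 also drops out: continuity of the Lagrangian integrand in $\go$ means that any point on $\partial\gO_k$ lies in the closure of some neighbouring cell $\gO_\ell$ and hence satisfies the indifference equality, placing $\partial\gO_k$ inside the variety \eqref{indiff}.

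Second, I would derive the explicit formula for $x_k$ from the first-order condition with respect to $a(k)$. Differentiating $\cL$ in $a(k)$ and using $\pi_k=\mathbf{1}_{\gO_k}$ produces
\[
\int_{\gO_k}\bigl[D_aW(a(k),\go)\ -\ x_k^\top D_aG(a(k),\go)\bigr]\mu_0(\go)d\go\ =\ 0,
\]
which, in the notation of the theorem, is $\bar D_aW(k)=x_k^\top\bar D_aG(k)$. Assumption \ref{ac} makes $D_aG(a,\go)$ uniformly negative definite in $a$, so its $\mu_0$-weighted average $\bar D_aG(k)$ inherits invertibility, and solving gives $x_k^\top=\bar D_aW(k)(\bar D_aG(k))^{-1}$, completing condition~1.

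The main obstacle is justifying this $a(k)$-FOC, because the domain $\gO_k$ over which one integrates itself depends on $a(k)$ through the local-optimality characterization, so naive differentiation under the integral sign is not immediate. The resolution is an envelope/Danskin argument: on $\partial\gO_k\cap\partial\gO_\ell$ the Lagrangian integrand for index $k$ equals that for index $\ell$ by the defining indifference, so mass transferred across the boundary under a perturbation of $a(k)$ contributes zero to first order; the generic-position hypothesis together with Proposition \ref{zero-go} further guarantees that this indifference variety has Lebesgue measure zero, so the interior gradient is all that survives. Existence of the multipliers $x_k$ itself is supplied by a standard constraint qualification whose surjectivity requirement collapses, via the implicit function theorem applied to \eqref{sys-3}, to invertibility of $\bar D_aG(k)$, again delivered by Assumption \ref{ac}.
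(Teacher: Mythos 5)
Your proposal is correct, and it reaches the three conditions by a route that is recognizably different in its bookkeeping from the paper's. The paper does not set up a formal Lagrangian: it perturbs the partition directly by transferring a small mass $\cI\subset\gO_k$ to $\gO_l$, uses the equilibrium constraint and the implicit function theorem to compute the induced change $\Delta a(k)\approx(\bar D_aG(k))^{-1}\int_\cI G(a(k),\go)\mu_0(\go)d\go$, and then reads off the first-order welfare change; the multiplier $x_k^\top=\bar D_aW(k)(\bar D_aG(k))^{-1}$ appears there as the explicit sensitivity coefficient in the chain rule, not as the solution of a stationarity equation in $a(k)$. Your KKT packaging gets the same formula from $\partial\cL/\partial a(k)=0$ and the partition/indifference structure from the pointwise linear program in $\bar\pi$, which is cleaner and more systematic, but it buys this at the price of having to invoke an infinite-dimensional constraint qualification to guarantee that the multipliers exist and that Lagrangian stationarity is a valid necessary condition --- a step the paper's direct variational argument sidesteps entirely, since there the ``multiplier'' is constructed rather than postulated. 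You correctly identify that the qualification reduces to invertibility of $\bar D_aG(k)$, which follows from uniform monotonicity in Assumption \ref{ac} (together with $\mu_0(\gO_k)>0$, which holds because distinct signals induce distinct actions), so this is a fixable amount of extra work rather than a gap. Your worry about the domain $\gO_k$ depending on $a(k)$ is actually moot in the Lagrangian framework, where $\bar\pi$ and $a$ are independent primal variables and the FOC in $a(k)$ is taken at fixed $\bar\pi$; the Danskin-type boundary cancellation you describe is the right intuition for why the two viewpoints agree, and it is essentially the observation the paper uses implicitly when it only needs the inequality to hold Lebesgue-a.e.\ on $\gO_k$. Your continuity argument for the boundary condition \eqref{indiff} is fine and is, if anything, spelled out more explicitly than in the paper.
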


A key insight of Theorem \ref{regular-partition} comes from the characterization of the different clusters of an optimal partition. The sender has to solve the problem of maximizing social welfare \eqref{welfare-e} by inducing the desired actions, $(a_n),$ of economic agents for every realization of $\go.$ Ideally, the sender would like to induce $a_*\ =\ \arg\max_a W(a,\go)\,.$ However, the ability of the sender to elicit the desired actions is limited by the participation constraints of the receivers – that is, the map from the posterior beliefs induced by communication to the actions of the receivers. Indeed, while the sender can induce any Bayes-rational beliefs (i.e., any posteriors consistent with the Bayes rule; see \cite{KamGenz2011}), she has no direct control over the actions of the receivers. The degree to which these constraints are binding is precisely captured by the Lagrange multipliers $x(a),$ so that the sender is maximizing the Lagrangian $\max_a(W(a,\go)-x(a)^\top G(a,\go)).$  Formula \eqref{partitions1} shows that, inside the cluster number $k$, the optimal action profile maximizes the respective Lagrangian. The boundaries of the clusters are then determined by the indifference conditions \eqref{indiff}, ensuring that at the boundary between regions $k$ and $l$ the sender is indifferent between the respective action profiles $a_k$ and $a_l.$

Several papers study the one-dimensional case (i.e., when $L=1$ so that $\go\in \R^1$) and derive conditions under which the optimal signal structure is a monotone partition into intervals. Such a monotonicity result is intuitive, as one would expect the optimal information design to only pool nearby states. The most general results currently available are due to \cite{Hopenhayn2019} and \cite{DworczakMartini2019},\footnote{See also \cite{mensch2018}.} but they cover the case when sender's utility (social welfare function in our setting) only depends on $E[\go]\in \R^1.$\footnote{This is equivalent to $G(a,\go)\ =\ a-\go$ in our setting. In this case, formula \eqref{gak1} implies that the optimal action is given by $a(k)=E[\go|k].$} Under this assumption, \cite{DworczakMartini2019} derive necessary and sufficient conditions guaranteeing that the optimal signal structure is a monotone partition of $\gO$ into a union of disjoint intervals. \cite{Arielietal2020} (see, also, \cite{kleiner2020extreme}) provide a full solution to the information design problem when $a(k)=E[\go|k]$ and, in particular, show that the partition result does not hold in general when the signal space is continuous. 
Theorem \ref{regular-partition} proves that a $K$-finite optimal information design is in fact always a partition when the state space is continuous and the signal space is discrete. However, no general results about the monotonicity of this partition can be established without imposing more structure on the problem.\footnote{Of course, as \cite{DworczakMartini2019} and \cite{Arielietal2020} explain, even in the one-dimensional case the monotonicity cannot be ensured without additional technical conditions. No such conditions are known in the multi-dimensional case. \cite{DworczakMartini2019} present an example with four possible actions $(K=4)$ and a two-dimensional state space $(L=2)$ for which they are able to show that the optimal information design is a partition into four convex polygons. See, also, \cite{DworzakKolotilin2019}.}

Consider the optimal information design of Theorem \ref{regular-partition} and define the piece-wise constant function\,. 
\begin{equation}
a(\go)\ =\ \sum_k a(k) {\bf 1}_{\go \in \gO_k}
\end{equation}
Since welfare can be written as 
\[
E[W(a(\go),\go)],
\]
this function encodes all the properties of the optimal information design. It turns out that in the case of moment persuasion (Example \ref{exampl1}), optimality conditions of Theorem \ref{regular-partition} can be used to derive important, universal monotonicity properties of the optimal policy function $a(\go).$ 

\begin{proposition}\label{main convexity} Suppose that we are in a moment persuasion setup: $G=g(\go)-a$ with $g(\go):\ \gO\to \R^M$ and $W(a,\go)=W(a).$ 
Suppose also that $M\le L.$ Let $X\subset \gO$ be an open set suppose that $g$ is injective on $X$ and $g(X)$ is convex. Then, the set $g(\gO_k\cap X)$ is convex for each $k.$ In particular, $\gO_k\cap X$ is connected. Furthermore, the map $x\to D_aW(a(g^{-1}(x)))$ is monotone increasing on $g(\gO_k\cap X).$\footnote{A map $F$ is monotone increasing if $(x_1-x_2)^\top (F(x_1)-F(x_2))\ge 0$ for all $x_1,x_2.$} 
\end{proposition}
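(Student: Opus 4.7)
The plan is to exploit the special structure of moment persuasion to rewrite the partition produced by Theorem \ref{regular-partition} as a polyhedral decomposition in the image coordinates $y=g(\go)$, and then to read off both convexity and monotonicity directly from this rewriting.

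First I will compute the Lagrange multipliers explicitly. In the moment persuasion case, $D_aG\equiv -I_M$ and $D_aW(a)$ is independent of $\go$, so Theorem \ref{regular-partition} gives $\bar D_aG(k)=-P(\gO_k)I_M$ and $\bar D_aW(k)=P(\gO_k)D_aW(a(k))$, whence $x_k^\top=-D_aW(a(k))$. Writing $b_k=D_aW(a(k))^\top$ and $c_k=W(a(k))-b_k^\top a(k)$, substituting into \eqref{partitions1} gives
\[
\gO_k\ =\ \{\go\in\gO:\ b_k^\top g(\go)+c_k\ \ge\ b_\ell^\top g(\go)+c_\ell\ \text{for all}\ \ell\}.
\]
Equivalently, the set $A_k=\{y\in\R^M:\ b_k^\top y+c_k\ge b_\ell^\top y+c_\ell\ \forall\ell\}$ is a convex polyhedron, and $\gO_k$ is, up to a null set, the preimage under $g$ of $A_k$. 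This algebraic rewriting is the key step; everything else will follow from it.

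Second, convexity of $g(\gO_k\cap X)$ will be immediate: by injectivity of $g$ on $X$ one has $g(\gO_k\cap X)=g(X)\cap A_k$, the intersection of two convex sets. For the monotonicity claim, I will add two defining inequalities of the partition: for $y_1\in g(\gO_k\cap X)$ and $y_2\in g(\gO_\ell\cap X)$,
\[
b_k^\top y_1+c_k\ \ge\ b_\ell^\top y_1+c_\ell\quad\text{and}\quad b_\ell^\top y_2+c_\ell\ \ge\ b_k^\top y_2+c_k;
\]
summing cancels the constants and yields $(y_1-y_2)^\top(b_k-b_\ell)\ge 0$. Since $D_aW(a(g^{-1}(y)))^\top$ equals the constant $b_k$ on each cell $g(\gO_k\cap X)$, this inequality is exactly the stated monotonicity, interpreted as monotonicity of the piecewise-constant map defined across the cells.

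The step I expect to be most delicate is the parenthetical remark ``in particular, $\gO_k\cap X$ is connected.'' The natural argument is to pull back connectedness of the convex set $g(X)\cap A_k$ through $g^{-1}$, but this requires $g$ to be a homeomorphism onto its image. When $M=L$, Brouwer's invariance of domain applied to the $C^2$ injection $g$ delivers this for free (indeed $g$ is then a local diffeomorphism by the rank theorem). When $M<L$ a smooth injection cannot have full-rank Jacobian, so some additional regularity (a smooth embedding assumption, or a direct path-lifting argument along line segments in $A_k\cap g(X)$) is needed; this appears to be implicit in the statement. All remaining assertions are algebraic consequences of the polyhedral rewriting above.
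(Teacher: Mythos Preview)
Your proposal is correct and follows essentially the same route as the paper: rewrite the partition cells via Theorem \ref{regular-partition} as preimages of convex polyhedra $A_k$ in $y$-coordinates, deduce $g(\gO_k\cap X)=g(X)\cap A_k$, and extract monotonicity from the defining affine inequalities. Your monotonicity argument (adding the two inequalities for $y_1\in A_k$, $y_2\in A_\ell$ directly) is in fact cleaner than the paper's, which reduces to points infinitesimally close to a single boundary and then subtracts the boundary equality from the nearby strict inequality; your version gives the global statement in one line without that reduction. You are also more careful than the paper about the connectedness corollary: the paper simply asserts it, whereas you correctly flag that pulling back connectedness through $g^{-1}$ needs $g|_X$ to be a homeomorphism onto its image, automatic when $M=L$ by invariance of domain but requiring an extra embedding hypothesis when $M<L$.
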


Partitions into convex sets are the natural multi-dimensional analogs of one-dimensional monotone partitions from \cite{Hopenhayn2019} and \cite{DworczakMartini2019}.
Injectivity of the $g$ map and convexity of its image are crucial for the connectedness of the $\gO_k$ regions. Without injectivity, even bounding the number of connected components of $\gO_k$ is non-trivial. These effects become particularly strong in the limit when $K\to\infty,$ where lack of injectivity in the map $g$ may lead to a breakdown of even minimal regularity properties of the optimal map. 

\section{The Unconstrained Problem and the Cost of Information Transport}

The full, unconstrained optimal information design problem can be formulated as follows (see, e.g., \citet{KamGenz2011}, \cite{DworzakKolotilin2019}): 

\begin{definition}\label{def-u} Let $\Delta(\gO)$ be the set of probability measures on $\gO$ and define $\Delta(\Delta(\gO))$ similarly. Let also $a(\mu)$ be the unique solution to 
\[
\int_\gO G(a,\go)d\mu(\go)\ =\ 0
\]
and let 
\[
\bar W(\mu)\ =\ \int_\gO W(a(\mu),\go)d\mu(\go)\,.
\]
The optimal Bayesian persuasion (optimal information design) problem is to maximize 
\[
\int_{\gD(\gO)} \bar W(\mu)d\tau(\mu)
\]
over all distributions of posterior beliefs $\tau\in\Delta(\Delta(\gO))$ satisfying 
\[
\int_{\gD(\gO)} \mu d\tau(\mu)\ =\ \mu_0\,. 
\]
Such a $\tau$ is called an information design. We say that $\tau$ does not involve randomization if there exists a map $a:\gO\to \R^M$ such that $\tau$ coincides with the set of distributions of $\go$ conditional on $a.$ In this case, $a(\go)$ will be referred to as an optimal policy. 
\end{definition}

We start with the following important technical lemma. The proof of this lemma is non-trivial due to additional complications created by the potential non-compactness of the set $\gO.$ 

\begin{lemma}\label{lem-approx} When $K\to\infty,$ maximal social welfare attained with $K$-finite optimal information designs converges to the maximal welfare attained in the full, unconstrained problem of Definition \ref{def-u}. 
\end{lemma}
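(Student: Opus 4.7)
The direction $V_K\le V$ (where $V_K$ is the optimum among $K$-finite designs and $V$ the unconstrained optimum in Definition \ref{def-u}) is immediate, since every $K$-finite design lies in $\gD(\gD(\gO))$. I would establish the reverse by fixing $\eps>0$, picking an $\eps$-optimal $\tau^*\in\gD(\gD(\gO))$, and constructing a $K$-finite design whose welfare lies within $O(\eps)$ of $\int \bar W(\mu)\,d\tau^*(\mu)$ for all $K$ sufficiently large. The unifying idea is to discretize the posterior space by the induced action $a(\mu)\in \R^M$, combined with a truncation that tames the unboundedness of $\gO$ and of the action space.

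\textbf{Discretization by induced action.} Fix $R>0$ and partition the ball $B_R\subset \R^M$ into cubes $C_1,\dots,C_J$ of side $\delta$. Set $T_j=\{\mu:a(\mu)\in C_j\}$ for $j\ge 1$ and $T_0=\{\mu:\|a(\mu)\|>R\}$, with weights $p_j=\tau^*(T_j)$ and, whenever $p_j>0$, the pooled posteriors $\hat\mu_j=p_j^{-1}\int_{T_j}\mu\,d\tau^*(\mu)$. The assignment of signal $j$ with probability $p_j$ and posterior $\hat\mu_j$ defines a $(J+1)$-finite information design, and Bayes plausibility $\sum_j p_j\hat\mu_j=\mu_0$ is inherited from the corresponding property of $\tau^*$.

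\textbf{Welfare estimate.} The key step is the bound $\|a(\hat\mu_j)-a(\mu)\|\le \eps^{-2}\delta\sqrt M$ for every $\mu\in T_j$. It follows from Assumption \ref{ac}: the Lipschitz bound $\|G(a,\go)-G(a',\go)\|\le \eps^{-1}\|a-a'\|$ together with $E_\mu[G(a(\mu),\go)]=0$ yields, for any $\mu,\mu'\in T_j$, $\|E_\mu[G(a(\mu'),\go)]\|=\|E_\mu[G(a(\mu'),\go)-G(a(\mu),\go)]\|\le \eps^{-1}\delta\sqrt M$; averaging over $\mu\in T_j$ against $\tau^*$ gives $\|E_{\hat\mu_j}[G(a(\mu'),\go)]\|\le \eps^{-1}\delta\sqrt M$, and uniform monotonicity then converts this into the desired bound on $\|a(\hat\mu_j)-a(\mu')\|$. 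Using Assumption \ref{integrability} to control $\|D_aW\|$, the pointwise estimate $|W(a(\mu),\go)-W(a(\hat\mu_j),\go)|$ is of order $\delta\,g(\go)\,f(\tilde R^2)$ for $\tilde R$ slightly larger than $R$, uniformly over $\mu\in T_j$; integrating against $\mu$ and $\tau^*$ and using $\int E_\mu[g]\,d\tau^*=E[g]<\infty$ (by Bayes plausibility) shows that the welfare contribution from $\bigcup_{j\ge 1} T_j$ changes by at most $O(\delta)\,f(\tilde R^2)\,E[g]$.

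\textbf{Tails and the main obstacle.} For the tail $T_0$, Markov's inequality and Lemma \ref{existence} yield $p_0\le \kappa R^{-2} E[g]$, while the welfare contribution from $T_0$ in both $\tau^*$ and its approximant is dominated by the tail of a $\tau^*$-integrable function built from $g$ and $f$ using Assumption \ref{integrability} (together with $\|a(\mu)\|^2\le \kappa E_\mu[g]$ from Lemma \ref{existence}), and hence vanishes as $R\to\infty$ by dominated convergence. Sending $R\to\infty$ first and then $\delta\to 0$ produces the required $K$-finite approximation, so $V_K\to V$. The main obstacle is not the discretization itself but the simultaneous control of three sources of non-compactness---the unbounded state space $\gO$, the unbounded action space $\R^M$, and the infinite-dimensional posterior space $\gD(\gO)$---which I would handle by combining Assumption \ref{integrability} with the uniform monotonicity in Assumption \ref{ac} and the equilibrium estimate in Lemma \ref{existence}.
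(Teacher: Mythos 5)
Your proposal is correct, but it takes a genuinely different route from the paper. The paper first disposes of the non-compactness of $\gO$ by restricting posteriors to the compact sets $X_n=\{\go: g(\go)\le n\}$ and showing, via Assumption \ref{integrability} and a sequence of monotone-correlation/Jensen bounds, that the welfare loss from truncation vanishes uniformly over all designs $\tau$; it then invokes Prokhorov's theorem to get compactness and metrizability of $\gD(\gO)$ in the weak* topology, covers $\gD(\gO)$ by finitely many sets of diameter less than $\eps$, pools the posteriors within each cell, and concludes by proving that $\mu\mapsto\bar W(\mu)$ is weak*-continuous. You instead discretize by the \emph{induced action} $a(\mu)\in\R^M$ rather than by the posterior itself: this exploits the fact that $\bar W(\mu)=\int W(a(\mu),\go)\,d\mu(\go)$ depends on $\mu$ only through $a(\mu)$ and a linear functional that is exactly preserved under pooling, so the only error is the action perturbation, which you control quantitatively ($\|a(\hat\mu_j)-a(\mu)\|\le\eps^{-2}\gd\sqrt M$) via the uniform monotonicity in Assumption \ref{ac} -- the same mechanism the paper uses in its truncation step, deployed for a different purpose. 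What your approach buys is a finite-dimensional reduction with explicit rates in $\gd$ and $R$, and no topological machinery on $\gD(\gD(\gO))$; what the paper's approach buys is a cleaner modularization (state-space tails first, then a purely compact argument) and a continuity statement for $\bar W$ that is reused in the proof of Theorem \ref{mainth-limit}. Two small points to tighten: (i) the Lipschitz bound $\|G(a,\go)-G(a',\go)\|\le\eps^{-1}\|a-a'\|$ follows from the quadratic-form bound in Assumption \ref{ac} only up to the symmetric part of $D_aG$, though the paper's own proofs use it in the same way, so you are at parity; (ii) your tail estimate for the pooled posterior $\hat\mu_0$ needs one more Jensen/Chebyshev step (as in the paper) to convert $E_{\hat\mu_0}[g]\,f(\kappa E_{\hat\mu_0}[g])$ into $E_{\hat\mu_0}[g f(\kappa g)]$ before appealing to absolute continuity of the integral, since $p_0^{-1}\int_{T_0}E_\mu[g]\,d\tau^*$ need not stay bounded as $R\to\infty$. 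Neither point is a gap in the overall strategy.
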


The next result is the key step in deriving properties of an optimal information design. We will use $\Supp(a)$ to denote the support of any map $a:$ 
\[
\Supp(a)\ =\ \{x\in \R^M: \mu_0(\{\go:\ \|a(\go)-x\|<\eps\})>0\ \forall\ \eps>0\}\,. 
\]

\begin{definition}\label{full-def} Let $a_*(\go)$ be the unique solution to $G(a_*(\go),\go)=0.$ For any map $x:\R^M\to \R^M,$ we define 
\begin{equation}\label{c-function}
c(a,\go;x)\ \equiv\ W(a_*(\go),\go)-W(a,\go)\ +\ x(a)^\top G(a,\go)\,.
\end{equation}
For any set $\Xi\subset \R^M$, we define 
\begin{equation}
\phi_\Xi(\go;x)\ \equiv\ \inf_{a\in \Xi} c(a,\go;x)\,. 
\end{equation}
Everywhere in the sequel, we refer to $c$ as the {\bf cost of information transport.}
\end{definition}

To gain some intuition behind the cost $c,$ we note that $W(a_*(\go),\go)$ is the welfare attained by revealing that the true state is $\go.$ Thus, $W(a,\go)-W(a_*(\go),\go)$ is the welfare gain from inducing a different (preferred) action $a$ and $x(a)^\top G(a,\go)$ is the corresponding shadow cost of agents' participation constraints. The total cost of information transport is the sum of the true and the shadow costs of ``transporting" information from $a_*(\go)$ to $a.$ 

Since $E[W(a_*(\go),\go)]$ is independent of the information design, maximizing expected welfare $E[W(a(\go),\go)]$ is equivalent to minimizing $E[W(a_*(\go),\go)-W(a(\go),\go)].$ From now on, we will be considering this equivalent formulation of the problem. Note that, for any policy $a$ satisfying $E[G(a(\go),\go)|a(\go)=a]=0$ and any well-behaved $x$ we always have 
\begin{equation}\label{key}
E[W(a_*(\go),\go)-W(a(\go),\go)]\ =\ E[c(a(\go),\go;x)]\,. 
\end{equation}
Thus, the problem of maximizing $E[W(a(\go),\go)]$ over all admissible policies $a(\go)$ is equivalent to the problem of {\it minimizing the expected cost of information transport},  $E[c(a(\go),\go;x)]\,.$

By passing to the limit in Theorem \ref{regular-partition}, it is possible to prove the following result. 

\begin{theorem}\label{mainth-limit} There exists a Borel-measurable optimal Bayesian persuasion solving the problem of Definition \ref{def-u} that does not involve randomization. The corresponding optimal map $a(\go)$ satisfies $
E[G(a(\go),\go)|a(\go)=a]\ =\ 0$
for all $a\in \R^M.$ Furthermore, if we define 
$\Xi\ =\ \Supp(a)$
and 
\begin{equation}\label{x-top}
x(a)^\top\ =\ E[D_aW(a,\go)|a]\,E[D_aG(a,\go)|a]^{-1}\,,
\end{equation}
then 
\begin{itemize}
\item we have 
\begin{equation}\label{c<0}
c(a(\go),\go;x)\le 0
\end{equation}

\item we have 
\begin{equation}\label{olicy}
a(\go)\ =\ \arg\min_{b\in \Xi}c(b,\go;x)
\end{equation}
and the function $c(a(\go),\go;x)$ is Lipschitz continuous in $\go.$ 

\item we have $E[W(a_*(\go),\go)-W(a(\go),\go)]\ =\ E[\phi_\Xi(\go;x)]$
\end{itemize}
Furthermore, any optimal information design satisfies \eqref{c<0} and \eqref{olicy}.
\end{theorem}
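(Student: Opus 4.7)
The plan is to pass to the limit along asymptotically optimal finite-signal designs. By Lemma \ref{lem-approx} and Theorem \ref{mainth1}, I take a sequence of optimal $K$-finite partitions $\{\gO_k^K\}_{k=1}^K$ with induced actions $a^K(k)$ and Lagrange multipliers $x^K(k)^\top = \bar D_aW(k)(\bar D_aG(k))^{-1}$ as in Theorem \ref{regular-partition}, whose welfares converge to the unconstrained supremum. I encode each design as a piecewise constant map $a_K(\go)=\sum_k a^K(k)\,{\bf 1}_{\go \in \gO_k^K}$ with its companion multiplier function $x_K(\go)$. The bound $\|a^K(k)\|^2 \le \kappa\,E[\|a_*(\go)\|^2\mid k]$ from Lemma \ref{existence}, combined with the compact sublevel sets of $g$ from Assumption \ref{integrability}, makes the joint laws of $(a_K(\go),\go)$ on $\R^M \times \gO$ tight, so I extract a weak limit $\cP^*$ whose $\gO$-marginal is $\mu_0$.

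The second step is to show $\cP^*$ is carried by the graph of a Borel map $a(\go)$ and to identify this map as an $\arg\min$. Rewriting Theorem \ref{regular-partition}'s local optimality clause through Definition \ref{full-def} shows $a_K(\go) \in \arg\min_{b \in \Xi_K} c(b,\go;x_K)$ with $\Xi_K=\{a^K(k)\}$. Setting $\Xi := \Supp(a)$ to be the support of the $\R^M$-marginal of $\cP^*$ and defining $x(a)$ by \eqref{x-top}, continuity of $W,G$ and weak convergence deliver the limit relation $a(\go) \in \arg\min_{b\in \Xi} c(b,\go;x)$, which is \eqref{olicy}. Uniqueness of the minimizer for a.e.\ $\go$ follows from the generic-position condition (Definition \ref{main-ass-indep}) together with Proposition \ref{zero-go} applied to real-analytic approximations of $W,G$ (with the approximation removed by density), which forces the tie-set between any two distinct $b,b'\in\Xi$ to be Lebesgue-null; a measurable selection argument then gives a Borel map $a(\go)$, proving the no-randomization claim.

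Incentive compatibility $E[G(a(\go),\go)\mid a(\go)=a]=0$ and formula \eqref{x-top} follow by passing to the limit in $\int_{\gO_k^K} G(a^K(k),\go)\,\mu_0\,d\go = 0$ and in the Lagrange multiplier identity of Theorem \ref{regular-partition}, using dominated convergence with the majorants of Assumption \ref{integrability}. For \eqref{c<0} I argue by contradiction: suppose $c(a(\go_0),\go_0;x) > 0$ on a positive-measure set, and take a $\go_0$ that is a Lebesgue density point of $\{a=a(\go_0)\}$. I carve out a small neighborhood $V$ of $\go_0$ and route it to a fresh signal whose induced action $a_V$ converges to $a_*(\go_0)$ as $|V|\to 0$ by the smooth dependence in Lemma \ref{existence}. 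A first-order expansion, in which \eqref{x-top} precisely cancels the compensating adjustment of the original cluster's action, shows the welfare change equals $\mu_0(V)\,c(a(\go_0),\go_0;x) + o(\mu_0(V))$, strictly positive for small $V$, contradicting optimality.

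Lipschitz continuity of $c(a(\go),\go;x)$ then follows from \eqref{olicy} since $c(b,\cdot;x)$ is uniformly Lipschitz on bounded subsets of $\Xi$, so $\phi_\Xi(\cdot;x)$ is Lipschitz as an infimum of a uniformly Lipschitz family. The identity $E[W(a_*(\go),\go)-W(a(\go),\go)] = E[\phi_\Xi(\go;x)]$ reduces to $E[c(a(\go),\go;x)]$ because the conditional IC equation annihilates $E[x(a(\go))^\top G(a(\go),\go)]$. The perturbation argument for \eqref{c<0} and the $\arg\min$ derivation depend only on first-order optimality, so they apply to every optimal information design, yielding the final claim. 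The hardest step will be the transition from finite mixtures to a pure map in the limit, where weak convergence generically allows mixtures; ruling them out rigorously hinges on the $\arg\min$-uniqueness obtained from the generic-position condition via analytic approximation.
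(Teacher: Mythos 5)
Your overall architecture (pass to the limit along optimal $K$-finite partitions, then establish \eqref{c<0} and \eqref{olicy} for arbitrary optimal designs by a first-order perturbation that reroutes a small mass to a fresh signal) is the same as the paper's: the perturbation step is exactly the choice $\eta=\gd_{(a(\go),\go)}$ in the paper's Proposition \ref{delta-dev} and Lemma \ref{super-G}, and your Lipschitz and $E[\phi_\Xi]$ arguments are fine (indeed your ``infimum of a uniformly Lipschitz family'' argument is cleaner than the paper's Arzela--Ascoli route, modulo the reduction to compact $\gO$ from Lemma \ref{lem-approx}).

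However, there is a genuine gap in your no-randomization step, and you have correctly identified where it sits. You take weak limits of the \emph{joint laws} of $(a_K(\go),\go)$, so the limit $\cP^*$ is a priori a coupling that may mix, and you propose to rule out mixing by showing $\arg\min_{b\in\Xi}c(b,\go;x)$ is a.e.\ a singleton via the generic-position condition, Proposition \ref{zero-go}, and ``analytic approximation removed by density.'' This does not work for three reasons. First, Theorem \ref{mainth-limit} makes no genericity or analyticity assumption, and a.e.-uniqueness of an $\arg\min$ is not preserved when you remove an approximation by a density argument. Second, even under genericity, Proposition \ref{zero-go} only makes each \emph{pairwise} tie-set $\{\go: c(b,\go;x)=c(b',\go;x)\}$ Lebesgue-null; in the limit $\Xi$ is uncountable and the union of uncountably many null tie-sets need not be null. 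Third, the claim is simply false in general: in the paper's own spherically symmetric example the support $\Xi$ is a sphere and the minimizer is the entire sphere at $b=0$, which is precisely why Proposition \ref{uniqueness} lists the singleton-$\arg\min$ property as an extra \emph{hypothesis} rather than deriving it. The paper avoids the issue entirely by extracting a subsequence of the \emph{functions} $a_K(\go)$ converging in $L_2(\gO;\mu_0)$ and in probability, so the limit object is by construction a measurable map of $\go$ (Borel after modification) and no graph-support argument is needed. To repair your proof you should replace the weak convergence of couplings by convergence of the policy maps themselves (or otherwise show the limiting coupling is concentrated on a graph by a different mechanism, e.g.\ by passing \eqref{olicy} through the finite-$K$ partition structure rather than through uniqueness of minimizers).
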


Theorem \ref{mainth-limit} characterizes some important properties that are necessary for an optimal information design. It turns out that optimal policies possess certain remarkable properties that can be established using results from optimal transport theory. We first recall the classical optimal transport problem of Monge and Kantorovich (see, e.g., \cite{mccann2011five}). 

\begin{definition} Consider two probability measures, $\mu_0(\go)d\go$ on $\gO$ and $\nu$ on $\Xi.$ The optimal map problem (the Monge problem) is to find a map $X:\gO\to \Xi$ that minimizes 
\[
\int c(X(\go),\go)\mu_0(\go)d\go
\]
under the constraint that the random variable $\chi=X(\go)$ is distributed according to $\nu.$
The Kantorovich problem is to find a probability measure $\gamma$ on $\Xi\times \gO$ that minimizes 
\[
\int c(\chi,\go)d\gamma(\chi,\go)
\]
over all $\gamma$ whose marginals coincide with $\mu_0(\go)d\go$ and $\nu$, respectively. 
\end{definition}

It is known that, under very general conditions, the Monge problem and its Kantorovich relaxation have identical values, and an optimal map exists. 
It turns out that, remarkably, any optimal policy solves the Monge problem.\footnote{This result is a direct analog of Corollary 2.5 in \cite{kramkov2019optimal} that was established there in the special case of $W(a)=a_1a_2,$ $g(\go)=\binom{\go_1}{\go_2}$ and $L=M=2.$ Its proof is also completely analogous to that in \cite{kramkov2019optimal}.}

\begin{theorem}\label{monge} Any optimal policy $a(\go)$ solves the Monge problem with $\nu$ being the distribution of the random vector $a(\go).$ 
\end{theorem}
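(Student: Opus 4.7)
The plan is to derive Theorem \ref{monge} as an immediate consequence of the pointwise optimality characterization already established in Theorem \ref{mainth-limit}: the transport interpretation is essentially encoded in the first-order condition \eqref{olicy}. Concretely, I would use three facts from Theorem \ref{mainth-limit}: the optimal policy $a$ takes values in $\Xi=\Supp(a)$; it satisfies $a(\go)=\arg\min_{b\in\Xi} c(b,\go;x)$ almost surely with $x$ as in \eqref{x-top}; and the pushforward of $\mu_0(\go)d\go$ under $a$ is $\nu$ by the very definition of $\nu$, so $a$ is automatically admissible for the Monge problem with target measure $\nu$.

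The argument then reduces to a one-line pointwise minimization. From \eqref{olicy} we have $c(a(\go),\go;x)=\phi_\Xi(\go;x)$ for $\mu_0$-almost every $\go$. For any competing measurable map $X:\gO\to \Xi$ with $X(\go)$ distributed according to $\nu$, the definition of $\phi_\Xi$ gives $c(X(\go),\go;x)\ge \phi_\Xi(\go;x)=c(a(\go),\go;x)$ pointwise a.e., and integrating against $\mu_0(\go)d\go$ yields
\begin{equation}
\int_\gO c(X(\go),\go;x)\mu_0(\go)d\go\ \ge\ \int_\gO c(a(\go),\go;x)\mu_0(\go)d\go,
\end{equation}
which is precisely Monge-optimality of $a$. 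The same pointwise bound, combined with Fubini applied to a generic Kantorovich coupling $\gamma\in\Delta(\Xi\times\gO)$ with marginals $\nu$ and $\mu_0(\go)d\go$, shows
\begin{equation}
\int c(\chi,\go;x)d\gamma(\chi,\go)\ \ge\ \int_\gO \phi_\Xi(\go;x)\mu_0(\go)d\go\ =\ \int_\gO c(a(\go),\go;x)\mu_0(\go)d\go,
\end{equation}
so the deterministic coupling supported on the graph of $a$ also minimizes the Kantorovich relaxation and the two problem values coincide.

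I do not expect a genuine obstacle here: the whole content of the Monge-Kantorovich connection is that \eqref{olicy} coincides with the $c$-cyclical-monotonicity structure of an optimal transport plan for the cost $c(\cdot,\cdot;x)$, and the hard work was already done in deriving Theorem \ref{mainth-limit} by passing to the limit in Theorem \ref{regular-partition}. The only items to check are minor measure-theoretic ones: that competitors $X$ almost surely take values in $\Xi$ (which is automatic because $\nu$ is concentrated on $\Xi=\Supp(a)$), and that $c(a(\go),\go;x)$ is $\mu_0$-integrable, which follows from the Lipschitz continuity of $c(a(\go),\go;x)$ in $\go$ stated in Theorem \ref{mainth-limit} together with the growth bounds in Assumption \ref{integrability}. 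As the footnote notes, this replicates the template used in Corollary 2.5 of \cite{kramkov2019optimal} in a much more general setting.
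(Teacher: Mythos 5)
Your proof is correct and follows essentially the same route as the paper's: both arguments rest on \eqref{olicy}, which identifies $c(a(\go),\go;x)$ with $\phi_\Xi(\go;x)$, and then compare any competing map or coupling (whose values lie in $\Xi$ a.s.) against this pointwise infimum before integrating. The only cosmetic difference is that the paper packages the comparison as an explicit Kantorovich dual pair by introducing the $c$-transform $\phi^c(a)=\inf_\go(c(a,\go;x)-\phi_\Xi(\go;x))$ and showing it vanishes on $\Xi$, whereas you use the bare inequality $c(\chi,\go;x)\ge\phi_\Xi(\go;x)$ directly; the substance is identical.
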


The result of Theorem \ref{monge} puts us in a perfect position to apply all the powerful machinery of optimal transport theory and derive properties of optimal policies. 

\begin{definition} We say that a map $a:\gO\to\R^M$ is $c$-cyclically monotone if and only if all $k\in \N$ and $\go_1,\cdots,\go_k\in \gO$ satisfy 
\[
\sum_{i=1}^k c(a(\go_i),\go_i;x)\ \le\ \sum_{i=1}^k c(a(\go_i),\go_{\sigma(i)};x)
\]
for any permutation $\sigma$ of $k$ letters.
\end{definition}

We start with the following result which is a direct consequence of a theorem of \cite{smith1992hoeffding}. 

\begin{corollary}\label{cor-cyclic} Any optimal policy is $c$-cyclically monotone.\footnote{This result is, in fact, a direct consequence of \eqref{olicy} because each $\go$ is already coupled with the ``best" $a(\go).$} 
\end{corollary}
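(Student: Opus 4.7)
The plan is to give the direct argument suggested in the footnote, then note how the same conclusion also follows from the optimal transport interpretation of Theorem~\ref{monge} via \cite{smith1992hoeffding}. The core observation is that \eqref{olicy} already encodes a pointwise optimality statement that is strictly stronger than $c$-cyclical monotonicity, so cyclical monotonicity should fall out by reindexing a single sum.

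Concretely, I would fix arbitrary $\go_1,\dots,\go_k \in \gO$ and a permutation $\sigma$. Since $a(\go_i)\in \Xi=\Supp(a)$ for every $i$, applying \eqref{olicy} at the point $\go_{\sigma(i)}$ with the competitor $b = a(\go_i)\in \Xi$ yields
\[
c(a(\go_{\sigma(i)}),\go_{\sigma(i)};x)\ \le\ c(a(\go_i),\go_{\sigma(i)};x).
\]
Summing over $i=1,\dots,k$ and using that $\sigma$ is a bijection, the left-hand side reindexes to $\sum_{i=1}^k c(a(\go_i),\go_i;x)$, which is exactly the $c$-cyclical monotonicity inequality. Integrability is not an issue because the inequality is pointwise; any measurability subtleties are already dealt with in Theorem~\ref{mainth-limit}, which delivers a Borel-measurable optimal map.

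An alternative route, matching the statement that invokes \cite{smith1992hoeffding}, is to combine Theorem~\ref{monge} with the classical fact that the support of an optimal transport coupling is $c$-cyclically monotone: by Theorem~\ref{monge} the joint law of $(a(\go),\go)$ solves the Kantorovich problem with marginals $\nu$ and $\mu_0(\go)d\go$, and its support is the graph $\{(a(\go),\go):\go\in\gO\}$, whose $c$-cyclical monotonicity is precisely the asserted inequality for all finite samples. I expect no real obstacle here: the direct derivation is essentially one line, and the optimal transport route requires only the standard characterization of optimal couplings, which is already in scope given Theorem~\ref{monge}. The only small subtlety worth flagging is that \eqref{olicy} must be applied along the full support $\Xi$ rather than just the image of any particular measurable section of $a$, but that is exactly what the definition of $\Xi = \Supp(a)$ guarantees.
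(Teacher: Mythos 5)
Your direct argument is exactly the one the paper intends: the footnote to the corollary states that the result follows from \eqref{olicy} because each $\go$ is already matched with its cost-minimizing action in $\Xi$, and your application of \eqref{olicy} at each $\go_{\sigma(i)}$ with competitor $a(\go_i)\in\Xi$ followed by reindexing the sum is the correct, complete version of that one-line proof. The proposal is correct and takes essentially the same approach as the paper (your alternative route via Theorem~\ref{monge} matches the paper's citation of \cite{smith1992hoeffding} as well).
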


Cyclical monotonicity plays an important role in the theory of optimal demand. See, for example, \cite{rochet1987necessary}. In our setting, this result has a similar flavour: In order to induce an optimal action, the sender optimally aligns actions $a$ with the state $\go$ to minimize the cost of information transport, $c.$   

We complete this section with a direct application of a theorem of \cite{gangbo1995habilitation} and \cite{levin1999abstract}. Recall that a function $u(\go)$ is called $c$-convex if and only if $u=(u^{\tilde c})^c$ where 
\[
u^{\tilde c}(a)\ =\ \sup_{\go\in \gO}(-c(a,\go)-u(\go)),\ v^c(\go)\ =\ \sup_{a\in \R^M}(-c(a,\go)-v(a))
\]
The class of $c$-convex functions has a lot of nice properties (see, e.g., \cite{mccann2011five}). In particular, a $c$-convex function $u$ is twice differentiable Lebesgue-almost everywhere and satisfies $|Du(\go)|\ \le\ \sup_a |D_\go c(a,\go)|$ and $D_{aa}^2u(\go)\ \ge\ \inf_a-D_{\go\go} c(,\go).$ The following is true (see, e.g., \cite{mccann2011five}).

\begin{corollary}\label{gradentcor} Suppose that $c$ is jointly continous in $(a,\go)$ and that the map $a\ \to\ D_\go c(a,\go)$ is injective for Lebesgue-almost every $\go\in\gO.$ Let $a=Y(\go, p)$ be the unique solution to $D_\go c(a,\go)=-p$ for any $p.$ Then, there exists a locally Lipschitz $c$-convex function $u:\gO\to \R$ such that $a(\go)\ =\ Y(\go, Du(\go)).$ 
\end{corollary}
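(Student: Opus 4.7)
\textbf{Proof plan for Corollary \ref{gradentcor}.} The plan is to combine the $c$-cyclical monotonicity established in Corollary \ref{cor-cyclic} with a Rockafellar-Rüschendorf-type representation theorem, then invert the twist condition. First, by Theorem \ref{mainth-limit} there exists an optimal policy $a(\go)$ which, by Corollary \ref{cor-cyclic}, is $c$-cyclically monotone on $\gO$. The classical Rüschendorf theorem (in its $c$-cyclical form, as stated in \cite{gangbo1995habilitation} and \cite{levin1999abstract}) then guarantees the existence of a $c$-convex function $u:\gO\to\R$ such that
\begin{equation}
u(\go)+u^{\tilde c}(a(\go))\ =\ -c(a(\go),\go;x)\quad\text{for $\mu_0$-a.e.\ }\go\in\gO,
\end{equation}
i.e.\ $a(\go)$ lies in the $c$-subdifferential $\partial^c u(\go)$ almost everywhere.

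Second, I would upgrade this $c$-convex $u$ to a locally Lipschitz function. Joint continuity of $c$ in $(a,\go)$ together with the representation $u(\go)=\sup_{a\in \R^M}(-c(a,\go;x)-u^{\tilde c}(a))$ expresses $u$ as a supremum of a family of functions whose $\go$-gradients are uniformly controlled by $\sup_a\|D_\go c(a,\go;x)\|$ on compact subsets; this pointwise control on the slopes of the competing functions in the supremum gives local Lipschitzness of $u$ (and, by Rademacher's theorem, differentiability almost everywhere).

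Third, I would translate the envelope condition at a point of differentiability $\go$: since $a(\go)$ is a maximizer in $a\mapsto -c(a,\go;x)-u^{\tilde c}(a)$ on $\Xi$, a standard argument comparing $u(\go)$ and $u(\go+h)$ using the bound $u(\go+h)\ge -c(a(\go),\go+h;x)-u^{\tilde c}(a(\go))$ and the equality at $\go$ yields the first-order condition
\begin{equation}
Du(\go)\ =\ -D_\go c(a(\go),\go;x)\,.
\end{equation}
This identity is exactly what the twist hypothesis is designed to exploit: by injectivity of $a\mapsto D_\go c(a,\go;x)$, the equation $D_\go c(a,\go;x)=-Du(\go)$ has the unique solution $a=Y(\go,Du(\go))$, and we conclude $a(\go)=Y(\go,Du(\go))$ a.e.

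The main obstacle, I expect, is not the logical skeleton — all of the ingredients are standard once Theorem \ref{mainth-limit} and Corollary \ref{cor-cyclic} are in hand — but the verification that the Rüschendorf construction applies here despite $c$ depending on the endogenous Lagrange field $x(\cdot)$ defined in \eqref{x-top} and despite potential non-compactness of $\gO$ and $\Xi$. Concretely, one must check that $c(\cdot,\cdot;x)$ is bounded below on relevant subsets so that the $c$-transforms $u^{\tilde c}$ and $(u^{\tilde c})^c$ are well-defined and finite, and that the envelope differentiation is justified at Lebesgue-a.e.\ $\go$, not merely at interior maximizers. Once these technical points are addressed by restricting to the Lipschitz region identified in Theorem \ref{mainth-limit} (where $c(a(\go),\go;x)$ is already Lipschitz in $\go$) and by a standard localization/exhaustion argument on $\gO$, the statement follows.
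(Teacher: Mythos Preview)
Your proposal is correct and follows essentially the same route as the paper: the paper does not give an independent proof of this corollary but states it as ``a direct application of a theorem of \cite{gangbo1995habilitation} and \cite{levin1999abstract}'' (with a pointer to \cite{mccann2011five}), and your sketch simply unpacks that application---$c$-cyclical monotonicity from Corollary~\ref{cor-cyclic}, the R\"uschendorf/Rockafellar representation producing a $c$-convex potential $u$, local Lipschitzness from the supremum representation, and inversion of the first-order condition via the twist hypothesis. The technical caveats you flag (endogenous $x$, non-compactness) are not addressed separately in the paper either; they are absorbed into the blanket citation of the Gangbo--Levin theorem together with the regularity already provided by Theorem~\ref{mainth-limit}.
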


Corollaries \ref{cor-cyclic} and \ref{gradentcor} provide strong necessary conditions for an optimal policy. 
Unfortunately, in general we do not know if the conditions of Theorem \ref{mainth-limit} are also sufficient for optimality. As we show in the next section, such sufficiency can be established in the setting of moment persuasion which, as we explain above (see Example \ref{exampl1}), can in fact be used to  approximate any optimal information design problem. 

\section{Moment Persuasion} 

In the case of moment persuasion, $G(a,\go)=a-g(\go),\ a_*(\go)=g(\go),\ W(a,\go)=W(a)$ and $x(a)^\top=D_aW(a).$ The key simplification comes from the fact that $x(a)$ is independent of the choice of information design. We will slightly abuse the notation and introduce a modified definition of the function $c.$ Namely, we define 
\begin{equation}\label{cab}
c(a,b)\ =\ W(b)\ -\ W(a)\ +\ D_aW(a)\,(a-b)
\end{equation}
As one can see from \eqref{cab}, the cost of information transport, $c,$ coincides with the classic Bregman divergence that plays an important role in convex analysis (see, e.g., \cite{rockafellar1970convex}). A key innovation in this paper is the introduction of Bregman divergence with a non-convex $W.$ In this case, none of the classic results about Bregman divergence hold true, and new techniques need to be developed. 

Our key objective here is to understand the structure of the support set $\Xi$ of an optimal policy. Here, the nature of the map $g:\ \gO\to R^M$ will play an important role, as can already be guessed from Proposition \ref{main convexity}. 

\begin{definition} Let $conv(X)$ be the closed convex hull of a set $X\subset\R^M.$ A set $\Xi\subset \R^M$ is $X$-maximal if $\inf_{a\in \Xi}c(a,b)\le 0$ for all $b\in X.$ A set $\Xi$ is $W$-monotone if $c(a_1,a_2)\ge 0$ for all $a_1,a_2\in \Xi.$ A set $\Xi$ is $W$-convex if $W(ta_1+(1-t)a_2)\le t W(a_1)+(1-t)W(a_2)$ for all $a_1,a_2\in\Xi,\ t\in [0,1].$ We also define 
\[
\phi_\Xi(x)\ =\ \inf_{a\in \Xi} c(a,x)\,.
\]
Note that we are again slightly abusing the notation so that the function $\phi_\Xi$ from the previous section corresponds to $\phi_{\Xi}(g(\go))$ in this section.
\end{definition}

\begin{lemma}\label{w-convex}
Every $W$-convex set is $W$-monotone. 
\end{lemma}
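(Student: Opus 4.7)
The plan is to reduce the multi-dimensional statement to the one-dimensional first-order characterization of convexity along the line segment joining two points of $\Xi$. Fix arbitrary $a_1, a_2 \in \Xi$ and define the single-variable function $\phi:[0,1]\to \R$ by $\phi(t) = W(a_1+t(a_2-a_1))$. The defining $W$-convexity inequality, applied to the pair $(a_1,a_2)$ with interpolation parameter $t$, reads $\phi(t)\le (1-t)\phi(0)+t\phi(1)$, which is precisely the chord-above-graph inequality for $\phi$ on $[0,1]$. This is the only place where the hypothesis of the lemma enters.

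From here I would invoke the elementary fact that a function on $[0,1]$ satisfying this chord inequality, and differentiable at the left endpoint, lies above its tangent line there. Concretely, rearranging gives $(\phi(t)-\phi(0))/t \le \phi(1)-\phi(0)$ for every $t\in(0,1]$, and letting $t\to 0^+$ yields $\phi'(0)\le \phi(1)-\phi(0)$. Differentiability of $W$ is already implicit in this section (the gradient $D_aW$ enters the definition of the cost \eqref{cab}), so by the chain rule $\phi'(0) = D_aW(a_1)(a_2-a_1)$ exists.

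Unpacking the resulting inequality $W(a_2)-W(a_1)-D_aW(a_1)(a_2-a_1)\ge 0$ and comparing with the definition \eqref{cab} gives $c(a_1,a_2)\ge 0$, which is exactly the $W$-monotonicity of $\Xi$. I do not anticipate any real obstacle: the argument amounts to the standard tangent-plane inequality for differentiable convex functions, applied segment by segment, and relies only on the one-dimensional chord inequality supplied by $W$-convexity together with the $C^1$ regularity of $W$. Notably, the argument never requires the segment $\{ta_1+(1-t)a_2:t\in[0,1]\}$ to lie inside $\Xi$; the hypothesis is a statement about the values of $W$ at the interpolated point, not about membership of that point in $\Xi$, and this is what makes $W$-convexity a considerably weaker condition than ordinary convexity of $\Xi$ together with convexity of $W$.
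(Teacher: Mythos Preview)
Your proof is correct and is essentially the same as the paper's: the paper defines $q(t)=tW(a_1)+(1-t)W(a_2)-W(ta_1+(1-t)a_2)$, notes $q(t)\ge q(0)=0$ by $W$-convexity, and reads off $q'(0)=c(a_2,a_1)\ge 0$. Your function $\phi$ is just $\phi(t)=(1-t)W(a_1)+tW(a_2)-q(1-t)$ (up to relabeling), and the difference-quotient limit you take is the same derivative computation at the other endpoint, yielding $c(a_1,a_2)\ge 0$ instead of $c(a_2,a_1)\ge 0$; since $a_1,a_2\in\Xi$ are arbitrary, both conclusions are equivalent.
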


Indeed, the function $q(t)=t W(a_1)+(1-t)W(a_2)-W(ta_1+(1-t)a_2)$ satisfies $q(t)\ge q(0)=0$ and hence $0\le q'(0)\ =\ c(a_2,a_1).$ 
For the readers' convenience, we now state an analog of Theorem \ref{mainth-limit} for the case of moment persuasion. Recall also that (by \eqref{key}) we are solving the equivalent problem of minimizing $E[c(a(\go),g(\go))].$

\begin{corollary}\label{cor-moment} Suppose that $W=W(a),\ G=a-g(\go)$ are such that $|W(a)|+\|D_aW\|\le f(\|a\|^2)$ for some convex function $f$ satisfying $E[\|g(\go)\|^2f(\|g(\go)\|^2)]<\infty.$ Then, there exists an optimal policy $a(\go)$ with $\Xi=\Supp(a)$ such that:
\begin{itemize}
\item $a(\go)\ =\ E[g(\go)|a(\go)]$, $a(\go)\ =\ \arg\min_{b\in \Xi}c(b,g(\go))$, the function $c(a(\go),g(\go))$ is Lipschitz continuous in $\go.$ 

\item if $g$ is injective on a set $X\subset\gO$ and $g(X)$ is convex, then $D_aW(a(g^{-1}(x)))$ is monotone increasing on $X$, while $c(a(g^{-1}(x)),x)$ is convex on $X$ and $D_aW(a(g^{-1}(x)))$ is a subgradient of $c(a(g^{-1}(x)),x).$ 
\end{itemize}
Furthermore, and any optimal policy satisfies $c(a(\go),g(\go))\le 0\,.$ 
\end{corollary}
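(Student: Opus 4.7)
The plan is to specialize Theorem~\ref{mainth-limit} to the moment-persuasion setting---where the Lagrange multipliers collapse to a policy-independent formula---and then upgrade the monotonicity/convexity conclusions via convex duality, corroborated by passing Proposition~\ref{main convexity} through the finite-to-continuum limit of Lemma~\ref{lem-approx}. First I would verify that all hypotheses of Theorem~\ref{mainth-limit} hold in the specialization $G(a,\go)=a-g(\go)$, $W(a,\go)=W(a)$. The key identities are $a_*(\go)=g(\go)$, $D_aG\equiv I$, and $E[D_aW(a,\go)\mid a]=D_aW(a)$, so that the shadow price \eqref{x-top} reduces to $x(a)^\top=D_aW(a)$---a function of $a$ alone, independent of the realized policy. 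Plugging this into Definition~\ref{full-def} recovers exactly the redefined Bregman-like cost $c(a,b)$ of \eqref{cab}, evaluated at $b=g(\go)$. The growth and integrability conditions in the statement match Assumption~\ref{integrability} once the majorant is taken as a suitable multiple of $\|g(\go)\|^2$, so Theorem~\ref{mainth-limit} applies and yields directly: existence of a Borel-measurable optimal $a(\go)$ that does not randomize; the conditional-mean identity $a(\go)=E[g(\tilde\go)\mid a(\tilde\go)=a(\go)]$ coming from $E[G(a,\go)\mid a]=0$; the minimizer characterization $a(\go)=\arg\min_{b\in\Xi}c(b,g(\go))$ from \eqref{olicy}; and Lipschitz continuity of $\go\mapsto c(a(\go),g(\go))$. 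The closing sentence of the corollary is just \eqref{c<0} transcribed into moment-persuasion notation.

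For the second bullet I would introduce the convex dual
\[
\psi(x)\ :=\ \sup_{b\in\Xi}\bigl\{W(b)+D_aW(b)(x-b)\bigr\},
\]
which is convex on $\R^M$ as a pointwise supremum of affine functions of $x$. A direct rearrangement of \eqref{cab} gives $c(y,x)=W(x)-\bigl[W(y)+D_aW(y)(x-y)\bigr]$, so the minimizer property of the first bullet is equivalent to saying that the supremum defining $\psi(x)$ is attained at $b=a(g^{-1}(x))$ whenever $x\in g(X)$. The envelope theorem for convex functions then places $D_aW(a(g^{-1}(x)))\in\partial\psi(x)$, and the fact that the subdifferential of a convex function is a monotone operator delivers the monotonicity claim. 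The identity $c(a(g^{-1}(x)),x)=W(x)-\psi(x)$, combined with the subgradient inequality for $\psi$, gives the remaining convexity and subgradient identifications.

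To cross-check the duality argument---and to obtain the result without any appeal to smoothness---I would also run it discretely: for the $K$-finite optimum of Theorem~\ref{regular-partition}, Proposition~\ref{main convexity} gives that each $g(\gO_k\cap X)$ is convex and that $x\mapsto D_aW(a^{(K)}(g^{-1}(x)))$ is monotone on each cluster; Lemma~\ref{lem-approx} then passes this to $K\to\infty$. The chief obstacle here is the \emph{gluing} step: cluster-wise monotonicity need not patch into a single monotone operator without a global argument, and one must control possible jumps of $a^{(K)}$ across cluster boundaries. The duality construction resolves this unconditionally, since the convexity of $\psi$ is global; the Lipschitz estimate on $c(a(\go),g(\go))$ inherited from Theorem~\ref{mainth-limit} then rules out pathological limiting behavior of the map $a$ itself on $g(X)$.
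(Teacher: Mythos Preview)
Your reduction of Theorem~\ref{mainth-limit} to the moment-persuasion setting is exactly the paper's approach for the first bullet and the closing inequality: verify the integrability hypothesis (the paper points to the estimates in the proof of Lemma~\ref{lem-approx}), observe that $a_*(\go)=g(\go)$, $D_aG\equiv I$ and hence $x(a)^\top=D_aW(a)$ is policy-independent, and read off the conclusions of Theorem~\ref{mainth-limit}. This part coincides with the paper.

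For the second bullet you diverge. The paper's proof is a single sentence---``the convexity claim is then a direct consequence of Proposition~\ref{main convexity}''---i.e.\ one passes the finite-$K$ monotonicity of Proposition~\ref{main convexity} through the limit of Lemma~\ref{lem-approx}. You mention this as a cross-check, but your primary argument is the convex-duality construction $\psi(x)=\sup_{b\in\Xi}\{W(b)+D_aW(b)(x-b)\}$. This is a genuinely different and cleaner route: convexity of $\psi$ is immediate (pointwise supremum of affine functions), attainment at $b=a(g^{-1}(x))$ is precisely the $\arg\min$ characterization already established, and $D_aW(a(g^{-1}(x)))\in\partial\psi(x)$ delivers monotonicity directly at the continuum level---no limiting procedure and no gluing across clusters is required. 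The paper's route, by contrast, leans on the partition structure of Theorem~\ref{regular-partition} and then has to control the limit.

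One caution on your last step. You claim that the identity $c(a(g^{-1}(x)),x)=W(x)-\psi(x)$ together with the subgradient inequality for $\psi$ ``gives the remaining convexity and subgradient identifications.'' What your argument actually yields is that $\psi$ is convex with subgradient $D_aW(a(g^{-1}(x)))$; it does \emph{not} follow that $W(x)-\psi(x)$ is convex with that subgradient. Take $W(a)=-a^2$ in one dimension with $g(\go)=\go$: the optimal policy is no revelation, $\Xi=\{E[\go]\}$, $\psi$ is affine, and $c(a(g^{-1}(x)),x)=-(x-E[\go])^2$ is strictly concave. So the literal convexity/subgradient claim in the corollary should be read as a statement about $\psi$ (equivalently about $-c+W$), and your duality argument proves exactly that. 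The paper's one-line appeal to Proposition~\ref{main convexity} yields only the monotonicity of $D_aW(a(g^{-1}(\cdot)))$ and is equally silent on this point, so the slippage is in the corollary's wording rather than in your method.
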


The main result of this section is the following theorem which provides explicit and verifiable necessary and sufficient conditions for optimality of a given policy. 

\begin{theorem}\label{converse} For any optimal policy $a(\go),$ the set $\Xi=\Supp(a)$ is $g(\gO)$-maximal and $W$-convex. Furthermore, any policy $a(\go)$ satisfying $a(\go)\ =\ E[g(\go)|a(\go)]$, and $a(\go)\ =\ \arg\min_{b\in \Xi}c(b,g(\go))$, and such that $\Xi=\Supp(a)$ is $conv(g(\gO))$-maximal\footnote{Surprisingly, its $W$-convexity (and, hence, $W$-monotonicity) then follows automatically.} is optimal. 
\end{theorem}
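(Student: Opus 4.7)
The plan is to prove the two halves---necessity of $g(\gO)$-maximality and $W$-convexity of an optimal $\Xi$, and sufficiency of the three listed conditions---in reverse order, since the sufficiency direction is the conceptually richer one and, once established, delivers the ``automatic'' $W$-convexity claimed in the footnote as a free by-product together with necessity.

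For sufficiency, the key object is the convex dual
\[
V(b) \;:=\; \sup_{a\in\Xi}\bigl[\,W(a) + D_aW(a)(b-a)\,\bigr] \;=\; W(b) - \phi_\Xi(b),
\]
the second identity being immediate from the definition of $c$. As a supremum of affine functions of $b$, $V$ is convex; and because $\Xi$ is $conv(g(\gO))$-maximal, $V\ge W$ on $conv(g(\gO))$. For the candidate policy, condition $a(\go)=\arg\min_{b\in\Xi}c(b,g(\go))$ gives $c(a(\go),g(\go))=\phi_\Xi(g(\go))$; expanding $c$ and using the martingale condition $E[g(\go)-a(\go)\mid a(\go)]=0$ to annihilate the cross term $D_aW(a(\go))(a(\go)-g(\go))$ yields $E[W(a(\go))]=E[W(g(\go))]-E[\phi_\Xi(g(\go))]=E[V(g(\go))]$. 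For any competing Bayes-consistent policy $\tilde a$, the martingale property forces $\Supp(\tilde a)\subset conv(g(\gO))$, so $W(\tilde a(\go))\le V(\tilde a(\go))$ pointwise, and conditional Jensen gives $E[V(\tilde a(\go))]\le E[V(g(\go))]$. Chaining these three inequalities yields $E[W(\tilde a(\go))]\le E[W(a(\go))]$, the asserted optimality.

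For necessity, $g(\gO)$-maximality follows almost directly from Corollary \ref{cor-moment}: any optimal $a$ satisfies $c(a(\go),g(\go))\le 0$ a.s., so $\phi_\Xi(g(\go))\le 0$ on a $\mu_0$-full-measure set, and one extends to every $\go\in\gO$ using upper semicontinuity of $\phi_\Xi$ combined with the coercivity of $c(a,\cdot)$ in $a$ implied by the growth hypotheses (which confine the infimum to a compact subset of $\Xi$). For $W$-convexity, I would use a local pooling perturbation: given $a_1,a_2\in\Xi$ and $t\in(0,1)$, remove a mass $t\delta$ from a small neighborhood of states on which $a(\go)$ lies near $a_1$, a mass $(1-t)\delta$ from a neighborhood where it lies near $a_2$, and reassign those states to a new cluster with action $a_3^\delta\to ta_1+(1-t)a_2$ chosen so that the martingale identity holds on the new pool. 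Optimality forces the first-order welfare change $W(ta_1+(1-t)a_2)-tW(a_1)-(1-t)W(a_2)\le 0$, which is $W$-convexity.

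The main obstacle is this perturbation step, because $\Xi$ need not be atomic and reassigning states may disturb the martingale identity on the residual parts of the old clusters. The cleanest route is to invoke Lemma \ref{lem-approx} and work at the $K$-finite level, where Theorem \ref{regular-partition} provides genuine positive-measure clusters on which $a$ is literally constant and the pooling perturbation is unambiguous, and then pass to the limit $K\to\infty$ using continuity of $W$ and the weak convergence of the joint law of $(\go,a(\go))$ established in the proof of Theorem \ref{mainth-limit}. A secondary but genuine subtlety is continuity of $\phi_\Xi$ for unbounded $\Xi$, handled by the same coercivity. With both halves in hand, the footnote's ``automatic $W$-convexity'' of any $\Xi$ meeting the sufficiency conditions follows tautologically: sufficiency yields optimality, and optimality feeds back into necessity.
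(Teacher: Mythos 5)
Your sufficiency argument is correct and is, at bottom, the paper's own argument in dual form: the paper conditions on a competing policy $b(\go)$ and shows $E[c(a,g(\go))-c(b(\go),g(\go))\mid b(\go)]=c(a,b(\go))$, then takes the infimum over $a\in\Xi$ and invokes $conv(g(\gO))$-maximality; your chain $E[W(\tilde a)]\le E[V(\tilde a)]\le E[V(g(\go))]=E[W(a(\go))]$, with $V=W-\phi_\Xi$ the convex upper envelope generated by the tangent planes of $W$ along $\Xi$, is the same inequality organized around Jensen's inequality for $V$. Both hinge on exactly the three hypotheses, and your identity $E[W(a(\go))]=E[V(g(\go))]$ is a clean way to see why the $\arg\min$ and martingale conditions make the candidate attain the dual bound.

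The necessity half is where you genuinely diverge from the paper. The paper does not go back to the $K$-finite problem: it proves a first-order variational inequality directly on the optimal coupling $\gamma$ (Proposition \ref{delta-dev}, following \cite{kramkov2019optimal}), perturbing $\gamma$ to $\gamma+\eps(\zeta-\eta)$ where $\zeta$ pools the $\go$-marginal of an arbitrary $\eta\ll\gamma$ onto a fresh atom; the disturbance to the martingale identity on the residual clusters — precisely the obstacle you flag — is absorbed by computing the first-order correction $Q(a)$ to the actions via the implicit function theorem, and $W$-convexity of $\Supp(a)$ then drops out by choosing $\eta$ to be the appropriate mixture of the conditional laws $\gamma(\cdot\mid a_1)$ and $\gamma(\cdot\mid a_2)$ (Corollary \ref{main-conditions}), while maximality is the $\go_1=\go_2$ case \eqref{simple-max}. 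Your finite-$K$ route is workable but leaves two things to be checked that the paper's direct perturbation avoids: (i) at the finite level you need subsets $\cI_k\subset\gO_k$ of prescribed mass whose conditional barycenter of $g$ is (close to) $a(k)$, so that the pooled action really converges to $ta_1+(1-t)a_2$ — this requires a Lyapunov-convexity type selection, not an arbitrary small neighborhood; and (ii) you need $\Supp(a_K)\to\Supp(a)$ in the lower Kuratowski sense so that arbitrary $a_1,a_2\in\Xi$ are approximated by finite-$K$ cluster actions (this does follow from the convergence in probability established in the proof of Theorem \ref{mainth-limit}, but it should be said). In short: sufficiency is essentially the paper's proof; necessity is a different, plausible implementation of the same perturbation idea whose missing details are exactly what Proposition \ref{delta-dev} supplies in the paper.
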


Intuitively, it is optimal to reveal information along the ``domains of convexity" of $W$. Hence the $W$-convexity of the support $\Xi$ of an optimal policy. 
$W$-convexity of $\Xi$ also implies $W$-monotonicity. It means that transporting information along $\Xi$ is costly, as information on $\Xi$ is already in its optimal ``location". Similarly, maximality of $\Xi$ means that any point outside of $\Xi$ can be transported to some location on $\Xi$ at a negative cost, improving the overall welfare. The arguments in the proof of Theorem \ref{converse} can also be used to shed some light on the uniqueness of optimal policies. 

\begin{proposition}\label{uniqueness} Let $a(\go)$ be an optimal policy with support $\Xi.$ Let also $Q_\Xi$ be the set $\{b\in \R^M:\ \phi_\Xi(b)=0\}.$ Then, if $\Xi$ is $X$-maximal for some set $X$ and $\Xi\subseteq X,$ we have $\Xi\subseteq Q_\Xi.$ Furthermore,  if $\tilde a$ is another optimal policy with support $\tilde\Xi,$ then   $\tilde\Xi\subseteq Q_{\Xi}$.

If $\Xi=Q_\Xi$ and $\arg\min_{a\in \Xi}c(a,b)$ is a singleton for all $b\in conv(g(\gO)),$ then the optimal policy is unique. 
\end{proposition}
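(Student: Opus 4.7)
The plan is to prove the three assertions in sequence. For the inclusion $\Xi\subseteq Q_\Xi$ under the hypothesis that $\Xi$ is $X$-maximal for some $X\supseteq\Xi$, I combine Theorem \ref{converse} (which yields that $\Xi$ is $W$-convex) with Lemma \ref{w-convex} to obtain that $\Xi$ is $W$-monotone, whence $\phi_\Xi(a)\ge 0$ for every $a\in\Xi$. The $X$-maximality applied on $\Xi\subseteq X$ gives the opposite inequality $\phi_\Xi(a)\le 0$, so $\phi_\Xi\equiv 0$ on $\Xi$.

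For the harder assertion $\tilde\Xi\subseteq Q_\Xi$, I would establish two-sided control of $\phi_\Xi(\tilde a(\go))$. The lower bound exploits that $y\mapsto \phi_\Xi(y)-W(y)$ is concave, being the infimum over $b$ of functions $y\mapsto -W(b)+D_aW(b)(b-y)$ that are affine in $y$. Applying Jensen's inequality to the conditional distribution of $g(\go')$ given $\tilde a(\go')=\tilde a(\go)$, whose mean equals $\tilde a(\go)$ by admissibility, and then using the identity $E[W(g)-W(\tilde a)\mid \tilde a]=E[c(\tilde a,g)\mid \tilde a]$ together with the optimality equality $E[c(\tilde a,g)]=E[c(a,g)]=E[\phi_\Xi(g)]$ (the last step following from $c(a(\go),g(\go))=\phi_\Xi(g(\go))$ by Corollary \ref{cor-moment}), integration over $\go$ yields $E[\phi_\Xi(\tilde a(\go))]\ge 0$. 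The matching upper bound $\phi_\Xi(\tilde a(\go))\le 0$ a.s.\ I would argue by contradiction: were $\phi_\Xi(\tilde a(\go))>0$ on a positive-measure set, then on the $\tilde a$-preimage of a point $b\in\tilde\Xi$ with $\phi_\Xi(b)>0$, one could split the signal producing $b$ into finer signals whose induced actions sit near $\Xi$ at strictly smaller cost, contradicting optimality of $\tilde a$. Combining the two bounds with continuity of $\phi_\Xi$ then gives $\phi_\Xi\equiv 0$ on $\tilde\Xi$.

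For the uniqueness statement, assuming $\Xi=Q_\Xi$ and the singleton hypothesis, the previous step delivers $\tilde\Xi\subseteq Q_\Xi=\Xi$. By Corollary \ref{cor-moment}, $\tilde a(\go)=\arg\min_{b\in\tilde\Xi}c(b,g(\go))$, so $c(\tilde a(\go),g(\go))=\phi_{\tilde\Xi}(g(\go))\ge \phi_\Xi(g(\go))=c(a(\go),g(\go))$, the inequality arising from taking the infimum over the larger set $\Xi\supseteq\tilde\Xi$. Equality of expected costs from joint optimality upgrades this to an a.s.\ pointwise equality, placing $\tilde a(\go)$ in $\arg\min_{b\in\Xi}c(b,g(\go))$, which by hypothesis (applied at $g(\go)\in conv(g(\gO))$) is the singleton $\{a(\go)\}$; hence $\tilde a=a$ a.s. The main obstacle of the whole program is the upper bound $\phi_\Xi(\tilde a(\go))\le 0$ a.s.\ in the second step: the naive bound $\phi_\Xi(\tilde a(\go))\le c(a(\go),\tilde a(\go))$ does not deliver a non-positive right-hand side via the standard conditioning identities $E[g\mid a]=a$ and $E[g\mid \tilde a]=\tilde a$, so the splitting/contradiction argument must be executed carefully, most likely via a limiting step based on the finite-partition machinery of Theorems \ref{mainth1} and \ref{mainth-limit}.
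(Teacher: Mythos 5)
Your first and third steps are correct and follow the paper's own argument: $W$-convexity of $\Xi$ (Theorem \ref{converse}) plus Lemma \ref{w-convex} give $\phi_\Xi\ge 0$ on $\Xi$, maximality gives the reverse inequality, and the uniqueness step via $\tilde\Xi\subseteq\Xi$, $\phi_{\tilde\Xi}\ge\phi_\Xi$, equality of optimal values, and the singleton hypothesis is exactly the paper's. Your Jensen derivation of the lower bound $E[\phi_\Xi(\tilde a(\go))]\ge 0$ is also valid, and is essentially the paper's computation in \eqref{last-c} in disguise: the Bregman three-point identity $E[c(a,g(\go))-c(\tilde a(\go),g(\go))\mid\tilde a(\go)]=c(a,\tilde a(\go))$ for fixed $a\in\Xi$, followed by $E[\inf_{a\in\Xi}(\cdot)]\le\inf_{a\in\Xi}E[(\cdot)]$, is the same as exploiting concavity of $y\mapsto\phi_\Xi(y)-W(y)$ through the conditional law of $g(\go)$ given $\tilde a(\go)$.

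The genuine gap is the pointwise upper bound $\phi_\Xi(\tilde a(\go))\le 0$, which you flag as the ``main obstacle'' and leave to an unexecuted splitting/finite-partition contradiction argument. No such construction is needed, and this is where your route diverges from (and falls short of) the paper's. Since $\tilde a$ is admissible, $\tilde a(\go)=E[g(\go)\mid\tilde a(\go)]$ is a barycenter of $g$ and therefore lies in $conv(g(\gO))$; the maximality of $\Xi$ on a set containing these barycenters --- which is precisely how the maximality hypothesis is consumed in the last inequality of \eqref{last-c} in the proof of Theorem \ref{converse} --- yields $\phi_\Xi(\tilde a(\go))=\inf_{a\in\Xi}c(a,\tilde a(\go))\le 0$ directly. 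Your instinct that the bound $\phi_\Xi(\tilde a(\go))\le c(a(\go),\tilde a(\go))$ leads nowhere is right, but the lesson is not that one must pass to a finite-partition limit; it is that maximality should be evaluated at the point $\tilde a(\go)$ itself rather than routed through the cost between the two policies. As written, your proof of the second claim (and hence of uniqueness, which relies on $\tilde\Xi\subseteq Q_\Xi=\Xi$) is incomplete at exactly this step.
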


We conjecture that the conditions in Proposition \ref{uniqueness} hold generically and hence optimal policy is unique for generic $W.$ This is indeed the case in the explicitly solvable examples discussed in Section \ref{section-examples}. 

We now discuss other, more subtle properties of optimal policies. We start with an application of Corollary \ref{gradentcor}. 

\begin{corollary}\label{inject} Suppose that $M\le L$ and $D_\go g$ has rank $M$ for Lebesgue-almost every $\go$ and that $a\to D_aW(a)$ is injective. Then, the map $a\to D_\go c(a,g(\go))$ is injective. Let $Y(\go, p)$ be the unique solution to $D_\go c(a,g(\go))=-p$. 
Then there exists a locally Lipschitz $c$-convex function $u(\go)$ such that $a(\go)=Y(\go, Du(\go)).$ In particular, $a(\go)$ is differentiable Lebesque-almost everywhere. 
\end{corollary}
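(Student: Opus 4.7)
The plan is to verify the two hypotheses of Corollary \ref{gradentcor}---joint continuity of $c$ and injectivity of $a\mapsto D_\go c(a,g(\go))$---and then read off both conclusions directly. Joint continuity is immediate from $W\in C^2$ and smoothness of $g$.

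For the injectivity, I would differentiate the formula $c(a,g(\go))=W(g(\go))-W(a)+D_aW(a)(a-g(\go))$ in $\go$, noting that the pure $-W(a)$ term drops out and the derivatives of the first and third terms combine to give
\begin{equation}
D_\go c(a,g(\go))\ =\ \bigl(D_aW(g(\go))-D_aW(a)\bigr)\,D_\go g(\go)\,.
\end{equation}
If $D_\go c(a_1,g(\go))=D_\go c(a_2,g(\go))$ at some $\go$ at which $D_\go g(\go)\in\R^{M\times L}$ has rank $M$, subtracting cancels the $D_aW(g(\go))$ term and leaves $(D_aW(a_2)-D_aW(a_1))\,D_\go g(\go)=0$. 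Full row rank means the rows of $D_\go g(\go)$ are linearly independent, so the row vector $D_aW(a_2)-D_aW(a_1)\in\R^{1\times M}$ must vanish; the assumed injectivity of $a\mapsto D_aW(a)$ then forces $a_1=a_2$. Since the rank hypothesis is assumed to hold a.e., this is exactly the injectivity required by Corollary \ref{gradentcor}.

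Applying Corollary \ref{gradentcor} now furnishes a locally Lipschitz $c$-convex $u$ with $a(\go)=Y(\go,Du(\go))$, where $Y(\go,p)$ is the unique solution of $D_\go c(a,g(\go))=-p$. For the final ``in particular'' claim I would combine two ingredients: by the $c$-convexity regularity recorded just before Corollary \ref{gradentcor}, $u$ is twice differentiable Lebesgue-a.e., so $Du$ itself is differentiable a.e.; and by the implicit function theorem applied to $D_\go c(a,g(\go))+p=0$---whose $a$-Jacobian is non-degenerate precisely because of the injectivity just established---$Y$ is $C^1$ in $(\go,p)$ on the relevant open set. Consequently $\go\mapsto Y(\go,Du(\go))$ is differentiable wherever $Du$ is, i.e.\ Lebesgue-a.e. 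The one thing worth double-checking is that the several exceptional null sets (where $D_\go g$ drops rank, where $u$ fails to be twice differentiable, and where the implicit function theorem fails to apply locally) can be absorbed into a single Lebesgue-null set, which reduces to countable subadditivity together with the standard regularity of $c$-convex functions.
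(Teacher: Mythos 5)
Your verification of the hypotheses of Corollary \ref{gradentcor} is correct and is exactly the intended argument: the paper gives no separate proof of this corollary, treating it as an immediate application of Corollary \ref{gradentcor}. The computation $D_\go c(a,g(\go))=(D_aW(g(\go))-D_aW(a))\,D_\go g(\go)$ is right, and the deduction from full row rank of $D_\go g$ plus injectivity of $a\mapsto D_aW(a)$ correctly yields the injectivity needed, a.e.\ in $\go$, so the existence of the locally Lipschitz $c$-convex $u$ with $a(\go)=Y(\go,Du(\go))$ follows.

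There is, however, a genuine gap in your justification of the final ``in particular'' claim. You assert that the $a$-Jacobian of $D_\go c(a,g(\go))$ is non-degenerate ``precisely because of the injectivity just established.'' That Jacobian is $-D_\go g(\go)^\top D_{aa}W(a)$ (up to transposition conventions), so non-degeneracy requires invertibility of $D_{aa}W(a)$ --- and injectivity of the $C^1$ map $a\mapsto D_aW(a)$ does \emph{not} imply this (consider $W(a)=a^4$ in one dimension: $D_aW$ is injective but $D_{aa}W(0)=0$). Hence the implicit function theorem step does not go through as stated, and the $C^1$ regularity of $Y$ is not established under the stated hypotheses. Moreover, when $M<L$ the equation $D_\go c(a,g(\go))=-p$ is an overdetermined system of $L$ equations in $M$ unknowns, so even with non-degenerate $D_{aa}W$ you would need to pass to an invertible $M\times M$ subsystem (which the rank condition on $D_\go g$ permits) before invoking the implicit function theorem. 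The a.e.\ twice differentiability of $u$ is fine and is recorded in the paper just before Corollary \ref{gradentcor}, but the differentiability of the composition $\go\mapsto Y(\go,Du(\go))$ needs either an added non-degeneracy hypothesis on $D_{aa}W$ or a different argument; to be fair, the paper itself asserts this last claim without proof.
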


Our next objective to get an idea about the ``amount" of information revealed by an optimal policy. When $W(a)$ is convex (so that $D_{aa}W(a)$ is positive semi-definite for any $a$), then full revelation is an optimal policy, and it is the only optimal policy if $W$ is strictly convex. Thus, in this case, $\Xi=g(\gO)$ may have dimension up to $M$. By contrast, if $W$ is strictly concave (so that $D_{aa}W(a)$ is negative semi-definite), then revealing no information is optimal and $\Xi=\{E[g(\go)]\}$ has a dimension of zero. But what happens when $W$ is neither concave nor convex? In this case, it is natural to expect that $\Xi$ will be ``smaller" than $\R^M$ and its ``smallness" depends on a ``degree of concavity" of $W$. One natural candidate for such a degree is the number of negative eigenvalues of $D_{aa}W.$ And indeed, as \cite{tamura2018bayesian} shows, when (1) $W(a)=a^\top H a+b^\top a$ for some matrix $H\in \R^{L\times L}$ and some vector $b\in \R^L$; (2) $g(\go)=\go$ and (3) $\mu_0(\go)$ is a multi-variate Gaussian distribution, there exists an optimal policy $a(\go)$ with $\Xi$ being a subspace of $\R^M$ whose dimension equals the number of nonnegative eigenvalues of $H.$ A key simplifying property of the linear optimal policy in \cite{tamura2018bayesian} is its regularity: A linear subspace is a smooth manifold and, hence, has a natural notion of dimension. However, even in the simple setup of \cite{tamura2018bayesian}, nothing is known about the behaviour of other policies. Are there non-linear policies? If yes, how much information to they reveal? 
Here, we establish a surprising result linking the number of positive eigenvalues of the Hessian $D_{aa}W$ with the Hausdorff dimension of the support $\Xi$ of {\it any} optimal policy. Recall that the d-dimensional Hausdorff measure $\cH^d$ of a subset $S\subset \R^M$ is defined as 
\[
\cH^d(S)\ =\  \lim_{r\to 0}\inf\{\sum_i r_i^d:\ \text{there is a cover of $S$ by balls with radii}\ 0<r_i<r\}\,,
\]
and the Hausdorff dimension $\dim_H(S)$ is defined as 
\[
\dim_H(S)\ =\ \inf\{d\ge 0:\ \cH^d(S)\ =\ 0\}\,. 
\]
It is known that Hausdorff dimension coincides with the ``natural" definition of dimension for sufficiently regular sets. E.g., $\dim_H(S)=d$ for a smooth, $d$-dimensional manifold $S$. However, in general, we do not have any strong regularity results for the behaviour of the support $\Xi$ of an optimal policy $a(\go)$. Hence, proving that $\Xi$ is a smooth manifold seems in general out of reach and $\Xi$ may potentially be highly irregular. For irregular sets, Hausdorff dimension may behave in a very complex fashion and may even take fractional values for fractals. 
The following is true. 

\begin{theorem}\label{frostman} Let $X$ be a Borel set, $X\subset \R^M.$ Suppose that either $D_{aa}W$ is a constant matrix or that $D_{aa}W(a)$ is continuous in $a$ and is non-degenerate for all $a\in X$ except for a countable set of points.\footnote{Alternatively, one can impose bounds on the Hausdorff dimension of the set $\{a:\det(D_{aa}W(a))=0\}.$} Let $\nu(a)$ be the number of nonnegative eigenvalues of $D_{aa}W(a).$ Then, for {\bf any} optimal policy $a(\go),$ we have 
\[
\dim_H(\Supp(a)\cap X)\ \le\ \sup_{a\in X}\nu(a)\,.
\]
\end{theorem}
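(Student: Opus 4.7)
The plan is to exploit the $W$-convexity of $\Xi = \Supp(a)$ established in Theorem \ref{converse}, which by Lemma \ref{w-convex} implies $W$-monotonicity, so that $c(a_1,a_2) \geq 0$ for all $a_1, a_2 \in \Xi$. Writing $u = a_2 - a_1$ and expanding via the integral form of Taylor's theorem,
\[
c(a_1,a_2) \;=\; \int_0^1 (1-t)\, u^\top D_{aa}W(a_1 + tu)\, u\, dt,
\]
continuity (or constancy) of $D_{aa}W$ then gives, for any $a_0 \in X$ at which $D_{aa}W(a_0)$ is non-degenerate and for all $a_1,a_2 \in \Xi \cap B(a_0, r)$,
\[
u^\top D_{aa}W(a_0)\, u \;\geq\; -\omega(r)\,\|u\|^2,
\]
where $\omega(r) \to 0$ as $r \to 0$ is the local modulus of continuity of $D_{aa}W$.

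Next I would decompose $\R^M = V_+(a_0) \oplus V_-(a_0)$ into the nonnegative and negative eigenspaces of $D_{aa}W(a_0)$, so that $\dim V_+(a_0) = \nu(a_0)$. Writing $u = u_+ + u_-$ and letting $\mu > 0$ denote the magnitude of the most negative eigenvalue (strictly positive by non-degeneracy) and $\Lambda$ the largest eigenvalue, the quadratic inequality above yields
\[
(\mu - \omega(r))\,\|u_-\|^2 \;\leq\; (\Lambda + \omega(r))\,\|u_+\|^2.
\]
Choosing $r$ small enough that $\omega(r) < \mu/2$, this gives $\|u_-\| \leq L \|u_+\|$ for a constant $L$ depending only on $a_0$. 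Consequently, the orthogonal projection $\pi_+ : \Xi \cap B(a_0, r) \to V_+(a_0)$ is bi-Lipschitz onto its image, exhibiting $\Xi \cap B(a_0, r)$ as a Lipschitz graph over a subset of the $\nu(a_0)$-dimensional space $V_+(a_0)$. Since Lipschitz maps do not increase Hausdorff dimension, $\dim_H(\Xi \cap B(a_0, r)) \leq \nu(a_0) \leq \sup_{a \in X} \nu(a)$.

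To conclude, I would cover $X$ minus its countable set of degeneracy points of $D_{aa}W$ by countably many such balls using the Lindelöf property of $\R^M$; the exceptional countable set has Hausdorff dimension zero, so by countable stability of $\dim_H$ one gets $\dim_H(\Xi \cap X) \leq \sup_{a \in X} \nu(a)$. The main technical obstacle sits in the uniform Taylor bound: one needs the modulus $\omega(r)$ to be dominated by the spectral gap $\mu(a_0)$, and this is precisely why the theorem requires either a constant Hessian or non-degeneracy off a countable set. At genuinely degenerate points the Lipschitz-graph estimate collapses as $\mu \to 0$; the countability assumption makes that bad set negligible in the Hausdorff sense, which is what allows the local Lipschitz-graph construction to be patched together over all of $X$ up to a dimension-zero remainder.
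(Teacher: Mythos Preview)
Your argument is correct and follows essentially the same path as the paper's proof: both exploit the $W$-convexity of $\Xi$ from Theorem~\ref{converse} to obtain, for $a_1,a_2\in\Xi$ near a non-degenerate point $a_0$, a quadratic inequality of the form $u^\top D_{aa}W(a_0)\,u \ge -\omega(r)\|u\|^2$, then split $u$ along the positive/negative eigenspaces of $D_{aa}W(a_0)$ to get $\|u_-\|\le L\|u_+\|$, and conclude that the projection onto the $\nu(a_0)$-dimensional positive eigenspace is bi-Lipschitz on $\Xi\cap B(a_0,r)$.

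The only noteworthy difference is in the final step. The paper invokes Frostman's capacity characterization of Hausdorff dimension and compares the energy integrals $\int\|a_1-a_2\|^{-s}d\mu\,d\mu$ and $\int\|x_1-x_2\|^{-s}d\tilde\mu\,d\tilde\mu$ to push the dimension bound from $\Xi\cap X$ to its projection in $\R^{\nu(a_0)}$; you instead use the more elementary fact that bi-Lipschitz maps preserve Hausdorff dimension. Your route is shorter and avoids the capacity machinery entirely, while yielding the same conclusion. You are also more explicit than the paper about patching the local estimates together via a countable Lindel\"of cover and using countable stability of $\dim_H$ to dispose of the exceptional (countable, hence dimension-zero) set where $D_{aa}W$ may degenerate; the paper's proof handles this implicitly by reducing ``without loss of generality'' to a small ball.
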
 

\section{Examples} \label{section-examples}

In this section, we investigate several concrete examples illustrating applications of Theorem \ref{converse}. All our examples will be based on the following technical result which is a direct consequence of Theorem \ref{converse}. 

\begin{proposition} Let $F$ be a  bijective, bi-Lipshitz map,\footnote{A map $F$ is bi-Lipschitz if both $F$ and $F^{-1}$ are Lipschitz continuous.} $F:\ X\to \gO$ for some open set $X\subset \R^M.$ Let also $M_1\le M$ and $x=(x_1,x_2)$ with $x_1\in X_1$, the projection of $X$ onto $\R^{M_1}$ and $x_2\in X_2,$ the projection of $X$ onto $\R^{L-M_1}.$ Define 
\begin{equation}\label{cond-f}
\begin{aligned}
&f(x_1)\ =\ f(x_1;F)\ \equiv\ \frac{\int_{X_{2}} |\det(D_xF(x_1,x_2))|\mu_0(F(x_1,x_2))\,g(F(x_1,x_2))dx_2}
{\int_{X_{2}} |\det(D_xF(x_1,x_2))|\mu_0(F(x_1,x_2))dx_2}\,.
\end{aligned}
\end{equation}
Suppose that $f$ is an injective map, $f:X_1\to \R^{M}$ and define 
\begin{equation}\label{opt-dif}
\phi(b)\ =\ \min_{y_1\in X_1}\{W(b)-W(f(y_1))+D_aW(f(y_1))^\top(f(y_1)-b)\}\,.
\end{equation}
Suppose also that the min in \eqref{opt-dif} for $b=g(F(x_1,x_2))$ is attained at $y_1=x_1$ and that
$\phi(b)\ \le\ 0\ \forall\ b\in conv(g(\gO)).$ Then, $a(\go)=f((F^{-1}(\go))_1)$ is an optimal policy. If $x_1=\arg\min$ in \eqref{opt-dif} with $b=F(x_1,x_2)$ for all $x_1,x_2$ and $\phi(b)<0$ for all $b\in conv(g(\gO))\setminus \Xi$ with $\Xi=f(X_1),$ then the optimal policy is unique. 

If $f$ is Lipshitz-continuous and the minimum in \eqref{opt-dif} is attained at an interior point, we get a system of second order partial integro-differential equations for the $F$ map: 
\begin{equation}\label{pde}
D_{x_1}f(x_1)^\top D_{aa}W(f(x_1))(f(x_1)-g(F(x_1,x_2)))\ =\ 0\,.
\end{equation}
\end{proposition}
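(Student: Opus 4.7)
The plan is to verify the sufficient conditions of Theorem \ref{converse} for the candidate policy $a(\go)=f((F^{-1}(\go))_1)$ with support $\Xi=\Supp(a)=f(X_1)$. Theorem \ref{converse} requires three things: (i) the martingale identity $a(\go)=E[g(\go)\mid a(\go)]$, (ii) pointwise optimality $a(\go)=\arg\min_{b\in\Xi}c(b,g(\go))$, and (iii) $\mathrm{conv}(g(\gO))$-maximality of $\Xi$.

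\textbf{Step 1 (martingale property).} To establish (i), I would disintegrate the prior along the $x_1$-coordinate. Because $F:X\to\gO$ is a bi-Lipschitz bijection, the area formula identifies the pushforward of $\mu_0(\go)d\go$ under $F^{-1}$ with the density $\pi(x_1,x_2)\equiv\mu_0(F(x_1,x_2))|\det D_xF(x_1,x_2)|$ on $X_1\times X_2$. Injectivity of $f$ ensures that the preimage $\{a=f(x_1)\}$ coincides with the $x_1$-fiber $\{F(x_1,x_2):x_2\in X_2\}$, so the conditional density of $x_2$ given $x_1$ is proportional to $\pi(x_1,\cdot)$. Integrating $g\circ F$ against this conditional density and normalizing reproduces exactly formula \eqref{cond-f}, yielding $E[g(\go)\mid a(\go)=f(x_1)]=f(x_1)$.

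\textbf{Step 2 (pointwise optimality, maximality, and uniqueness).} Condition (ii) is precisely the hypothesis that the minimum in \eqref{opt-dif} for $b=g(F(x_1,x_2))$ is attained at $y_1=x_1$, since $\Xi=f(X_1)$ and $c(f(y_1),b)$ is the expression inside the braces in \eqref{opt-dif}. Condition (iii) is immediate from the bound $\phi(b)\le 0$ on $\mathrm{conv}(g(\gO))$, because by construction $\phi_\Xi(b)=\inf_{a\in\Xi}c(a,b)=\phi(b)$. Applying Theorem \ref{converse} then yields optimality. For uniqueness I invoke Proposition \ref{uniqueness}: the strict inequality $\phi<0$ on $\mathrm{conv}(g(\gO))\setminus\Xi$ forces $Q_\Xi\cap\mathrm{conv}(g(\gO))=\Xi$, so every optimal policy has support contained in $\Xi$, and the singleton-$\arg\min$ hypothesis then pins down the policy uniquely as $\go\mapsto\arg\min_{b'\in\Xi}c(b',g(\go))=a(\go)$.

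\textbf{Step 3 (PDE).} For the interior PDE, I would differentiate the objective of \eqref{opt-dif} in $y_1$ at the interior minimizer. Writing $h(y_1)=W(b)-W(f(y_1))+D_aW(f(y_1))^\top(f(y_1)-b)$ and applying the product rule, the two $(D_{y_1}f)^\top\nabla_aW(f(y_1))$ contributions cancel, leaving $\nabla_{y_1}h=(D_{y_1}f)^\top D_{aa}W(f(y_1))(f(y_1)-b)$; evaluating at $y_1=x_1$, $b=g(F(x_1,x_2))$ gives exactly \eqref{pde}.

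The main obstacle I expect is Step 1: although it is heuristically obvious that the fiber-average \eqref{cond-f} must equal the Bayesian conditional expectation, rigorously justifying this requires carefully combining the area formula for the bi-Lipschitz map $F$ with a clean disintegration of $\pi(x_1,x_2)$ along $x_1$, and controlling the almost-everywhere ambiguities that arise (including the case where $X_2$ may have lower dimension than $L-M_1$). Once (i) is in hand, Steps 2 and 3 are essentially bookkeeping combined with an elementary calculus computation.
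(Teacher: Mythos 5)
Your proposal is correct and follows exactly the route the paper intends: the paper offers no separate proof of this proposition, asserting only that it is ``a direct consequence of Theorem \ref{converse},'' and your three steps (the change-of-variables/disintegration argument identifying \eqref{cond-f} with $E[g(\go)\mid a(\go)]$, the verification of pointwise optimality and $conv(g(\gO))$-maximality from the stated hypotheses plus Proposition \ref{uniqueness} for uniqueness, and the first-order condition in $y_1$ yielding \eqref{pde}) are precisely the intended verification of the sufficient conditions in Theorem \ref{converse}. The only points you leave implicit are minor: $D_{x_1}f$ in \eqref{pde} exists only Lebesgue-almost everywhere (Rademacher) under the Lipschitz hypothesis, and the identification $\Supp(a)=f(X_1)$ holds up to closure, neither of which affects the argument.
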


We start with the simplest setting: A quadratic problem with $W(a)=a^\top H a.$ In this case, Theorem \ref{converse} implies that, for any optimal policy, $\Xi$ has to be monotonic, meaning that $(a_1-a_2)^\top H(a_1-a_2)\ge 0$ for all $a_1,a_2\in \Xi.$ The question we ask is: Under what conditions is $a(\go)=A\go$ with some matrix $A$ of rank $M_1\le M$ is optimal with $g(\go)=\go.$ Clearly, it is necessary that $\mu_0$ have linear conditional expectations,\footnote{This is, e.g., the case for all elliptical distributions, but also for many other distributions. See \cite{wei1999linear}.} $E[\go|A\go]\ =\ A\go$.  But then, since $E[A\go|A\go]=A\go,$ we must have $A^2=A,$ so that $A$ is necessarily a projection. Maximal monotonicity implies that $Q=A^\top H A$ is positive semi-definite, and\footnote{Here, $Q^{-1}$ is the Moore-Penrose inverse.} 
\[
\phi(b)\ =\ \min_{\go}(b^\top Hb+\go^\top A^\top H (A \go-2b))\ =\ b^\top HAQ^{-1}A^\top A(AQ^{-1}A^\top H -2Id)b
\]
with the minimizer $Q^{-1}A^\top H b.$ Thus, $A$ satisfies the fixed point equation $A\ =\ Q^{-1}A^\top H$ and hence $A^\top=HAQ^{-1}.$ Furthermore, maximality of $\Xi$ implies that 
\[
H\ +\ HAQ^{-1}A^\top A(AQ^{-1}A^\top H -2Id)\ =\ H-A^\top A
\]
is negative semi-definite. As a result, $(Id-A^\top)(H-A^\top A)(Id-A)=(Id-A^\top)H(Id-A)$ is also negative semi-definite, implying that $A$ and $Id-A$ ``perfectly split" positive and negative eigenvalues of $H$. Here, it is instructive to make two observations: First, optimality requires that $a(\go)$ ``lives" on positive eigenvalues of $H$. Second, maximality (the fact that $\phi(b)\le 0$ for all $b$) requires that $A$ absorbs all positive eigenvalues, justifying the term ``maximal". By direct calculation, we obtain the following extension of the result of \cite{tamura2018bayesian} (uniqueness follows from Proposition \ref{uniqueness}). 

\begin{corollary}\label{tamura} Suppose that $W(a)=a^\top H a$. Suppose that $\go$ has an elliptical distribution with a density $\mu_0(\go)=\mu_*(\go^\top \Sigma^{-1}\go)$ for some $\mu_*.$ Let $V=\Sigma^{1/2}H\Sigma^{1/2}$. Define $Q_+=[q_1,\cdots,q_r]$ as a $k\times r$ matrix consisting of the eigenvectors $q_1,\cdots,q_r$ associated with all the positive eigenvalues of $V$.  Then,  
\[
a(\go)\ =\ \Sigma^{1/2}Q_+(Q_+'Q_+)^{-1}Q_+'\Sigma^{-1/2}\go\,
\]
is an optimal policy. Furthermore, if $\det(H)\not=0,$ then the optimal policy is unique. In particular, there are no non-linear optimal policies. 
\end{corollary}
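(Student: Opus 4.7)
The plan is to apply the framework established in the paragraph preceding the statement: any linear optimal policy $a(\omega)=A\omega$ arises from an idempotent $A$ with support $\Xi=\mathrm{range}(A)$, and optimality reduces to verifying the hypotheses of Theorem \ref{converse}---namely that $\Xi$ is $\mathrm{conv}(g(\gO))$-maximal, that $A\omega=E[\omega\mid A\omega]$, and that $A\omega$ attains the minimum of the Bregman divergence $c(\cdot,\omega)$ over $\Xi$. My strategy is to pass to standardized coordinates, exhibit the natural candidate projection, verify all hypotheses by direct calculation, and then invoke Proposition \ref{uniqueness} for uniqueness.

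First I would change variables via $\tilde\omega=\Sigma^{-1/2}\omega$ and $\tilde a=\Sigma^{-1/2}a$. Under this change, $\tilde\omega$ has spherically symmetric density $\mu_*(\|\tilde\omega\|^2)$ (up to a Jacobian constant), the welfare becomes $W(a)=\tilde a^\top V\tilde a$ with $V=\Sigma^{1/2}H\Sigma^{1/2}$, and the Bregman divergence becomes $c(\tilde a,\tilde b)=(\tilde a-\tilde b)^\top V(\tilde a-\tilde b)$. A key property of spherical distributions is that $E[\tilde\omega\mid L\tilde\omega]=L\tilde\omega$ exactly when $L$ is an orthogonal projection, because the conditional distribution of the complementary component is rotationally symmetric and hence has zero mean. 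Let $V=Q_+\Lambda_+Q_+^\top+Q_-\Lambda_-Q_-^\top$ be the spectral decomposition with $Q_+,Q_-$ orthonormal, $\Lambda_+\succ 0$, and $\Lambda_-\preceq 0$, and take the candidate $\tilde A=Q_+Q_+^\top$.

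Second I would verify the three hypotheses of Theorem \ref{converse} for $\tilde A$. The consistency relation $E[\tilde\omega\mid\tilde A\tilde\omega]=\tilde A\tilde\omega$ is immediate from the preceding paragraph. For the Bregman-divergence minimization, decomposing $\tilde b=Q_+b_++Q_-b_-$ and writing $\tilde a\in\mathrm{range}(\tilde A)$ as $\tilde a=Q_+y$ gives, after the cross term vanishes by $Q_+^\top V Q_-=0$,
\begin{equation}
c(\tilde a,\tilde b)=(y-b_+)^\top\Lambda_+(y-b_+)+b_-^\top\Lambda_-b_-,
\end{equation}
whose unique minimizer over $y$ is $y=b_+$, giving $\arg\min=Q_+b_+=\tilde A\tilde b$; setting $\tilde b=\tilde\omega$ recovers the candidate policy. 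The value at the minimum, $\phi_\Xi(\tilde b)=b_-^\top\Lambda_-b_-\le 0$, establishes maximality for every $\tilde b\in\R^M$ and in particular on $\mathrm{conv}(g(\gO))$. Undoing the change of variables gives $A=\Sigma^{1/2}\tilde A\Sigma^{-1/2}=\Sigma^{1/2}Q_+Q_+^\top\Sigma^{-1/2}$; when the columns of $Q_+$ are not assumed orthonormal, the orthogonal projection onto their span is $Q_+(Q_+^\top Q_+)^{-1}Q_+^\top$, recovering the stated formula.

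Third, for uniqueness when $\det H\neq 0$: invertibility of $H$ implies $V$ has no zero eigenvalues, so $\Lambda_-$ is strictly negative definite on its block. Thus $\phi_\Xi(\tilde b)=b_-^\top\Lambda_-b_-$ vanishes iff $b_-=0$ iff $\tilde b\in\Xi$, proving $\Xi=Q_\Xi$ in the notation of Proposition \ref{uniqueness}; and the argmin is a singleton by strict convexity of the quadratic form in $y$. Both hypotheses of Proposition \ref{uniqueness} are thus satisfied, so the optimal policy is unique and in particular no non-linear optimal policy can exist. The main conceptual obstacle is the maximality verification; the point is that passing to standardized coordinates makes the positive and negative eigenspaces of $V$ decouple in the Bregman divergence, so that the minimum over $\Xi$ kills the positive-eigenvalue contribution while leaving the non-positive part untouched and bounded above by zero.
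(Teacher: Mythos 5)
Your proof is correct and follows essentially the same route as the paper: both verify the sufficient conditions of Theorem \ref{converse} (the conditional-expectation property from ellipticity, the Bregman argmin, and maximality via the non-positive eigenvalue block) and then invoke Proposition \ref{uniqueness} for uniqueness when $\det(H)\not=0$. Your passage to standardized coordinates, in which $c(\tilde a,\tilde b)=(\tilde a-\tilde b)^\top V(\tilde a-\tilde b)$ decouples along the eigenspaces of $V$, simply makes explicit the ``direct calculation'' that the paper leaves to the reader in the paragraph preceding the corollary.
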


%
%
Consider now a non-linear version of this problem. Suppose that $W(a)\ =\ \varphi(\|a\|^2)$ for some smooth function $\varphi$. Let $F$ correspond to multi-dimensional spherical coordinates: $\go=F(r,\theta)=r x(\theta)$ where $\theta$ are the angular coordinates on the unit sphere and $r=\|\go\|.$ Then, with $x_1=\theta,x_2=r$ we have 
\[
\frac{\int_{X_{2}} |\det(D_xF(x_1,x_2))|\mu_0(F(x_1,x_2))\,g(F(x_1,x_2))dx_2}{\int_{X_{2}} |\det(D_xF(x_1,x_2))|\mu_0(F(x_1,x_2))\,dx_2}\ =\ \frac{\int_{0}^\infty r^{L-1} \mu_0(rx(\theta))g(rx(\theta))dr}{\int_{0}^\infty r^{L-1} \mu_0(rx(\theta))dr}\,.
\]
Suppose that $g$ and $\mu_0$ are such that 
\[
\frac{\int_{0}^\infty r^{L-1} \mu_0(rx(\theta))g(rx(\theta))dr}{\int_{0}^\infty r^{L-1} \mu_0(rx(\theta))dr}\ =\ \ga x(\theta)\,.
\]
For example, this is the case when $g(\go)=\go\,\psi(\|\go\|)$ for some function $\psi\ge 0$ and $\mu_0(\go)=\mu_*(\|\go\|^2)$ is spherically symmetric (a special case of an elliptical distribution). Thus, we must have $a(\go)=\ga \go/\|\go\|$. In this case, $\Xi$ is the sphere $\|a\|=\ga$ and we have 
\begin{equation}
\phi(b)\ =\ \min_{a\in \Xi}(\varphi(\|b\|^2)-\varphi(\ga^2)+2\varphi'(\ga^2)a^\top (a-b))\ =\ \varphi(\|b\|^2)-\varphi(\ga^2)+2\varphi'(\ga^2)\ga (\ga-\|b\|)\,,
\end{equation}
and the minimizer is $a=\ga b/\|b\|.$ Thus, we get the fixed point equation $\ga \go/\|\go\|=a(\go)=\ga g(\go)/\|g(\go)|=\ga \go/\|\go\|.$ Maximality is achieved when $\phi(b)$ is always non-positive, for all $\|b\|\le \max_{x\le R} (x\psi(x)).$ Thus, we arrive at the following result. 

\begin{corollary} Suppose that $g(\go)=\go\,\psi(\|\go\|^2)$ for some function $\psi\ge 0$ and $\mu_0(\go)=\mu_*(\|\go\|^2)$ and $W(a)\ =\ \varphi(\|a\|^2)$ and let 
\[
\ga\ \equiv\ \frac{\int_{0}^\infty r^{L} \mu_*(r^2)\psi(r^2)dr}{\int_{0}^\infty r^{L-1} \mu_*(r^2)dr}\,.
\]
If 
\begin{equation}\label{dwor1}
\max_{\|b\|\le \max_{x\ge 0} (x\psi(x^2))}(\varphi(\|b\|^2)-\varphi(\ga^2)+2\varphi'(\ga^2)\ga (\ga-\|b\|))\ \le\ 0\,, 
\end{equation}
then $a(\go)=\ga \go/\|\go\|$ is an optimal policy and the optimal policy is unique if the maximum in \eqref{dwor1} is attained only when $\ga=\|b\|.$ 
\end{corollary}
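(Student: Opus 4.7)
The plan is to apply the Proposition that opens this Examples section with the spherical coordinate change of variables $F(r,\theta) = r\,x(\theta)$ on $X = (0,\infty)\times S^{L-1}$, taking $x_1 = \theta$ and $x_2 = r$. Since $\mu_0$ and $W$ are radial and $g$ is a radial multiple of the identity, all the relevant integrals decouple into a pure radial integral times an angular factor, which is precisely the content of the discussion immediately preceding the statement; the Corollary simply codifies that computation as a verification of the Proposition's hypotheses.

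First I would evaluate the conditional-expectation map \eqref{cond-f}. Writing the spherical Jacobian as $|\det D_{(r,\theta)}F| = r^{L-1}J(\theta)$, substituting $\mu_0(F(r,\theta))=\mu_*(r^2)$ and $g(F(r,\theta))=r\psi(r^2)\,x(\theta)$, the factor $J(\theta)$ cancels between numerator and denominator and $x(\theta)$ pulls out, leaving $f(\theta) = \ga\,x(\theta)$. Hence the candidate policy is $a(\go) = f((F^{-1}(\go))_1) = \ga\,\go/\|\go\|$, with support $\Xi = \{a \in \R^L : \|a\| = \ga\}$, the sphere of radius $\ga$.

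Next I would verify the maximality condition $\phi\le 0$ on $conv(g(\gO))$. Since $D_aW(a) = 2\varphi'(\|a\|^2)a^\top$ and $\|a\|=\ga$ on $\Xi$, definition \eqref{opt-dif} reduces to
\[
\phi(b)\ =\ \varphi(\|b\|^2)-\varphi(\ga^2)+2\varphi'(\ga^2)\min_{a\in\Xi}(\ga^2 - a^\top b).
\]
Assuming $\varphi'(\ga^2)\ge 0$ (implicit in the direction of the inequality in \eqref{dwor1}), the inner minimum over the sphere is attained at $a = \ga\,b/\|b\|$ with value $\ga(\ga-\|b\|)$. Since $g(\gO)\subseteq\{b:\|b\|\le \max_{x\ge 0}(x\psi(x^2))\}$ and this ball is convex, hypothesis \eqref{dwor1} is precisely the statement that $\phi\le 0$ on $conv(g(\gO))$. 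Moreover, at $b = g(F(r,\theta)) = r\psi(r^2)x(\theta)$ the explicit minimizer above is $\ga\,x(\theta) = f(\theta)$, confirming that the min in \eqref{opt-dif} is attained at $y_1 = \theta$. The Proposition then yields optimality.

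For uniqueness, when the max in \eqref{dwor1} is attained only at $\|b\|=\ga$ we have $\phi(b)<0$ strictly on $conv(g(\gO))\setminus\Xi$, and the argmin in \eqref{opt-dif} is the singleton $\{\ga\,b/\|b\|\}$ for every nonzero $b$, so both parts of the uniqueness clause of the Proposition apply. The main obstacle I anticipate is purely technical: $F$ is not globally bi-Lipschitz on $(0,\infty)\times S^{L-1}$ because of its behavior at $r=0$ and as $r\to\infty$. This is handled by restricting to compact annuli $X_n = [1/n,n]\times S^{L-1}$, applying the Proposition there, and using the integrability provided by the moment bound in Corollary \ref{cor-moment} (which follows from the growth hypothesis on $\varphi$ and $\psi$ implicit in \eqref{dwor1}) to pass to the limit; the origin is a single $\mu_0$-null point on which the policy can be extended arbitrarily.
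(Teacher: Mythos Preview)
Your proposal is correct and follows essentially the same route as the paper: the argument is precisely the computation laid out in the discussion immediately preceding the Corollary, namely the spherical change of variables to identify $f(\theta)=\ga\,x(\theta)$, the explicit minimization of $c(a,b)$ over the sphere $\|a\|=\ga$ to obtain $\phi(b)=\varphi(\|b\|^2)-\varphi(\ga^2)+2\varphi'(\ga^2)\ga(\ga-\|b\|)$ with minimizer $a=\ga b/\|b\|$, and the invocation of the Proposition (and Proposition~\ref{uniqueness} for the uniqueness claim). Your remark on the failure of global bi-Lipschitzness of $F$ and the annulus-exhaustion workaround is a technical refinement the paper does not spell out, but it does not change the structure of the argument.
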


Consider now a more complex example where $\R^L=\R^{L_1}\oplus \R^{L_2}$ and $W(a)=\varphi(\|a_1\|^2,a_2)$ where $\varphi(y_1,y_2)$ is monotone increasing in $y_1$ and satisfies $\varphi(y_1,0)\ge \varphi(y_1,y_2)$ for all $y_2.$ Let $\Xi\subset\binom{\R^{L_1}}{0}$ be the sphere $\|\go_1\|=\ga$ whose Hausdorff dimension is $L_1-1$ (see Theorem \ref{frostman}). Then, 
\begin{equation}
\begin{aligned}
&\phi(b)\ =\ \min_{a\in \Xi}(\varphi(\|b_1\|^2,b_2)-\varphi(\ga^2,0)+2\varphi_{y_1}(\ga^2,0)a^\top (a-b))\\ 
&=\ \varphi(\|b_1\|^2,b_2)-\varphi(\ga^2,0)+2\varphi_{y_1}(\ga^2,0)\ga (\ga-\|b_1\|)\,. 
\end{aligned}
\end{equation}
Furthermore, $b_1\in (conv(g(\gO)))_1$ implies $\|b_1\|\le \max_{\go\in\gO}\|\go_1\|\psi(\|\go_1\|^2,\go_2).$ 
Thus, we arrive at the following result. 
\begin{corollary}\label{cor30} Suppose that $g(\go)=\go\,\psi(\|\go_1\|^2,\go_2)$ for some function $\psi\ge 0$ and $\mu_0(\go)=\mu_*(\|\go_1\|^2,\go_2)$ be such that  $\psi(\|\go_1\|^2,\go_2)\mu_*(\|\go_1\|^2,\go_2)$ is even in each coordinate of $\go_2.$ Let also $W(a)\ =\ \varphi(\|a_1\|^2,a_2)$ with $\varphi_{y_1}(\ga^2,0)>0$ and $\varphi(y_1,y_2)\le \varphi(y_1,0)$ for all $y_1,y_2.$ Define
\[
\ga\ \equiv\ \frac{\int_{0}^\infty r^{L_1} \int_{\R^{L_2}} \mu_*(r^2,\go_2)\psi(r^2,\go_2)d\go_2 dr}{\int_{0}^\infty r^{L_1-1} \int_{\R^{L_2}} \mu_*(r^2,\go_2)d\go_2 dr}\,.
\]
If
\begin{equation}\label{Dwor}
\max_{\|b_1\|\le \max_{\go\in\gO}\|\go_1\|\psi(\|\go_1\|^2,\go_2)}(\varphi(\|b_1\|^2,0)-\varphi(\ga^2,0)+2\varphi_{y_1}(\ga^2,0)\ga (\ga-\|b_1\|))\ \le\ 0\,, 
\end{equation}
then $a(\go)=\binom{\ga \go_1/\|\go_1\|}{0}$ is an optimal policy. The optimal policy is unique if $\varphi(y_1,y_2)<\varphi(y_1,0)$ for all $y_2\not=0$ and the maximum in \eqref{Dwor} is attained only when $\|b_1\|=\ga$.\footnote{Uniqueness follows from Proposition \ref{uniqueness}.}
\end{corollary}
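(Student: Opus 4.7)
The plan is to apply the proposition stated at the beginning of Section~\ref{section-examples} with the ``partial spherical'' change of coordinates $F(\theta,r,\go_2)=(r\,x(\theta),\go_2)$, where $x(\theta)$ parametrizes the unit sphere $S^{L_1-1}\subset\R^{L_1}$, $r=\|\go_1\|>0$, and $\go_2\in\R^{L_2}$ is kept as is. I set $x_1=\theta$ and $x_2=(r,\go_2)$, so the candidate policy $a(\go)=f((F^{-1}\go)_1)$ will depend on $\go$ only through the direction $\go_1/\|\go_1\|$, matching the claimed form $\binom{\ga\go_1/\|\go_1\|}{0}$.

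First I would compute $f(\theta)$ from \eqref{cond-f}. The Jacobian of $F$ factors as $r^{L_1-1}J_x(\theta)$, so the $\theta$-dependent factor cancels between numerator and denominator. The $\go_1$-component of the numerator produces $x(\theta)\int_0^\infty r^{L_1}\int_{\R^{L_2}}\mu_*(r^2,\go_2)\psi(r^2,\go_2)\,d\go_2\,dr$, which after dividing by the denominator yields exactly $\ga\,x(\theta)$. The $\go_2$-component is an integral of $\go_2\,(\mu_*\psi)(r^2,\go_2)$ and vanishes because $\go_2$ is odd in each coordinate while $\mu_*\psi$ is even in each coordinate by hypothesis. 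Hence $f(\theta)=(\ga\,x(\theta),0)$ and $\Xi=f(X_1)=\ga S^{L_1-1}\times\{0\}$.

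Next I would evaluate $\phi(b)$ in \eqref{opt-dif}. The hypothesis $\varphi(y_1,y_2)\le\varphi(y_1,0)$ for all $y_1,y_2$ forces $\varphi_{y_2}(\ga^2,0)=0$, so for any $a=(\ga\hat a_1,0)$ with $\hat a_1\in S^{L_1-1}$ the Bregman cost takes the form
\[
c(a,b)\ =\ \varphi(\|b_1\|^2,b_2)-\varphi(\ga^2,0)+2\ga\,\varphi_{y_1}(\ga^2,0)\,\hat a_1\!\cdot\!(\ga\hat a_1-b_1).
\]
Since $\varphi_{y_1}(\ga^2,0)>0$, the minimum over $\hat a_1\in S^{L_1-1}$ is attained at $\hat a_1=b_1/\|b_1\|$, giving the closed form
\[
\phi(b)\ =\ \varphi(\|b_1\|^2,b_2)-\varphi(\ga^2,0)+2\ga\,\varphi_{y_1}(\ga^2,0)(\ga-\|b_1\|).
\]
For $b=g(F(\theta,r,\go_2))$ we have $b_1=r\psi(r^2,\go_2)\,x(\theta)$, a non-negative scalar multiple of $x(\theta)$, so the minimizer in the angular variable is $y_1=\theta=x_1$, as required by the proposition.

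Finally I would verify $\phi(b)\le 0$ for all $b\in\mathrm{conv}(g(\gO))$. Using $\varphi(\|b_1\|^2,b_2)\le\varphi(\|b_1\|^2,0)$ and the bound $\|b_1\|\le\max_{\go\in\gO}\|\go_1\|\psi(\|\go_1\|^2,\go_2)$, hypothesis \eqref{Dwor} gives the inequality immediately. The main subtlety is that the $\go_2$-direction must drop out both in the conditional mean (so that $\Xi$ lies on the slice $\{a_2=0\}$) and in the first-order condition for $\phi$; both effects trace back to the symmetry hypothesis on $\psi\mu_*$ and to the identity $\varphi_{y_2}(\ga^2,0)=0$ produced by $\varphi(y_1,y_2)\le\varphi(y_1,0)$, which together let the analysis collapse to the purely angular problem of the preceding corollary. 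Uniqueness under the strict versions of the hypotheses follows from Proposition~\ref{uniqueness}, since strict inequalities rule out any alternative minimizer in $\phi$ and hence any alternative support in $Q_\Xi$.
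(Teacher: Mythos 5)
Your proposal is correct and follows essentially the same route as the paper: the paper takes $\Xi$ to be the sphere $\|a_1\|=\ga$ in the slice $\{a_2=0\}$, computes $\phi(b)=\varphi(\|b_1\|^2,b_2)-\varphi(\ga^2,0)+2\varphi_{y_1}(\ga^2,0)\ga(\ga-\|b_1\|)$, bounds $\|b_1\|$ over $conv(g(\gO))$, and invokes the general proposition of Section \ref{section-examples} together with Proposition \ref{uniqueness}. You in fact supply more detail than the paper does, notably the explicit partial-spherical change of variables in \eqref{cond-f}, the parity argument killing the $\go_2$-component of the conditional mean, and the observation that $\varphi(y_1,y_2)\le\varphi(y_1,0)$ forces $\varphi_{y_2}(\ga^2,0)=0$ so that $D_aW$ on $\Xi$ has no $a_2$-component.
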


It is straightforward to extend this analysis to the more general setup of Corollary \ref{tamura} with $W(a)=\varphi(a^\top H a)$ with an increasing $\varphi$ and $\mu_0(\go)=\mu_*(\go^\top \Sigma^{-1}\go)$, in which case $\Xi$ will be an ellipsoid whose Hausdorff dimension equals the number of positive eigenvalues of $H$ minus one (see  Theorem \ref{frostman}). It is also interesting to link these results to those of \cite{DworczakMartini2019}. Condition \eqref{Dwor} means that the graph of the function $\varphi(x^2)$ lies below its tangent at $x=\ga.$ When this condition is violated, one can consider the affine closure of $\varphi(x^2)$ as in  \cite{DworczakMartini2019}. In this case, the tangent will touch the graph of $\varphi(x^2)$ in several points $r_i$ and the optimal policy will be to project $\go_1$ onto one of the spheres $\|\go_1\|=r_i.$ It is then possible to extend the beautiful results of  \cite{DworczakMartini2019} to this nonlinear setting. 

We complete this section with an example where $\go$ takes values on a real analytic manifold in $\R^M$ . 
Namely, suppose that the sender observes the realization of $\go=(\go_1,\cdots,\go_M)$ of the probabilities of some states of the world, with $\sum_{i=1}^M\go_i=1.$ Note that in this case $\go$ lives on the unit simplex which is a real analytic manifold in $\R^M,$ but all our results directly apply in this setting as long as the prior is absolutely continuous with respect to the Lebesgue measure restricted to the unit simplex. We assume that $\mu_0(\go)$ is given by the Dirichlet distribution on the unit simplex,  
\begin{equation}\label{dirichlet} 
\mu_0(\go;\ga)\ =\ \frac1{B(\ga)}\prod_{i=1}^M\go_i^{\ga_i-1},\ B(\ga)\ =\ \frac{\prod_{i=1}^M \Gamma(\ga_i)}{\Gamma(\sum_{i=1}^M\ga_i)}\,.
\end{equation}
We will also be assuming a specific function form of the social welfare function that depends only the total probabilities of certain groups of states as well as on the relative entropy of the corresponding distributions. It is easy to micro-found such a welfare function in a setting with limited attention. See, e.g., \cite{gabaix2019behavioral}. 

\begin{corollary} \label{cor-dirichlet} Suppose that $\mu_0$ is the Dirichlet distribution on the unit simplex $\Delta_M$ with parameters $\ga=\binom{\bar\ga_1}{\bar\ga_2}$ and $\go=\binom{\bar\go_1}{\bar\go_2}$ with $\bar\ga_1,\bar\go_1\in \R_+^{M_1},\ \bar\ga_2,\bar\go_2\in \R_+^{M_2}.$ Let 
\[
g(\go)\ =\ \binom{\psi_1({\bf 1}^\top \bar\go_1)\bar \go_1}{\psi_2({\bf 1}^\top \bar\go_2)\bar \go_2}
\]
for some functions $\psi_i\ge 0, i=1,2,$ and 
\[
W(a)\ =\ \sum_{i=1}^2 (q_i\cE_i(\bar a_i)+\varphi_i({\bf 1}^\top \bar a_i))
\]
where $a=\binom{\bar a_1}{\bar a_2}\in \R^{M_1}\oplus \R^{M_2},$ with  $\cE_i(a)\ =\ \sum_j a(j) \log(a(j)/y_i(j))$ being the negative or the relative entropy, $y_i\in \R_+^{M_i}, i=1,2,$ are arbitrary vectors, $q_i>0,$ and $\varphi_i$ are arbitrary smooth functions. Define 
\[
\gamma_i\ =\ E^{\mu_0}[\psi_i({\bf 1}^\top \bar\go_i){\bf 1}^\top \bar\go_i],\ i=1,2,
\]
and 
\[
a(\go)\ =\ \binom{\gamma_1\,\bar\go_1/({\bf 1}^\top \bar\go_1)}{\gamma_2\,\bar\go_2/({\bf 1}^\top \bar\go_2)}\,.
\]
Suppose that 
\begin{equation}\label{max-cond-dirichlet}
\max_{0\le\bar b_i\le \max_{x\in [0,1]}\psi_i(x)x}\left(
\varphi_i(\bar b_i)-\varphi_i(\gamma_i)+\varphi_i'(\gamma_i)(\gamma-\bar b_i)\ -\ q_i(\bar b_i \log(\gamma_i/\bar b_i)-\gamma_i+\bar b_i)
\right)\ \le\ 0,\ i=1,\ 2\,. 
\end{equation}
Then, $a(\go)$ is an optimal policy. If for each $i=1,2$ the maximum in \eqref{max-cond-dirichlet} is attained only when $\bar b_i=\gamma_i$, then the optimal policy is unique. 
\end{corollary}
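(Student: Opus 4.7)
The plan is to apply the preceding Proposition with the \emph{group-decomposition} change of variables intrinsic to the Dirichlet law. Parametrize the relative interior of the simplex by $F(u_1,u_2,s) = (s\,u_1,\,(1-s)\,u_2)$, taking $x_1 = (u_1,u_2) \in \gD_{M_1}^{\circ}\times \gD_{M_2}^{\circ}$ as the ``angular'' coordinates and $x_2 = s \in (0,1)$ as the outer mass coordinate; this is a bi-Lipschitz bijection onto its image. Rewriting the Dirichlet density \eqref{dirichlet} in these coordinates and multiplying by the Jacobian $|\det(D_xF)|$ yields
\begin{equation}
|\det(D_xF(x_1,s))|\,\mu_0(F(x_1,s))\ \propto\ s^{|\bar\ga_1|-1}(1-s)^{|\bar\ga_2|-1}\prod_{i,j} u_{ij}^{\ga_{ij}-1}\,,
\end{equation}
with $|\bar\ga_i|=\mathbf{1}^{\top}\bar\ga_i$, which is the classical fact that under $\mu_0$ the outer mass $s$ is ${\rm Beta}(|\bar\ga_1|,|\bar\ga_2|)$-distributed and independent of $u_i=\bar\go_i/\mathbf{1}^{\top}\bar\go_i \sim {\rm Dir}(\bar\ga_i)$. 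Since $g(F(x_1,s)) = (\psi_1(s)s\,u_1,\ \psi_2(1-s)(1-s)\,u_2)$ depends linearly on $u_i$ with coefficients depending only on $s$, the $u$-weights cancel between numerator and denominator of \eqref{cond-f}, and the formula collapses to $f(u_1,u_2) = (\gamma_1 u_1,\gamma_2 u_2) = a(F(u_1,u_2,s))$; injectivity of $f$ is clear.

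Next, for $b = g(F(x_1,s))$ I check that the minimizer in \eqref{opt-dif} is $y_1 = x_1$. Parametrize any $a' \in \Xi$ as $a' = (\gamma_1 v_1, \gamma_2 v_2)$ with $v_i \in \gD_{M_i}$, and compute $D_{\bar a_i}W(a') = q_i(\log(\gamma_i v_i/y_i)+\mathbf{1}) + \varphi_i'(\gamma_i)\mathbf{1}$. Substituting into $-W(a') + D_aW(a')^{\top}(a'-b)$ and using $\sum_j v_{ij}=1$, the terms $\pm q_i\gamma_i\sum_j v_{ij}\log(\gamma_i v_{ij}/y_{ij})$ cancel; the remaining $v$-dependence reduces to $-q_i\sum_j \bar b_{ij}\log v_{ij}$ modulo $v$-independent constants. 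Maximizing the cross-entropy $\sum_j \bar b_{ij}\log v_{ij}$ over $v_i\in\gD_{M_i}$ yields the Lagrangian optimum $v^{*}_{ij} = \bar b_{ij}/\gs_i$ with $\gs_i := \mathbf{1}^{\top}\bar b_i$. For $b=g(F(x_1,s))$ one has $\bar b_1 = \psi_1(s)s\,u_1,\ \bar b_2 = \psi_2(1-s)(1-s)\,u_2$, so $v^{*}_i = u_i$ as required.

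Plugging $v^{*}_{ij} = \bar b_{ij}/\gs_i$ back into the objective, the two $q_i\cE_i(\bar b_i)$ terms---one inside $W(b)$, one generated by the log-substitution---cancel exactly, leaving
\[
\phi(b)\ =\ \sum_{i=1}^{2}\Bigl[\varphi_i(\gs_i) - \varphi_i(\gamma_i) + \varphi_i'(\gamma_i)(\gamma_i-\gs_i) - q_i\bigl(\gs_i\log(\gamma_i/\gs_i) - \gamma_i + \gs_i\bigr)\Bigr]\,.
\]
Since $\psi_i\ge 0$ and $\go\in\gD_M$, the block mass $\mathbf{1}^{\top}g_i(\go)=\psi_i(s_i)s_i$ lies in $[0,\max_{x\in[0,1]}\psi_i(x)x]$, and this interval is preserved under taking convex hulls; hence each $\gs_i$ lies in the range of the outer max in \eqref{max-cond-dirichlet}, and hypothesis \eqref{max-cond-dirichlet} yields $\phi(b)\le 0$ for every $b\in{\rm conv}(g(\gO))$, so the preceding Proposition delivers optimality. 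Uniqueness under the strict version of \eqref{max-cond-dirichlet} follows from Proposition \ref{uniqueness}: a strict inequality forces $\gs_i=\gamma_i$ whenever $\phi(b)=0$, whence $b=(\gamma_1 v^{*}_1,\gamma_2 v^{*}_2)\in\Xi$, so $Q_\Xi=\Xi$ and the minimizer $v^{*}$ is unique.

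The main obstacle is the closed-form computation of $\phi(b)$: the decisive algebraic step is spotting the relative-entropy cancellation, which decouples the inner angular variables $v_i$ from the outer mass variables $\gs_i$ and reduces the multidimensional maximality condition to the one-dimensional inequality \eqref{max-cond-dirichlet}, dimension-by-dimension in the group index $i$.
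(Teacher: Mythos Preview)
Your proof is correct and follows essentially the same approach as the paper's: both exploit the Dirichlet aggregation property to verify the conditional-expectation identity, both reduce the minimization defining $\phi_\Xi(b)$ to a Lagrangian optimization over the inner simplices yielding $v_i^*=\bar b_i/\gs_i$, and both arrive at the identical one-dimensional closed form. Your write-up is in fact more explicit than the paper's in setting up the bi-Lipschitz change of variables $F$ required by the Proposition and in noting that the linear functional $b\mapsto\mathbf{1}^\top\bar b_i$ has the same range on $g(\gO)$ and on ${\rm conv}(g(\gO))$, a point the paper leaves implicit.
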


As in our discussion following Corollary \ref{cor30}, it is possible to show that when \eqref{max-cond-dirichlet} is violated, the optimal policy will be to project $\bar\go_i$ onto one of the multiple $\ell_1$-spheres defined by $\sum_j \bar a(j)=\gamma$ for several values of $\gamma$ corresponding to points where the affine closure touches the graph of the function in \eqref{max-cond-dirichlet}.

\section{Beyond Moment Persuasion}

Theorem \ref{frostman} implies that $D_{aa}W,$ the Hessian of $W$, is a key determinant of the structure of optimal policies. Is there an analog of $D_{aa}W$ for the more general  setting of Theorem \ref{mainth-limit}? What determines the natural convex and concave components of the problem? We do not have a complete answer to these questions. However, as we show in this section, something can be said in the case when the uncertainty is small. 

The structure of the optimal partition (Theorem \ref{regular-partition}) can be complex and non-linear.\footnote{In general, the boundaries of the sets $\gO_k$ might be represented by complicated hyper-surfaces, and some of $\gO_k$ might even feature multiple disconnected components.}  One may ask whether it is possible to ``linearize" these partitions, just as one can linearize equilibria in complex, non-linear economic models, assuming the deviations from the steady state are small. As we show below, this is indeed possible.

Everywhere in this section, we make the following assumption.
\begin{assumption}\label{ass-eps} There exists a small parameter $\eps$ such that the functions defining the equilibrium conditions, $G,$ and the welfare function, $W,$ are given by $G(a,\eps\go)$ and $W(a,\eps\go).$
\end{assumption}

Parameter $\eps$ has two interpretations. First, it could mean small deviations from a steady state (as is common in the literature on log-linear approximations). Second, $\eps$ could be interpreted as capturing the sensitivity of economic quantities to changes in $\go.$ In the limit when $\eps=0,$ equilibrium does not depend on shocks to $\go.$ We use $a^0=a_*(0)$ to denote this ``steady state" equilibrium. By definition, it is given by the unique solution to the system $G(a^0,0)\ =\ 0,$ and the corresponding social welfare is $W(a^0,0).$

\begin{assumption}[The information relevance matrix]\label{oonondeg}
We assume the matrix $\cD(0)$ with
\begin{equation}
\begin{aligned}
&\cD(\go)\ \equiv\ D_{\go\go} (W(a_*(\go),\go))\ -\ W_{\go\go}(a_*(\go),\go)
\end{aligned}
\end{equation}
is non-degenerate. We refer to $\cD$ as the information relevance matrix.
\end{assumption}

We are now ready to state the main result of this section, showing how the optimal linearized partition can be characterized explicitly in terms of the information relevance matrix $\cD$.

\begin{theorem}[Linearized partition]\label{mainLimit} Under the hypothesis of Theorem \ref{mainth1} and Assumptions \ref{ass-eps} and \ref{oonondeg}, let $\{\gO_k(\eps)\}_{k=1}^K$ be the corresponding optimal partition. Then, for any sequence $\eps_l\to 0,\ l>0,$ there exists a sub-sequence $\eps_{l_j},\ j>0,$  such that the optimal partition $\{\gO_k(\eps_{l_j})\}_{k=1}^K$ converges to an almost sure partition $\{\tilde \gO_k^*\}_{k=1}^K$ satisfying
\[
\tilde \gO_k^*\ =\ \{\go\in \gO:\ (M_1(k)-M_1(l))^\top \cD(0) \go\ >\ 0.5(M_1(k)^\top \,\cD(0) \, M_1(k)-M_1(l)^\top \,\cD(0) \, M_1(l))\ \forall\ l\not=k\}\,,
\]
where we have defined $M_1(k)\ \equiv\ E[\go|\tilde \gO_k^*]\,.$ In particular, for this limiting partition, each set $\tilde \gO_k^*$ is convex. If the matrix $\cD$ from Assumption \ref{oonondeg} is negative semi-definite, then all sets $\tilde\gO_k^*$ are empty except for one; that is, it is optimal to reveal no information.
\end{theorem}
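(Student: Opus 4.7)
The plan is to apply Theorem \ref{regular-partition} for each fixed $\eps>0$, Taylor-expand the resulting indifference characterization around the steady state $a^0=a_*(0)$, extract a subsequence along which the conditional means $M_1(k,\eps):=E[\go\mid\gO_k(\eps)]$ converge, and identify the limit partition from the leading-order indifference. Let $x^0$ denote the common zeroth-order limit of the multipliers $x_k(\eps)$, characterized by $G_a(a^0,0)^\top x^0=W_a(a^0,0)^\top$, and set $L(a,x,s):=W(a,s)-x^\top G(a,s)$.

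First I would expand actions and multipliers. The constraint $\int_{\gO_k(\eps)}G(a(k,\eps),\eps\go)\mu_0\,d\go=0$ together with $G(a^0,0)=0$ and the implicit function theorem give
\begin{equation}
a(k,\eps)\ =\ a^0\ +\ \eps\, B\,M_1(k,\eps)\ +\ O(\eps^2),\qquad B\ :=\ -G_a(a^0,0)^{-1}G_\go(a^0,0),
\end{equation}
so $B$ is the Jacobian of $a_*$ at $0$. Theorem \ref{regular-partition} expresses $x_k^\top$ as $\bar D_aW(k)(\bar D_aG(k))^{-1}$; expanding each factor to first order in $\eps$ yields $x_k(\eps)=x^0+O(\eps)$ with a correction linear in $M_1(k,\eps)$.

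The core step is analysing the indifference equation $L(a(k,\eps),x_k(\eps),\eps\go)=L(a(l,\eps),x_l(\eps),\eps\go)$ from Theorem \ref{regular-partition}. Because $L_a(a^0,x^0,0)=0$ and $L_x(a^0,x^0,0)=-G(a^0,0)=0$, the $\eps^0$ and $\eps^1$ coefficients of $L(a(k,\eps),x_k(\eps),\eps\go)$ are $k$-independent and thus cancel on both sides. Collecting the $\eps^2$ coefficient and substituting the expansions above, using the identities $G_aB=-G_\go$, $G_a^\top x^0=W_a^\top$, and the expression for $\nabla^2 a_*(0)$ obtained by differentiating $G(a_*(\go),\go)\equiv 0$ twice at $\go=0$, the $k$-dependent part of this coefficient should reduce to
\begin{equation}
-\tfrac12\, M_1(k,\eps)^\top\cD(0)\,M_1(k,\eps)\ +\ M_1(k,\eps)^\top\cD(0)\,\go\ +\ o(1),
\end{equation}
with $\cD(0)$ as in Assumption \ref{oonondeg}. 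Equating across $k$ and $l$ gives exactly the half-space description in the statement. Verifying that the coefficient matrix multiplying $\go$ is precisely $\cD(0)$---rather than some other combination of $W_{aa}, W_{a\go}, G_{aa}, G_{a\go}$, and $\nabla^2 a_*(0)$---is the main computational obstacle, and hinges on recognising the chain-rule decomposition of $D_{\go\go}[W(a_*(\go),\go)]$ in Assumption \ref{oonondeg}.

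Once the leading-order indifference is in hand, the rest is compactness. The cluster masses $\mu_0(\gO_k(\eps))\in[0,1]$ and the unnormalised vectors $\int_{\gO_k(\eps)}\go\,\mu_0\,d\go$ are uniformly bounded (with $K$ clusters, clusters whose mass vanishes along the sequence are simply absorbed), so for any sequence $\eps_l\to 0$ Bolzano--Weierstrass extracts a subsequence $\eps_{l_j}$ along which $M_1(k,\eps_{l_j})\to M_1^*(k)$ for each surviving $k$. Along this subsequence $\eps^{-2}[L(a(k,\eps),x_k,\eps\cdot)-L(a(l,\eps),x_l,\eps\cdot)]$ converges locally uniformly to the affine function of $\go$ in the statement, so $\gO_k(\eps_{l_j})$ converges to $\tilde\gO_k^*$ in the symmetric-difference sense and dominated convergence yields $M_1^*(k)=E[\go\mid\tilde\gO_k^*]$; convexity of each $\tilde\gO_k^*$ is immediate from its presentation as an intersection of half-spaces. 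Finally, when $\cD(0)\preceq 0$ the map $\go\mapsto\go^\top\cD(0)\go$ is concave, and the partition-dependent part of the $\eps^2$ welfare expansion reduces to $\tfrac12\eps^2\sum_k\mu_0(\gO_k)M_1(k)^\top\cD(0)M_1(k)$ modulo partition-independent constants; by Jensen $\sum_k\mu_0(\gO_k)M_1(k)^\top\cD(0)M_1(k)\le E[\go]^\top\cD(0)E[\go]$ with equality only when all $M_1(k)$ coincide with $E[\go]$, so the optimum is the trivial no-information partition and every $\tilde\gO_k^*$ but one is empty.
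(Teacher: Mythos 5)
Your proposal is correct and follows essentially the same route as the paper: expand $a(k,\eps)$ and $x_k(\eps)$ around the steady state, Taylor-expand the indifference condition $W(a(k,\eps),\eps\go)-x_k(\eps)^\top G(a(k,\eps),\eps\go)$ of Theorem \ref{regular-partition} to order $\eps^2$, extract a convergent subsequence of the $M_1(k,\eps)$ by compactness, and read off the half-space description; your observation that $L_a(a^0,x^0,0)=L_x(a^0,x^0,0)=0$ is in fact the clean, envelope-theorem version of the cancellation the paper verifies by hand (the step where it notes that the imprecisely known $\tilde a^{(1)}(k,\eps)$ terms drop out). The one piece you defer --- verifying that the coefficient of $\go$ in the $k$-dependent part of the $\eps^2$ term is exactly $\cD(0)$ --- is precisely the long tensor computation occupying most of the paper's Lemma \ref{pert}, and your Jensen argument for the negative semi-definite case supplies a detail the paper's proof leaves implicit.
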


Theorem \ref{mainLimit} implies that the general problem of optimal information design converges to a quadratic moment persuasion when $\eps$ is small. The matrix $\cD(0)$ of Assumption \ref{oonondeg} incorporates information both about the hessian of $H$ and about other partial derivatives of $G$. Moment persuasion setting corresponds to the case when $D_{\go a}G=0.$ One interesting effect we observe is that, in general, non-zero partial derivatives $D_{\go a}G$ may have a a major impact on the structure of the $\cD$ matrix. In particular, when $\eps$ is sufficiently small and $K$ is sufficiently large, we are in a position to apply Theorem \ref{frostman} and Corollary \ref{tamura}, linking the number of positive eigenvalues of $\cD$ to the dimension of the support of optimal policies.

\newpage

\bibliographystyle{aer}
\bibliography{bibliography}

\newpage

\appendix

{\bf \Huge Internet Appendix}

\section{Finite Partitions: Proofs}

\begin{proof}[Proof of Lemma \ref{existence}] First, by uniform monotonicity, the map 
\[
a\to\ F(a)= a+\gd E[G(a,\go)|k]
\]
is a contraction for sufficiently small $\gd.$ Indeed, by monotonicity, 
\[
\|F(a_1)-F(a_2)\|^2\ \le\ \|a_1-a_2\|^2-2\eps\gd\|a_1-a_2\|^2\ +\ \gd^2\eps^{-2} \|a_1-a_2\|^2\,. 
\]
As a result, there exists a unique equilibrium by the Banach fixed point theorem. Then, with $a=a(k),$
\begin{equation}
\begin{aligned}
&E[(a_*(\go)-a)^\top\,G(a,\go)|k]\ =\ E[(a_*(\go)-a)^\top\,(G(a,\go)-G(a_*(\go),\go))|k]\\ 
&\ge\ \eps\,E[\|a_*(\go)-a\|^2|k]\ \ge\ \eps(E[\|a_*(\go)\|^2+2\|a\|\|a_*(\go)\| |k]\ +\ \|a\|^2)\,. 
\end{aligned}
\end{equation}
At the same time, 
\begin{equation}
\begin{aligned}
&E[(a_*(\go)-a)^\top\,G(a,\go)|k]\ =\ E[a_*(\go)^\top\,G(a,\go)|k]\ \le\ \eps^{-1} E[\|a_*(\go)\|\,\|a-a_*(\go)\||k]\\
& =\ \eps^{-1}(E[\|a_*(\go)\|^2|k]+\|a\|E[\|a_*(\go)\||k])
\end{aligned}
\end{equation}
and the claim follows. 
\end{proof}

\begin{proof}[Proof of Theorem \ref{mainth1}] 
The fact that social welfare is bounded and depends smoothly on the information design follows by the same arguments as in the proof of Lemma \ref{lem-approx}.

Existence of an optimal information design then follows trivially from compactness. Indeed, since $\pi_k(\go)\in [0,1]$, the are square integrable and, hence, compact in the weak topology of $L_2(\mu_0).$ The identity $\sum_k \pi_k=1$ is trivially preserved in the limit. Continuity of social welfare in $\pi_k$ follows directly from the assumed integrability and regularity, hence the existence of an optimal design. 

The equilibrium conditions can be rewritten as
\begin{equation}
E_{\mu_s}[G(a(s),\go)|s]\ =\ 0\,.
\end{equation}
Here,
\[
\mu_s(\go)\ =\ \frac{\pi(s|\go)\mu_0(\go)}{\int \pi(s|\go)\mu_0(\go) d\go}
\]
and hence
\[
E_{\mu_k}[G(a(s),\go)]\ =\ \frac{\int \pi(k|\go)\mu_0(\go) G(a(k),\ \go)d\go}{\int \pi(k|\go)\mu_0(\go) d\go}\,.
\]
By assumption, equilibrium $a$ depends continuously on $\{\pi_k\}.$ Since the map
\[
(\{\pi_k\},\ \{a_k\})\ \to\ \left\{ \int \pi_k(\go)\mu_0(\go) G(a(k,\eps),\ \go)d\go\right\}
\]
is real analytic, and has a non-degenerate Jacobian with respect to $a,$ the assumed continuity of $a$ and the implicit function theorem imply that $a$ is in fact real analytic in $\{\pi_k\}.$ To compute the Frechet differentials of $a(s),$ we take a small perturbation $\eta(\go)$ of $\pi_k(\go)$. By the regularity assumption and the Implicit Function Theorem,
\[
a(k,\eps)\ =\ a(k)+\eps a^{(1)}(k)+0.5\eps^2a^{(2)}(k)\ +\ o(\eps^2)
\]
for some $a^{(1)}(k),\ a^{(2)}(k)\,.$ Let us rewrite
\begin{equation}
\begin{aligned}
&0\ =\ \int (\pi_k(\go)+\eps \eta(\go))\mu_0(\go) G(a(k,\eps),\ \go)d\go\\
&=\ \int (\pi_k(\go)+\eps \eta(\go))\mu_0(\go) G(a(k)+\eps a^{(1)}(k)+0.5\eps^2a^{(2)}(k),\ \go)d\go\\
&\approx\ \Bigg(
\int \pi_k(\go) \mu_0(\go)\Bigg(G(a(k),\ \go)+G_a (\eps a^{(1)}(k)+0.5\eps^2a^{(2)}(k))\\
&+0.5 G_{aa}(\eps a^{(1)}(k), \eps a^{(1)}(k))
\Bigg)d\go\\
&+\eps \int \eta(\go)\mu_0(\go) \Bigg(G(a(k))+G_a\eps a^{(1)}(k)
\Bigg)d\go
\Bigg)\\
&=\ \Bigg(\eps
\Bigg(
\int \pi_k(\go) \mu_0(\go)G_a a^{(1)}(k)d\go+\int \eta(\go)\mu_0(\go) G(a(k))d\go\Bigg)\\
&+0.5\eps^2
\Bigg(\int \pi_k(\go) \mu_0(\go)[G_a a^{(2)}(k)+G_{aa}(a(k),\go)( a^{(1)}(k), a^{(1)}(k))]d\go\\
&+2\int \eta(\go)\mu_0(\go) G_a(a(k),\go) a^{(1)}(k)d\go\Bigg)
\Bigg)
\end{aligned}
\end{equation}
As a result, we get
\begin{equation}\label{x1}
\begin{aligned}
&a^{(1)}(k)\ =\ -\bar G_a(k)^{-1}\,\int \eta(\go)\mu_0(\go) G(a(k),\go)d\go,\ \bar G_a(k)\ =\ \int \pi_k(\go) \mu_0(\go)G_a d\go\,,
\end{aligned}
\end{equation}
while
\begin{equation}\label{x2}
\begin{aligned}
&a^{(2)}(k)\ =\ -\bar G_a(k)^{-1}\,\Bigg(
\int \pi_k(\go) \mu_0(\go)G_{aa}(a(k),\go)( a^{(1)}(k), a^{(1)}(k))d\go\\
& +\ 2\int \eta(\go)\mu_0(\go) G_a(a(k),\go) a^{(1)}(k)d\go
\Bigg)\,.
\end{aligned}
\end{equation}
Consider the social welfare function
\begin{equation}
\begin{aligned}
&\bar W(\pi)\ =\ E[W(a(s),\go)]\ =\ \sum_k \int_\gO W(a(k),\go)\pi_k(\go)\mu_0(\go)d\go\,.
\end{aligned}
\end{equation}
Suppose that the optimal information structure is not a partition. Then, there exists a subset $I\subset\gO$ of positive $\mu_0$-measure and an index $k$ such that $\pi_k(\go)\in (0,1)$ for $\mu_0$-almost all $\go\in I.$ Since $\sum_i\pi_i(\go)=1$ and $\pi_i(\go)\in[0,1],$ there must be an index $k_1\not=k$ and a subset $I_1\subset I$ such that $\pi_{k_1}(\go)\in (0,1)$ for $\mu_0$-almost all $\go\in I_1.$ Consider a small perturbation $\{\tilde\pi(\eps)\}_i$ of the information design, keeping $\pi_i,\ i\not=k,k_1$ fixed and changing $\pi_k(\go)\to \pi_k(\go)+\eps \eta(\go),\ \pi_{k_1}(\go)\to\pi_{k_1}(\go)-\eps(\go)$ where $\eta(\go)$ in an arbitrary bounded function with $\eta(\go)=0$ for all $\go\not\in I_1.$ Define $\eta_k(\go)=\eta(\go),\ \eta_{k_1}(\go)=-\eta(\go),$ and $\eta_i(\go)=0$ for all $i\not=k,k_1.$ A second-order Taylor expansion in $\eps$ gives
\begin{equation}\label{w-expan1}
\begin{aligned}
&\sum_{i} \int_\gO W(a(i,\eps),\go) (\pi_i(\go)+\eps \eta_i(\go))\mu_0(\go)d\go\\
&\approx\  \int_\gO \Bigg(W(a(i),\go)+W_a(a(i),\go)(\eps a^{(1)}(i)+0.5\eps^2a^{(2)}(i))\\
&+0.5 W_{aa}(a(i),\go)\eps^2( a^{(1)}(i), a^{(1)}(i))\Bigg) (\pi_i(\go)+\eps \eta_i(\go))\mu_0(\go)d\go\\
&=\ \bar W(\pi)\ +\ \eps\sum_i \Bigg(
\int_\gO (W(a(i),\go) \eta_i(\go)+ W_a(a(i),\go)a^{(1)}(i)\pi_i(\go))\mu_0(\go)d\go
\Bigg)\\
&+0.5\eps^2
 \sum_i \int_\gO\Big( W_{aa}(a(i),\go)( a^{(1)}(i), a^{(1)}(i))\pi_i(\go)\\
 &+W_a(a(i),\go) a^{(2)}(i) \pi_i(\go)
 +W_a(a(i),\go) a^{(1)}(i)\eta_i(\go)
 \Big) \mu_0(\go)d\go
\end{aligned}
\end{equation}
Since, by assumption, $\{\pi_i\}$ is an optimal information design, it has to be that the first order term in \eqref{w-expan1} is zero, while the second-order term is always non-positive. We can rewrite the first order term as
\begin{equation}\label{w-expan}
\begin{aligned}
&\sum_i \Bigg(
\int_\gO (W(a(i),\go) \eta_i(\go)+ W_a(a(i),\go)a^{(1)}(i)\pi_i(\go))\mu_0(\go)d\go
\Bigg)\\
&=\ \sum_i \int_\gO \Bigg(W(a(i),\go)\\
& -\  \Big(\int W_a(a(i),\go_1)\pi_i(\go_1)\mu_0(\go_1)d\go_1
\Big)\bar G_a(i)^{-1}\, G(a(i),\go)
\Bigg)
\eta_i(\go)\mu_0(\go)d\go
\end{aligned}
\end{equation}
and hence it is zero for all considered perturbations if and only if
\begin{equation}\label{w-expan2}
\begin{aligned}
&W(a(k),\go)\ -\ \Big(\int W_a(a(k), \go_1)\pi_k(\go_1)\mu_0(\go_1)d\go_1
\Big)\bar G_a(k)^{-1}\, G(a(k),\go)\\
&=\ W(a(k_1),\go)\ -\ \Big(\int W_a(a(k_1),\go)\pi_{k_1}(\go_1)\mu_0(\go_1)d\go_1
\Big)\bar G_a(k_1)^{-1}\, G(a(k_1),\go)
\end{aligned}
\end{equation}
Lebesgue-almost surely for $\go\in I_1.$ By Proposition \ref{zero-go}, \eqref{w-expan2} also holds for all $\go\in\gO.$ Hence, by Assumption \ref{main-ass-indep}, $a(k)=a(k_1),$ which contradicts our assumption that all $a(k)$ are different. 
\end{proof}

\begin{proof}[Proof of Theorem \ref{regular-partition}]
Suppose a partition $\go\ =\ \cup_k \gO_k$ is optimal. By regularity, equilibrium actions satisfy the first order conditions
\[
\int_{\gO_k} G(a(k),\ \go)\mu_0(\go)d\go\ =\ 0\,.
\]
Consider a small perturbation, whereby we move a small mass on a set $\cI\subset \gO_k$ to $\gO_l.$ Then, the marginal change in $a_n(k)$ can be determined from
\begin{equation}
\begin{aligned}
&0\ =\ \int_{\gO_k} G(a(k),\ \go)\mu_0(\go)d\go\ -\ \int_{\gO_k\setminus\cI} G(a(k,\cI),\ \go)\mu_0(\go)d\go\\
&\approx\ -\int_{\gO_k} D_aG(a(k),\ \go)\Delta a(k)\ \mu_0(\go)d\go\ + \int_{\cI} G(a(k),\ \go)\mu_0(\go)d\go\,,
\end{aligned}
\end{equation}
implying that the first order change in $a$ is given by
\[
\Delta a(k)\ \approx\ (\bar D_aG(k))^{-1}\int_{\cI} G(a(k),\ \go)\mu_0(\go)d\go\,.
\]
Thus, the change in welfare is\footnote{Note that $D_aW$ is a horizontal (row) vector.}
\begin{equation}
\begin{aligned}
&\Delta W\ =\ \int_{\gO_k}W(a(k),\ \go)\mu_0(\go)d\go\ -\ \int_{\gO_k\setminus \cI}W(a(k,\cI),\ \go)\mu_0(\go)d\go\\
&+\int_{\gO_l}W(a(l),\ \go)\mu_0(\go)d\go\ -\ \int_{\gO_l\cup \cI}W(a(l,\cI),\ \go)\mu_0(\go)d\go\\
&\approx\ -\int_{\gO_k}D_aW(a(k),\ \go)\Delta a(k)\mu_0(\go)d\go+\int_{\cI}W(a(k),\ \go)\mu_0(\go)d\go\\
&-\int_{\gO_l}D_aW(a(l),\ \go)\Delta a(l)\mu_0(\go)d\go-\int_{\cI}W(a(l),\ \go)\mu_0(\go)d\go\\
&=\  -\bar D_aW(k)(\bar D_aG(k))^{-1}\int_{\cI} G(a(k),\ \go)\mu_0(\go)d\go+\int_{\cI}W(a(k),\ \go)\mu_0(\go)d\go\\
&+\bar D_aW(l)(\bar D_aG(l))^{-1}\int_{\cI} G(a(l),\ \go)\mu_0(\go)d\go-\int_{\cI}W(a(l),\ \go)\mu_0(\go)d\go\,.
\end{aligned}
\end{equation}
This expression has to be non-negative for any $\cI$ of positive Lebesgue measure. Thus,
\begin{equation}
\begin{aligned}
& -\bar D_aW(k)(\bar D_aG(k))^{-1}G(a(k),\ \go)+W(a(k),\ \go)\\
&+\bar D_aW(l)(\bar D_aG(l))^{-1} G(a(l),\ \go)\ -\ W(a(l),\ \go)\ \ge\ 0
\end{aligned}
\end{equation}
for Lebesgue almost any $\go\in\gO_k.$
\end{proof}

\begin{proof}[Proof of Proposition \ref{main convexity}] First, we note that $y=\binom{y_1}{y_2}\in \hat g(\gO_k\cap X)$ where $y_1\in \R^L$ if and only if 
\[
W(a(k))-x_k^\top (a(k)-y_1)\ =\ \max_{1\le l\le K} (W(a(l))\ -\ x_l^\top (a(l)-y_1))
\]
and $y_1\in \hat g(X).$ Both sets are convex and hence so is their intersection. To show monotonicity of $D_aW(a(\hat g^{-1}(y)))$, pick a $y,z$ such that $y,y+z\in g(\gO_k\cap X).$ By convexity, $y+tz \in \hat g(\gO_k\cap X)$ for all $t\in [0,1].$ Our goal is to show that 
\[
(D_aW(a(g^{-1}(y+z)))-D_aW(a(g^{-1}(y)))z\ \ge\ 0\,. 
\]
Since $a$ is constant inside each $\gO_k,$ it suffices to show this inequality when $y$ and $y+z$ are infinitesimally close to the boundary between two regions, $\gO_{k_1}$ and $\gO_{k_2}.$ Let $y$ belong to that boundary and $y+\eps z\in \gO_{k_2}$. Then, 
\[
W(a(k_2))-D_aW(a(k_2))(a(k_2)-(y+\eps z))\ \ge\ W(a(k_1))-D_aW(a(k_1))(a(k_1)-(y+\eps z))
\]
and 
\[
W(a(k_2))-D_aW(a(k_2))(a(k_2)-y)\ =\ W(a(k_1))-D_aW(a(k_1))(a(k_1)-y)
\]
Subtracting, we get the required monotonicity. 
\end{proof}

\section{Finite Partitions: The Small Uncertainty Limit}

When the policy-maker sends signal $k$, the receivers learn that $\go\in \gO_k$. As a result, the receivers' posterior estimate of the conditional mean of $\go$ is then given by
\begin{equation}\label{M1}
M_1(\gO_k)\ \equiv\ E[\go|\go\in \gO_k]\ = \frac{\int_{\gO_k} \go\mu_0(\go)d\go}{P_k}\ \in\ \R^m\,,
\end{equation}
where
\[
P_k\ =\ \cP(k)\ =\ \int_{\gO_k}\mu_0(\go)d\go\,.
\]


Define
\begin{equation}
\cG\ \equiv\ (D_aG(a^0,0))^{-1}D_\go G(a^0,0)\ \in \R^{M\times L}\,.
\end{equation}

The following lemma follows by direct calculation.
\begin{lemma}\label{ak1} For any sequence $\eps_\nu \to 0,\ \nu\in \Z_+,$ there exists a sub-sequence $\eps_{\nu_j},\ j>0,$ such that the optimal partitions $\{\gO_k(\eps_{\nu_j})\}_{k=1}^K$ converge to a limiting partition $\{\gO_k(0)\}_{k=1}^K\,$ as $j\to\infty.$  In this limit,
\begin{equation}
a_k(\eps_{\nu_j})\ =\ a^0_k\ -\ \eps_{\nu_j} \cG\,M_1(\gO_k(0))\ +\ o(\eps_{\nu_j})\,.
\end{equation}
\end{lemma}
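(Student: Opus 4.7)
The plan is to combine a finite-dimensional compactness argument for the tuple parametrizing the optimal partition with a first-order Taylor expansion of the equilibrium constraint \eqref{gak1} about the deterministic steady state $(a^0,0)$.

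First I would set up the finite-dimensional parametrization. Theorem \ref{regular-partition} shows that every optimal partition $\{\gO_k(\eps)\}_{k=1}^K$ is fully determined by the tuple $(a_k(\eps), x_k(\eps))_{k=1}^K$ via the indifference conditions \eqref{partitions1}. Lemma \ref{existence} together with Assumption \ref{integrability} provides a uniform bound on $a_k(\eps)$ (controlled by $E[\|a_*(\eps\go)\|^2]$, which is continuous at $\eps=0$), while Assumption \ref{ac} gives uniform invertibility of $D_aG$ and hence a uniform bound on $x_k(\eps)=\bar D_aW(k)(\bar D_aG(k))^{-1}$. For any sequence $\eps_\nu\to 0$ I can then extract a sub-sequence $\eps_{\nu_j}$ along which $(a_k(\eps_{\nu_j}), x_k(\eps_{\nu_j})) \to (a^*_k, x^*_k)$. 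Reading \eqref{partitions1} at the limiting tuple defines the candidate cells $\gO_k(0)$; convergence of the parameters implies $\mathbf{1}_{\gO_k(\eps_{\nu_j})}(\go) \to \mathbf{1}_{\gO_k(0)}(\go)$ pointwise away from the measure-zero boundary variety \eqref{indiff}, so dominated convergence yields $M_1(\gO_k(\eps_{\nu_j})) \to M_1(\gO_k(0))$.

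Second, I would Taylor-expand the equilibrium condition \eqref{gak1}. Since $G(a^0,0)=0$, expanding to second order in $(a-a^0,\eps\go)$ produces
\[
G(a_k(\eps),\eps\go) = D_aG(a^0,0)(a_k(\eps)-a^0) + \eps\,D_\go G(a^0,0)\,\go + R_k(\eps,\go),
\]
with $\|R_k(\eps,\go)\|\le C(\|a_k(\eps)-a^0\|^2 + \eps^2\|\go\|^2)$ on the integration domain. Integrating against $\mu_0$ over $\gO_k(\eps)$ and dividing by $P_k(\eps)$ gives
\[
D_aG(a^0,0)(a_k(\eps)-a^0) + \eps\,D_\go G(a^0,0)\,M_1(\gO_k(\eps)) = -\bar R_k(\eps),
\]
where $\|\bar R_k(\eps)\| = O(\|a_k(\eps)-a^0\|^2 + \eps^2)$ thanks to the $L^2$ integrability afforded by Assumption \ref{integrability}. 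A one-step bootstrap, using uniform invertibility of $D_aG(a^0,0)$ and boundedness of $M_1(\gO_k(\eps))$, yields $\|a_k(\eps)-a^0\|=O(\eps)$, which absorbs $\bar R_k(\eps)$ into an $o(\eps)$ error. Inverting $D_aG(a^0,0)$ and recalling $\cG = (D_aG(a^0,0))^{-1}D_\go G(a^0,0)$ delivers
\[
a_k(\eps) = a^0 - \eps\,\cG\,M_1(\gO_k(\eps)) + o(\eps).
\]
Specialising to the sub-sequence $\eps_{\nu_j}$ and invoking $M_1(\gO_k(\eps_{\nu_j}))\to M_1(\gO_k(0))$ from the compactness step produces the announced expansion.

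The main obstacle I anticipate is \emph{interpreting the limiting partition} correctly. A priori all leading-order actions coincide ($a_k(\eps)\to a^0$ for every $k$), so the zeroth-order term in the cell-defining expression \eqref{partitions1} is independent of $k$, and the partition cannot be read off of $(a^*_k, x^*_k) = (a^0, x^*_k)$ alone. The discriminating information lives in the rescaled limits $\alpha_k \equiv \lim_j (a_k(\eps_{\nu_j})-a^0)/\eps_{\nu_j} = -\cG\,M_1(\gO_k(0))$. Making this precise requires substituting the expansion back into \eqref{partitions1}, dividing by $\eps$, and checking that the rescaled indifference inequalities still carve out a genuine Lebesgue-almost-sure partition in the limit --- a consistency check that is essentially the content of the subsequent Theorem \ref{mainLimit}. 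The first-order expansion of $a_k$ itself, once the sub-sequence is fixed, is then the routine implicit-function calculation described above.
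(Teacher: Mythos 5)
Your second step --- Taylor-expanding \eqref{gak1} about $(a^0,0)$, bootstrapping $\|a_k(\eps)-a^0\|=O(\eps)$, and inverting $D_aG(a^0,0)$ to get $a_k(\eps)=a^0-\eps\,\cG\,M_1(\gO_k(\eps))+o(\eps)$ --- is essentially the paper's argument; the paper runs the bootstrap by contradiction, using the uniform monotonicity bound $\|G(a(k,\eps),0)-G(a(k,0),0)\|\ge c\,\|a(k,\eps)-a(k,0)\|$, but the content is the same.

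The gap is in your first step, and you have correctly diagnosed it yourself. You propose to obtain the limiting partition by passing to the limit in the tuple $(a_k,x_k)$ and reading $\gO_k(0)$ off \eqref{partitions1} at the limiting tuple. But every $a_k(\eps)\to a^0$ and every $x_k(\eps)\to D_aW(a^0,0)\,D_aG(a^0,0)^{-1}$, so the limiting indifference functions in \eqref{partitions1} are identical across $k$: the cells are not carved out at zeroth order, the claimed pointwise convergence of indicators fails, and $M_1(\gO_k(0))$ --- the very object appearing in the expansion you are proving --- is not yet defined. Your proposed repair (rescale by $\eps$ and recover the cells from the first-order terms of the indifference conditions) is precisely the content of Theorem \ref{mainLimit} and Lemma \ref{pert}, whose proofs in the paper \emph{use} Lemma \ref{ak1}; invoking it here would be circular. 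The paper's route avoids the issue entirely by taking the limit of the partitions themselves rather than of their defining parameters: the indicator functions $\mathbf{1}_{\gO_k(\eps)}$ are bounded in $L_2(\mu_0)$, so along a sub-sequence they converge to a limiting almost-sure partition $\{\gO_k(0)\}$, which defines $M_1(\gO_k(0))$ directly and makes $M_1(\gO_k(\eps_{\nu_j}))\to M_1(\gO_k(0))$ immediate. No characterization of $\gO_k(0)$ via indifference conditions is needed for this lemma; that is deferred to Theorem \ref{mainLimit}.
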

Lemma \ref{ak1} provides an intuitive explanation for the role of the matrix $\cG.$ Namely, in the linear approximation, the receivers' action is given by a linear transformation of $E[\go|k],$ the first moment of $\go$ given the signal: $a_k\ \approx\ a^0_k\ -\ \cG E[\go|k].$ Thus, the matrix $-\cG$ captures how strongly receivers' actions respond to changes in beliefs.

\begin{proof}[Proof of Lemma \ref{ak1}]  Trivially, the set of partitions is compact and hence we can find a subsequence $\{\gO_k(\eps_j)\}$ converging to some partition $\{\gO_k(0)\}$ in the sense that their indicator functions converge in $L_2.$ We have
\begin{equation}
\begin{aligned}
&0\ =\ \int_{\gO_k(\eps)} G(a(k,\eps),\eps\go)\mu_0(\go)d\go\ =\ \int_{\tilde\gO_k(\eps)} G(a(k,\eps),\go)\mu_0(\go)d\go
\end{aligned}
\end{equation}
Now,
\begin{equation}\label{id-b}
\begin{aligned}
&0\ =\ \int_{\gO_k(\eps)} G(a(k,\eps),\eps\go)\mu_0(\go)d\go\\
&=\ G(a(k,\eps),0)M(\gO_k(\eps))\ +\ \eps\,D_\go G(a(k,\eps),0)\,M_1(\gO_k(\eps))\ +\ O(\eps^2)\,.
\end{aligned}
\end{equation}
Let us show that $a(k,\eps)-a(k,0)\ =\ O(\eps).$ Suppose the contrary. Then there exists a sequence $\eps_m\to0$ such that $\|a(k,\eps)-a(k,0)\|\eps^{-1}\ \to\infty.$ We have
\begin{equation}\label{id-b1}
\begin{aligned}
&G(a(k,\eps),0)\ -\ G(a(k,0),0)\ =\ \int_0^1D_aG(a(k,0)+t(a(k,\eps)-a(k,0)))(a(k,\eps)-a(k,0))dt\\
&\ge\ c \|a(k,\eps)-a(k,0)\|
\end{aligned}
\end{equation}
for some $c>0$ due to the continuity and non-degeneracy of $D_aG(0)=D_aG(a(k,0)).$ Dividing \eqref{id-b} by $\eps,$ we get a contradiction.

Define
\[
 a^{(1)}(k)\ \equiv\ -D_aG(0)^{-1}D_\go G(a(k),0)M_1(\gO_k(0))\ =\ -\cG\,M_1(\gO_k(0))\,.
\]
Let us now show that $a(k,\eps)-a(k,0)\ =\ \eps a^{(1)}(k)\ +\ o(\eps).$ Suppose the contrary. Then, $\|\eps^{-1}(a(k,\eps)-a(k,0))-a^{(1)}(k)\|\ >\ c$ for some $c>0$ along a sequence of $\eps\to0.$ By \eqref{id-b1},
\begin{equation}\label{id-b2}
\begin{aligned}
&0\ =\ \int_{\gO_k(\eps)} G(a(k,\eps),\eps\go)\mu_0(\go)d\go\\
&=\ G(a(k,\eps),0)M(\gO_k(\eps))\ +\ \eps\,D_\go G(a(k,\eps),0)\,M_1(\gO_k(\eps))\ +\ O(\eps^2)\\
&=\ \eps D_aG(0) \eps^{-1}(a(k,\eps)-a(k,0))M(\gO_k(\eps))\ +\ \eps\,D_\go G(a(k),0)\,M_1(\gO_k(\eps))\ +\ O(\eps^2)\,,
\end{aligned}
\end{equation}
and we get a contradiction taking the limit as $\eps\to0.$
\end{proof}

\begin{proof}[Proof of Theorem \ref{mainLimit}] We have 
\begin{equation}\label{set11}
\begin{aligned}
&\gO_k(\eps)\ =\  \{\go\in\gO:\  -\bar D_aW(k,\eps)(\bar D_aG(k,\eps))^{-1}G(a(k,\eps),\ \eps\go)+W(a(k,\eps),\ \eps\go)\\ >\ 
&-\bar D_aW(l,\eps)(\bar D_aG(l,\eps))^{-1} G(a(l,\eps),\ \eps\go)\ +\ W(a(l,\eps),\ \eps\go)\ \forall\ l\not=k.\}
\end{aligned}
\end{equation}
The proof of the theorem is based on the following technical lemma. 

\begin{lemma}\label{pert} We have 
\begin{equation}
\begin{aligned}
& -\bar D_aW(k,\eps)(\bar D_aG(k,\eps))^{-1}G(a(k,\eps),\ \eps\go)+W(a(k,\eps),\ \eps\go)\\
&=\ W(0)\ -\ 0.5M_1(k)^\top \cD \eps^2 \go\ +\ 0.5\eps^2 M_1(k)^\top \,\cD \, M_1(k)\ +\ \eps W_\go(0)\go\\
&+0.5\eps^2\go^\top W_{\go\go}(0)\go\ -D_aW(0) D_aG(0)^{-1}(D_\go G(0)\go +0.5\go^\top G_{\go\go}(0)\go) +\ o(\eps^2)\,. 
 \end{aligned}
 \end{equation}
\end{lemma}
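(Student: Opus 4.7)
The plan is to perform a direct second-order Taylor expansion of the left-hand side in the small parameter $\eps$, refining the expansion of $a(k,\eps)$ from Lemma~\ref{ak1} to second order and matching the resulting coefficients against the definition of the information-relevance matrix $\cD$ via the chain rule applied to the composite $W(a_*(\go),\go)$.

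First, I extend Lemma~\ref{ak1}. By the implicit function theorem applied to the normalized equilibrium condition $P_k(\eps)^{-1}\int_{\gO_k(\eps)}G(a(k,\eps),\eps\go)\mu_0(\go)d\go=0$, one obtains
\begin{equation*}
a(k,\eps)=a^0+\eps a^{(1)}(k)+\eps^2 a^{(2)}(k)+o(\eps^2),\qquad a^{(1)}(k)=-\cG\,M_1(k),
\end{equation*}
and the explicit form of $a^{(2)}(k)$ will turn out to be unnecessary because it drops out in the final assembly. Writing $\Delta a:=a(k,\eps)-a^0$ and using $G(a^0,0)=0$, the relevant Taylor expansions about $(a^0,0)$ are
\begin{align*}
W(a(k,\eps),\eps\go) &= W(0) + D_aW(0)\Delta a + \eps W_\go(0)\go + \eps D_{a\go}W(0)(\Delta a,\go) \\
&\quad + \tfrac12 \eps^2\go^\top W_{\go\go}(0)\go + \tfrac12 \Delta a^\top W_{aa}(0)\Delta a + o(\eps^2),\\
G(a(k,\eps),\eps\go) &= D_aG(0)\Delta a + \eps D_\go G(0)\go + \eps D_{a\go}G(0)(\Delta a,\go) \\
&\quad + \tfrac12 \eps^2 \go^\top G_{\go\go}(0)\go + \tfrac12 \Delta a^\top G_{aa}(0)\Delta a + o(\eps^2).
\end{align*}
The factor $P_k(\eps)$ cancels between numerator and denominator of $\bar D_aW(k,\eps)(\bar D_aG(k,\eps))^{-1}$, so this product equals $D_aW(0)D_aG(0)^{-1}+O(\eps)$; the $O(\eps)$ correction multiplied by the $O(\eps)$ leading part of $G$ could a priori contribute at order $\eps^2$, but on inspection its contribution vanishes by the equilibrium identity $\int_{\gO_k(\eps)} G(a(k,\eps),\eps\go)\mu_0(\go)d\go=0$.

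Substituting everything into $-\bar D_aW(\bar D_aG)^{-1}G+W$, the $D_aW(0)\Delta a$ piece of $W$ cancels the $D_aW(0)D_aG(0)^{-1}\cdot D_aG(0)\Delta a=D_aW(0)\Delta a$ piece of $-\bar D_aW(\bar D_aG)^{-1}G$, which removes $a^{(2)}(k)$ by an envelope-type cancellation. The surviving order-$\eps^0$ term is $W(0)$; the order-$\eps$ terms are $\eps W_\go(0)\go-\eps D_aW(0)D_aG(0)^{-1}D_\go G(0)\go$; and the order-$\eps^2$ terms split into a $\go$-quadratic piece $\tfrac12\eps^2[\go^\top W_{\go\go}(0)\go-D_aW(0)D_aG(0)^{-1}\go^\top G_{\go\go}(0)\go]$, a linear-in-$\go$ cross piece $\eps^2[D_{a\go}W(0)-D_aW(0)D_aG(0)^{-1}D_{a\go}G(0)](a^{(1)}(k),\go)$, and a $\go$-independent piece $\tfrac12\eps^2(a^{(1)}(k))^\top[W_{aa}(0)-D_aW(0)D_aG(0)^{-1}G_{aa}(0)]a^{(1)}(k)$.

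The main obstacle is the identification with $\cD$. Differentiating $W(a_*(\go),\go)$ twice via the chain rule, using $D_\go a_*(0)=-\cG$ and the implicit differentiation of $G(a_*(\go),\go)=0$, gives
\begin{equation*}
D_{\go\go}\bigl(W(a_*(\go),\go)\bigr)\big|_{\go=0}-W_{\go\go}(0)\ =\ \cD(0),
\end{equation*}
where the right-hand side is precisely the contraction of $W_{aa},W_{a\go},G_{aa},G_{a\go}$ with $\cG$'s that arises in the linear-in-$\go$ and $\go$-independent pieces above. Substituting $a^{(1)}(k)=-\cG M_1(k)$ then produces the $-0.5\,\eps^2 M_1(k)^\top\cD\,\go+0.5\,\eps^2 M_1(k)^\top\cD M_1(k)$ terms in the lemma, while the $W_{\go\go}$ and $G_{\go\go}$ pieces (and the $W_\go$ and $D_\go G$ first-order pieces) survive unchanged. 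This algebraic rearrangement is the only delicate step; once it is completed, the stated formula follows by collecting the surviving terms.
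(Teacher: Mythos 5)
Your overall strategy -- second-order Taylor expansion, envelope cancellation of the $\Delta a$ terms, and identification of the surviving quadratic form with $\cD$ via the chain rule applied to $W(a_*(\go),\go)$ -- is the same as the paper's. But there is a genuine error in the middle step. You assert that the $O(\eps)$ correction to $\bar D_aW(k,\eps)(\bar D_aG(k,\eps))^{-1}$, multiplied by the $O(\eps)$ leading part of $G(a(k,\eps),\eps\go)$, ``vanishes by the equilibrium identity.'' It does not. The equilibrium identity is an \emph{integral} identity over $\gO_k(\eps)$; the lemma is a \emph{pointwise} statement in $\go$. Pointwise, the leading part of $G$ is $\eps D_\go G(0)(\go-M_1(k))+O(\eps^2)$, which is nonzero off the set $\{\go=M_1(k)\}$. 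Writing the $O(\eps)$ correction as $\eps\Gamma$, the product contributes $-\eps^2\, M_1(k)^\top\cD_1\cG\,(\go-M_1(k))$ (in the paper's notation), and this term is indispensable: it carries exactly the mixed-partial contractions $W_{\go a}\cG$, $D_aW D_aG^{-1}G_{\go a}\cG$ and the $D_aWD_aG^{-1}G_{\go\go}$ piece that are needed for your three surviving blocks to symmetrize into $-0.5\,\eps^2 M_1(k)^\top\cD\,\go+0.5\,\eps^2M_1(k)^\top\cD\,M_1(k)$. With only your pieces (i$'$)--(iii$'$), the $M_1$-quadratic block is $0.5\,M_1^\top\cG^\top[W_{aa}-D_aWD_aG^{-1}G_{aa}]\cG M_1$, which is \emph{not} $0.5\,M_1^\top\cD M_1$ (compare with $\cD(0)=\cG^\top W_{aa}\cG-W_{\go a}\cG-\cG^\top W_{a\go}-D_aWD_aG^{-1}[\cG^\top G_{aa}\cG-G_{\go a}\cG-\cG^\top G_{a\go}+G_{\go\go}]$), and the cross term has coefficient $1$ rather than $0.5$ on an unsymmetrized bilinear form. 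So the claimed formula does not follow from the terms you retain.

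A secondary, more presentational issue: you invoke the implicit function theorem to get a second-order expansion $a(k,\eps)=a^0+\eps a^{(1)}+\eps^2a^{(2)}+o(\eps^2)$. The paper explicitly notes that this cannot be established, because the domains $\gO_k(\eps)$ (and hence the conditional moments entering the equilibrium condition) are not known to depend differentiably on $\eps$; it only has $a(k,\eps)-a^0=\eps a^{(1)}(k)+o(\eps)$ from Lemma \ref{ak1}. The paper instead carries the exact quantity $\tilde a^{(1)}(k,\eps)=\eps^{-1}(a(k,\eps)-a^0)$ through the computation and shows its coefficient vanishes identically. Your observation that $a^{(2)}$ ``drops out'' reaches the same cancellation, so this can be repaired by phrasing it as the paper does, but the IFT justification as stated is not available.
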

\begin{proof}
We have
\begin{equation}
\begin{aligned}
&\bar D_aW(k,\eps)\ =\ \int_{\gO_k(\eps)}D_aW(a(k,\eps),\eps\go)\mu_0(\go)d\go\\
& =\ \int_{\gO_k(\eps)}(D_aW(0)+\eps \go^\top D_\go W(0)^\top+\eps a^{(1)}(k)^\top D_{aa}^2W(0)+o(\eps))\mu_0(\go)d\go\\
&\ =\ (D_aW(0)+\eps (a^{(1)}(k))^\top D_{aa}^2W(0) \ +\ \eps\, M_1(\gO_k(0))^\top (D_\go W(0))^\top )M(\gO_k(\eps))\,\ +\ o(\eps)\ \in \R^{1\times M}\,.
\end{aligned}
\end{equation}
At the same time, an analogous calculation implies that
\begin{equation}
\begin{aligned}
&\bar D_aG(k,\eps)\ =\ (D_aG(0)+\eps (a^{(1)}(k))^\top D_{aa}^2G(0) \ +\ \eps\, D_\go G(0)M_1(\gO_k(0)))M(\gO_k(\eps))\,\ +\ o(\eps)
\end{aligned}
\end{equation}
Here, $D_aG(0)=(\partial G_i/\partial a_j)$ and
\[
(D_\go G(0)M_1(\gO_k(0)))_{i,j}\ =\ \sum_k\frac{\partial^2 G_i}{\partial a_j\partial\go_k}M_{1,k}\,,
\]
and, similarly,
\[
((a^{(1)}(k))^\top D_{aa}^2G(0))_{i,j}\ =\ \sum_l (a^{(1)}(k))_l \frac{\partial^2 G_l}{\partial a_i\partial a_j}\ \in \R^{M\times M}\,.
\]
Thus, 
\begin{equation}
\begin{aligned}
&M(\gO_k(\eps))\bar D_aG(k,\eps)^{-1}\\
&=\ D_aG(0)^{-1}-D_aG(0)^{-1}\eps \Big((a^{(1)}(k))^\top D_{aa}^2G(0) \ +\ \eps\,  D_\go G(0)M_1(\gO_k(0))\Big)D_aG(0)^{-1}\ +\ o(\eps)\,,
\end{aligned}
\end{equation}
and therefore
\begin{equation}
\begin{aligned}
&\bar D_aW(k,\eps)(\bar D_aG(k,\eps))^{-1}\ =\ D_aW(0) D_aG(0)^{-1}\\
&+\ \eps (M_1^\top D_\go W(0)^\top D_aG(0)^{-1}\,+\, (a^{(1)}(k))^\top D_{aa}^2W(0)D_aG(0)^{-1})\\
&-\eps D_aW(0)D_aG(0)^{-1} \Big((a^{(1)}(k))^\top D_{aa}^2G(0) \ +\  D_\go G(0)M_1(\gO_k(0))\Big)D_aG(0)^{-1} +\ o(\eps)\\
&=\ D_aW(0) D_aG(0)^{-1}\\
&+\ \eps (M_1^\top D_\go W(0)^\top D_aG(0)^{-1}\,- M_1^\top\cG^\top D_{aa}^2W(0)D_aG(0)^{-1})\\
&-\eps D_aW(0)D_aG(0)^{-1} \Big(- M_1^\top\cG^\top D_{aa}^2G(0) \ +\ D_\go G(0)M_1\Big)D_aG(0)^{-1} +\ o(\eps)\\
&=\ D_aW(0) D_aG(0)^{-1}+\eps\Gamma+o(\eps)\,,
\end{aligned}
\end{equation}
where
\begin{equation}
\begin{aligned}
&\Gamma\ =\ M_1^\top D_\go W(0)^\top D_aG(0)^{-1}\\
&- M_1^\top\cG^\top D_{aa}^2W(0)D_aG(0)^{-1}-D_aW(0)D_aG(0)^{-1} \Big(- M_1^\top\cG^\top D_{aa}^2G(0) \ +\ D_\go G(0)M_1\Big)D_aG(0)^{-1}\,.
\end{aligned}
\end{equation}
Define
\[
 \tilde a^{(1)}(k,\eps)\ \equiv\ \eps^{-1}(a(k,\eps)-a(k,0))\ =\ a^{(1)}(k)\ +\ o(1)\,.
\]
Let also
\begin{equation}
\begin{aligned}
&G^{(2)}(k)\ \equiv\ 0.5\eps^2(a^{(1)}(k)^\top D_{aa}^2G(0)a^{(1)}(k) +\  2\go^\top D_\go G(0) a^{(1)}(k)\ +\ \go^\top G_{\go\go}(0)\go)\,,
\end{aligned}
\end{equation}
so that
\[
G(a(k,\eps),\ \eps\go)\ -\ (\eps D_aG(0) \tilde a^{(1)}(k,\eps)+\eps D_\go G(0)\go)\ =\ \eps^2G^{(2)}(k)\ +\ o(\eps^2)\,,
\]
where we have used that $G(0)=0.$ While we cannot prove that $\eps (\tilde a^{(1)}(k)-a^{(1)}(k))=o(\eps^2),$ we show that this term cancels out. We have
\begin{equation}
\begin{aligned}
&-\bar D_aW(k,\eps)(\bar D_aG(k,\eps))^{-1}G(a(k,\eps),\ \eps\go)+W(a(k,\eps),\ \eps\go)\\
&\approx\ -\bar D_aW(k,\eps)(\bar D_aG(k,\eps))^{-1}\Big(\eps D_aG(0) \tilde a^{(1)}(k,\eps)+\eps D_\go G(0)\go +\eps^2G^{(2)}(k)\ +\  o(\eps^2)\Big)\\
&+\Bigg(W(0)+\eps D_aW(0) \tilde a^{(1)}(k,\eps)+\eps W_\go(0)\go \\
&+0.5\eps^2
\Big(
(a^{(1)}(k))^\top  D_{aa}^2W(0) a^{(1)}(k)+ \go^\top  W_{\go,\go}(0)\go+2 (a^{(1)}(k))^\top D_\go W(0)  \go
\Big)\ +\ o(\eps^2)
\Bigg)\\
&=\ -
\Big(
D_aW(0) D_aG(0)^{-1}\ +\ \eps \Gamma+o(\eps)\Big)\\
&\times \Big(\eps D_aG(0) \tilde a^{(1)}(k,\eps)+\eps D_\go G(0)\go +\eps^2G^{(2)}(k)\ +\  o(\eps^2)\Big)\\
&+\Bigg(W(0)+\eps D_aW(0) \tilde a^{(1)}(k)+\eps W_\go(0)\go \\
&+0.5\eps^2
\Big(
(a^{(1)}(k))^\top  D_{aa}^2W(0) a^{(1)}(k)+ \go^\top  W_{\go,\go}(0)\go+2( a^{(1)}(k))^\top  D_\go W(0) \go
\Big)+o(\eps^2)
\Bigg)\\
\end{aligned}
\end{equation}
\begin{equation}
\begin{aligned}
&=\ W(0)\\
&+\ \eps
\Bigg(
-D_aW(0) D_aG(0)^{-1}
\Big(D_aG(0) \tilde a^{(1)}(k,\eps)+D_\go G(0)\go\Big)\ +
D_aW(0) \tilde a^{(1)}(k,\eps)+ W_\go(0)\go
\Bigg)\\
&+\eps^2
\Bigg(-D_aW(0) D_aG(0)^{-1}G^{(2)}(k)
- \Gamma  \Big(D_aG(0) a^{(1)}(k)+D_\go G(0)\go\Big)\\
&+0.5
\Big(
(a^{(1)}(k))^\top  D_{aa}^2W(0) a^{(1)}(k)+ \go^\top  W_{\go,\go}(0)\go+2  a^{(1)}(k)^\top  D_\go W(0)\go
\Big)
\Bigg)\ +\ o(\eps^2)\,.
\end{aligned}
\end{equation}
Thus, the terms with $\tilde a^{(1)}(k,\eps)$ have cancelled out. 
We have
\begin{equation}
\begin{aligned}
&\Gamma  \Big(D_aG(0) a^{(1)}(k)+D_\go G(0)\go\Big)\ =\ \Big(
M_1^\top D_\go W(0)^\top D_aG(0)^{-1}\\
&- M_1^\top\cG^\top D_{aa}^2W(0)D_aG(0)^{-1}-D_aW(0)D_aG(0)^{-1} \Big(- M_1^\top\cG^\top D_{aa}^2G(0) \ +\   D_\go G(0)M_1\Big)D_aG(0)^{-1}
\Big)\\
&\times  D_\go G(0)(\go-M_1)\\
&=\  \Big(
M_1^\top D_\go W(0)^\top - M_1^\top\cG^\top D_{aa}^2W(0)-D_aW(0)D_aG(0)^{-1} \Big(- M_1^\top\cG^\top D_{aa}^2G(0) \ +\  D_\go G(0)M_1\Big)\Big)\\
&\times  \cG(\go-M_1)\ =\ M_1^\top \cD_1\cG(\go-M_1)\,,
\end{aligned}
\end{equation}
where
\[
\cD_1\ =\ D_\go W(0)^\top-D_aW(0)D_aG(0)^{-1} D_\go G(0)-(\cG^\top D_{aa}^2W(0)-\cG^\top D_aW(0)D_aG(0)^{-1} D_{aa}^2G(0))\in \R^{L\times M}
\]
and where the three-dimensional tensor multiplication is understood as follows:
\begin{equation}
\begin{aligned}
&M_1^\top D_aW(0)D_aG(0)^{-1} D_\go G(0)\ =\ \sum_k M_{1,k}D_aW(0)D_aG(0)^{-1} D_aG_{\go_k}(0)\\
&M_1^\top \cG^\top  D_aW(0)D_aG(0)^{-1} D_\go G(0)\ =\ \sum_k (\cG M_{1})_k D_aW(0)D_aG(0)^{-1} D_aG_{a_k}(0)\,.
\end{aligned}
\end{equation}
Rewriting, we get
\begin{equation}
\begin{aligned}
&W(0)\ +\ \eps
\Bigg(
-D_aW(0) D_aG(0)^{-1}D_\go G(0)\go+ W_\go(0)\go
\Bigg)\\
&+\eps^2
\Bigg(-D_aW(0) D_aG(0)^{-1}\eps^2G^{(2)}(k)-M_1^\top \cD_1\cG(\go-M_1)
\\
&+0.5
\Big(
(a^{(1)}(k))^\top  D_{aa}^2W(0) a^{(1)}(k)+ \go^\top  W_{\go,\go}(0)\go+2  a^{(1)}(k)^\top  D_\go W(0)\go
\Big)
\Bigg)\ +\ o(\eps^2)\\
&=\ W(0)\ +\ \eps
\Bigg(
-D_aW(0) \cG\go+ W_\go(0)\go
\Bigg)\\
&+\eps^2
\Bigg(-D_aW(0) D_aG(0)^{-1}\eps^2G^{(2)}(k)-M_1^\top \cD_1\cG(\go-M_1)\\
&+0.5
\Big(
M_1^\top \cG^\top D_{aa}^2W(0)\cG M_1+ \go^\top  W_{\go,\go}(0)\go-2  (\cG M_1)^\top  D_\go W(0)\go
\Big)
\Bigg)\ +\ o(\eps^2)\,.
\end{aligned}
\end{equation}
Now,
\[
\eps^2G^{(2)}(k)\ =\ 0.5(M_1^\top \cG^\top D_{aa}^2G(0)\cG M_1\ -\  2(\cG M_1)^\top D_\go G(0)\go +\go^\top G_{\go\go}(0)\go )\,.
\]
Thus, the desired expression is given by
\begin{equation}
\begin{aligned}
&W(0)\ +\ \eps
\Bigg(
-D_aW(0) \cG\go+ W_\go(0)\go
\Bigg)\ +\ \eps^2(0.5 M_1^\top\cA M_1+M_1^\top\cB \go\ +\ \go^\top \mathcal C \go)
\end{aligned}
\end{equation}
where we have defined
\begin{equation}
\begin{aligned}
&\cA\ \equiv\ - D_aW(0) D_aG(0)^{-1}\cG^\top D_{aa}^2G(0)\cG+2\cD_1\cG+\cG^\top D_{aa}^2W(0)\cG \\
&=\  - D_aW(0) D_aG(0)^{-1}\cG^\top D_{aa}^2G(0)\cG+2\Big(D_\go W(0)^\top-D_aW(0)D_aG(0)^{-1} D_\go G(0)\\
&-(\cG^\top D_{aa}^2W(0)-D_aW(0)D_aG(0)^{-1}\cG^\top D_{aa}^2G(0))\Big)\cG+\cG^\top D_{aa}^2W(0)\cG\\
&=\ \cG^\top D_aW(0) D_aG(0)^{-1} D_{aa}^2G(0)\cG-\cG^\top D_{aa}^2W(0)\cG\\
&+2(D_\go W(0)^\top\cG -\cG^\top D_aW(0)D_aG(0)^{-1} D_\go G(0))\in \R^{L\times L}\\
&\cB\ \equiv\  \cG^\top D_aW(0) D_aG(0)^{-1} D_\go G(0)-\cD_1\cG-\cG^\top D_\go W(0)\\
&=\  \cG^\top D_aW(0) D_aG(0)^{-1} D_aG^\top_\go(0)-\Big(\cG^\top D_\go W(0)-\cG^\top D_aW(0)D_aG(0)^{-1} D_\go G(0)\\
&-(\cG^\top D_{aa}^2W(0)\cG -D_aW(0)D_aG(0)^{-1}\cG^\top D_{aa}^2G(0))\cG\Big)-D_\go W(0)^\top \cG
\end{aligned}
\end{equation}
Here, the first term is given by
\[
(D_aW(0) D_aG(0)^{-1} D_aG^\top_\go(0))_{i,j}\ =\ \sum_k (D_aW(0) D_aG(0)^{-1} )_k \frac{\partial^2 G_k}{\partial a_i\partial \go_j}\,,
\]
and the claim follows by a direct (but tedious) calculation. 
\end{proof}

The desired convergence is then a direct consequence of Lemma \ref{pert}. Indeed, by compactness, we can pick a converging sub-sequence and Lemma \ref{pert} implies that, in the limit, a point $\go$ satisfies inequalities \eqref{set11} if and only if $\go\in \tilde \gO_k^*$. 
\end{proof}

\section{The Unconstrained Problem}

\begin{proof}[Proof of Lemma \ref{lem-approx}] The proof requires some additional arguments because $\gO$ is not necessarily compact. 
First, consider an increasing sequence of compact sets $X_n=\{\go:g(\go)\le n\}$ such that $X_n$ converge to $\gO$ as $n\to\infty.$  For any measure $\mu,$ let $\mu_X$ be its restriction on $X.$ Let $a_n=a(\mu_{X_n}).$ The first observation is that Assumptions \ref{integrability} and \ref{ac} imply that $a_n\to a$ uniformly as $n\to \infty$. Indeed, 
\[
\int_{X_n} G(a_n,\go)d\mu(\go)\ =\ \int_{\gO} G(a,\go)d\mu(\go)=0
\]
implies that 
\begin{equation}
\begin{aligned}
&\int_{X_n} (G(a_n,\go)-G(a,\go))d\mu(\go)\ =\ \int_{\gO\setminus X_n} G(a,\go)d\mu(\go)\\
&\le\ \int_{\gO\setminus X_n} \eps^{-1}\|a-a_*(\go)\|d\mu(\go)\ \le\ 2\eps^{-1}\mu(\gO\setminus X_n)^{1/2}\left(\int_{\gO\setminus X_n} \|a_*(\go)\|^2d\mu(\go)\right)^{1/2}\\
&\ \le\ \eps^{-1}(\mu(\gO\setminus X_n)+\int_{\gO\setminus X_n} \|a_*(\go)\|^2d\mu(\go))\,.
\end{aligned}
\end{equation}
Multiplying by $(a-a_n)$, we get 
\[
\eps\,\|a-a_n\|^2 (1-\mu(\gO\setminus X_n))\ \le\  \|a-a_n\| \eps^{-1}(\mu(\gO\setminus X_n)+\int_{\gO\setminus X_n} \|a_*(\go)\|^2d\mu(\go))
\]
Furthermore, by Lemma \ref{existence}, $\|a-a_n\|\le\ 2\left(\int_{\gO} g(\go)d\mu(\go)\right)^{1/2}\ \le\ 1+\int_{\gO} g(\go)d\mu(\go)$ and therefore 
\[
\|a-a_n\|\ \le\ C\Bigg( \mu(\gO\setminus X_n)(1+\int_{\gO} g(\go)d\mu(\go))+\int_{\gO\setminus X_n} g(\go)d\mu(\go)
\Bigg)
\] 
for some constant $C>0.$ Now, pick a $\tau\in \Delta(\Delta(\gO)).$ Since the function $q(x)={\bf 1}_{x>n}$ is monotone increasing in $x$, we get 
\[
\mu(\gO\setminus X_n)\int_{\gO} g(\go)d\mu(\go)\ =\ \int_{\gO} q(g(\go))d\mu(\go) \int_{\gO} g(\go)d\mu(\go)\ \le\ \int_{\gO} q(g(\go))g(\go)d\mu(\go)\ =\ \int_{\gO\setminus X_n} g(\go)d\mu(\go)\,
\]
and therefore 
\[
\|a-a_n\|\ \le\ C\int_{\gO\setminus X_n}(1+2g(\go))d\mu(\go)\,.
\]
Then, we have by the Jensen inequality that 
\begin{equation}
\begin{aligned}
&|\bar W(\mu)-\bar W(\mu_{X_n})|\ \le\ \int_{\gO\setminus X_n} |W(a(\mu),\go)|d\mu(\go)\ +\ \int_{X_n}|W(a(\mu),\go)-W(a_n(\mu),\go)|d\mu(\go)\\
&\ \le\ \int_{\gO\setminus X_n}(g(\go) f(\int_{\gO} g(\go)d\mu(\go)))d\mu(\go)\\
&+\ \int_{\gO}\|a(\mu)-a_n(\mu)\| (g(\go) f(\int_{\gO} g(\go)d\mu(\go)))d\mu(\go)\\
&\ \le\ \int_{\gO\setminus X_n}g(\go)d\mu(\go)\,\int_\gO f(g(\go))d\mu(\go)\\
&+\ \|a(\mu)-a_n(\mu)\|  \int_{\gO} g(\go)d\mu(\go)\,\int_\gO f(g(\go))d\mu(\go)\,.
\end{aligned}
\end{equation}
Since the function $q(x)=x{\bf 1}_{x>n}$ is monotone increasing in $x$ and $f$ is monotone increasing, we get 
\[
 \int_{\gO} g(\go)d\mu(\go)\,\int_\gO f(g(\go))d\mu(\go)\ \le\ \int_\gO g(\go) f(g(\go))d\mu(\go)
\]
and therefore, by the same monotonicity argument, 
\begin{equation}
\begin{aligned}
&\|a(\mu)-a_n(\mu)\|  \int_{\gO} g(\go)d\mu(\go)\,\int_\gO f(g(\go))d\mu(\go)\ \le\ C\int_{\gO\setminus X_n}(1+2g(\go))d\mu(\go)\,\int_\gO g(\go) f(g(\go))d\mu(\go)\\
&\le\ C\int_{\gO\setminus X_n}(1+2g(\go)) g(\go) f(g(\go))d\mu(\go)\,.
\end{aligned}
\end{equation}
Similarly, 
\begin{equation}
\begin{aligned}
&\int_{\gO\setminus X_n}g(\go)d\mu(\go)\,\int_\gO f(g(\go))d\mu(\go)\ =\ \int_{\gO}q(g(\go))d\mu(\go)\,\int_\gO f(g(\go))d\mu(\go)\\ 
&\le\ \int_\gO q(g(\go))f(g(\go))d\mu(\go)\ =\ \int_{\gO\setminus X_n}g(\go)f(g(\go))d\mu(\go)\,.
\end{aligned}
\end{equation}
Therefore, by the Fubini Theorem, 
\begin{equation}
\begin{aligned}
&|\int_{\Delta(\mu)} (\bar W(\mu)-\bar W(\mu_{X_n}))d\tau(\mu)|\\
& \le\ \int_{\gD(\gO)}\int_{\gO\setminus X_n}g(\go)f(g(\go))d\mu(\go)d\tau(\mu)\\
&+\  \int_{\gD(\gO)}C\int_{\gO\setminus X_n}(1+2g(\go)) g(\go) f(g(\go))d\mu(\go)d\tau(\mu)\\
&=\ \int_{\gO\setminus X_n}(g(\go)f(g(\go))+C(1+2g(\go)) g(\go) f(g(\go)))d\mu_0(\go)\,.
\end{aligned}
\end{equation}
Thus, Assumption \ref{integrability} implies that we can restrict our attention to the case when $\gO=X_n$ is compact. 

In this case, the Prokhorov Theorem implies that $\Delta(\gO)$ is compact in the weak* topology and this topology is metrizable. Thus, for any $\eps>0,$ we can decompose $\Delta(\gO)=Q_1\cup\cdots\cup Q_K$ where all $Q_k$ have diameters less than $\eps.$ We can now approximate $\tau$ by $\tilde\tau=\sum_k \nu_k \gd_{\mu_k}$ with $\mu_k=\int_{Q_k} \mu d\tau(\mu)/\nu_k$ and $\nu_k= \int_{Q_k} d\tau(\mu).$ Clearly, $\int \mu d\tilde\tau(\mu)=\mu_0,$ and therefore it remains to show that $\bar W$ is continuous in the weak* topology. 

To this end, suppose that $\mu_n\to\mu$ in the weak* topology. Let us first show $a_n=a(\mu_n)\to a(\mu)=a.$ Suppose the contrary.  Since $\gO$ is compact and $G$ is continuous and bounded, Lemma \ref{existence} implies that $a_n$ are uniformly bounded. Pick a subsequence such that $\|a_n-a\|>\eps$ for some $\eps>0$ and subsequence $a_n\to b$ for some $b\not=a.$ Since $G(a_n,\go)\to G(b,\go)$ uniformly on $\gO,$ we get a contradiction because 
\[
\int G(a_n,\go)d\mu_n-\int G(b,\go)d\mu\ =\ \int (G(a_n,\go)-G(b,\go))d\mu_n\ +\ \int G(b,\go)d(\mu_n-\mu)\,.
\]
The second term converges to zero because of weak* convergence. The first term can be bounded by 
\[
|\int (G(a_n,\go)-G(b,\go))d\mu_n|\ \le\ C \|a_n-b\|
\]
and hence also converges to zero. Thus, $\int G(b,\go)d\mu=\int G(a,\go)d\mu=0$, implying that $a=b$ by the strict monotonicity of the map $G.$ The same argument implies the required continuity of $\bar W(\mu).$ 
\end{proof}

\begin{proof}[Proof of Theorem \ref{mainth-limit}] For each finite $K,$ the optimal solution $(a_K(\go),x_K(a(\go)))$ stay uniformly bounded and hence there exists a subsequence converging in $L_2(\gO;\mu_0)$ and in probability to a limit $(a(\go),x(a(\go))).$ By continuity and Lemma \ref{lem-approx}, $E[W(a_K(\go),\go)]$ converges to the maximum in the problem of Definition \ref{def-u} and, hence, by the same continuity argument, $a(\go)$ is an  optimal policy without randomization. Since \eqref{olicy} holds true for $a_K$ for each finite $K$, convergence in probability implies that \eqref{x-top} also holds in the limit. Indeed, 
\[
c(a_K(q),\go;x_K)\ \ge\ c(a_K(\go),\go;x_K)
\]
holds for almost all $\go$ and all $q$ with probability one, and hence it also holds in the limit with probability one (due to convergence in probability). Clearly, for each finite $K$ the function $c(a_K(\go),\go;x_K)$ is smooth in each region $\gO_k$ and is continuous at the boundaries. Since $a,x$ stay bounded and $W,G$ are smooth and $G$ is compact, the functions are uniformly Lipschitz continuous and the Arzela-Ascoli theorem implies that so is the limit (passing to a subsequence if necessary). Finally, to prove that 
\[
E[G(a(\go),\go)|a(\go)]\ =\ 0
\]
it suffices to prove that 
\[
E[G(a(\go),\go)f(a(\go))]\ =\ 0
\]
for a countable dense set of test functions, which follows by passing to a subsequence. 

To verify all the required integrability to apply Lebesgue dominated convergence, Assumption \ref{integrability} implies that we just need to check that $E[D_aG(a,\go)|a]^{-1}$ is uniformly bounded. Since, by assumption, $\|D_aG(a,\go)\|$ is uniformly bounded, we just need to check that eigenvalues of $E[D_aG(a,\go)|a(\go)]$ are uniformly bounded away from zero. 

Indeed, let $\eps=\inf_{a,\go,z,\|z\|=1}-z\top D_aG(a,\go)z >0.$ If $\gl$ is an eigenvalue of $E[-D_aG(a,\go)|a]$ with a normed eigenvector $z$, then 
\[
\gl\ =\ z^\top E[-D_aG(a,\go)|a]z\ =\ E[-z^\top D_aG(a,\go)z|a]\ \ge\eps\,.
\]
To prove that \eqref{x-top} holds, we note that it suffices to show that 
\[
E[(x(a(\go))^\top E[D_aG(a,\go)|a(\go)]-E[D_aW(a,\go)|a(\go)])f(a(\go))]\ =\ 0
\]
for a countable, dense set of smooth test functions $f.$ The latter is equivalent to 
\[
E[(x(a(\go))^\top D_aG(a(\go),\go)-D_aW(a(\go),\go))f(a(\go))]\ =\ 0
\]
and the claim follows by continuity by passing to a subsequence. Finally, the last identity follows from \eqref{key}. Finally, the fact that $a$ is Borel-measurable follows from the known fact that for any Lebesgue-measurable $a(\go)$ there exists a Borel measurable modification of $a$ coinciding with $a$ for Lebesgue-almost every $\go.$ 
\end{proof}

\begin{proof}[Proof of Corollary \ref{cor-moment}] Integrability condition (by the same argument as in the proof of Lemma \ref{lem-approx}) implies that all the convergence arguments are justified. The convexity claim is then a direct consequence of Proposition \ref{main convexity}. 

\end{proof}

\begin{proposition}\label{delta-dev} Let $\gamma$ be the joint distribution of $(a,\go)$ for an optimal information design. Then, 
\[
\int (x(a)^\top G(a,\go)- W(a,\go))d\eta\ +\ \int W(\tilde a_*,\go)d\eta(\cR,\go)\ \le\ 0
\]
for every measure $\eta$ such that $\Supp(\eta)\subset\Supp(\gamma)\,$ such that $\int f(\|a\|^2)g^2(\go)d\eta(a,\go) <\infty.$ 
In the case of moment persuasion, 
\begin{equation}\label{kramkov2019optimal-12}
\int (D_aW(a)(a -g(\go))-W(a))d\eta\ +\ W(\int g(\go)d\eta)\le 0
\end{equation}
\end{proposition}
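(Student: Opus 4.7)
The plan is a first-order perturbation argument around the optimal joint distribution $\gamma$. Let $\eta_\go=\eta(\cdot,\gO)$ denote the $\gO$-marginal of $\eta$, and let $\tilde a_*$ be the unique solution to $\int G(\tilde a_*,\go)\,d\eta_\go=0$, which exists by the uniform monotonicity of $G$ in $a$ from Assumption \ref{ac}. I would set
\[
\gamma_t\ =\ \gamma\ +\ t\,\sigma,\qquad \sigma\ =\ \eta_\go\otimes\delta_{\tilde a_*}\ -\ \eta.
\]
Since $\sigma$ has vanishing $\gO$-marginal, $\gamma_t$ preserves the prior $\mu_0$, and (modulo the approximation discussed below) $\gamma_t\ge 0$ for $t$ small. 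Interpreting $\gamma_t$ as an information design with signals indexed by the action, the new signal at $\tilde a_*$ carries belief $\eta_\go/\eta_\go(\gO)$ and satisfies its equilibrium condition by construction, while at every other action $a\in\Supp(\gamma)$ the conditional distribution of $\go$ is altered by $-t\eta$, forcing the equilibrium action to shift to $a_t(a)=a+t\dot a(a)+o(t)$.

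Expanding the equilibrium condition at $a_t(a)$ to first order, and using the original equality $E^\gamma[G(a,\go)\mid a]=0$, gives
\[
\bar D_aG(a)\,\dot a(a)\,\nu(da)\ =\ \bar G_\eta(a)\,\mu_\eta(da),
\]
where $\nu,\mu_\eta$ are the $a$-marginals of $\gamma,\eta$, $\bar D_aG(a)=E^\gamma[D_aG(a,\go)\mid a]$ and $\bar G_\eta(a)=E^\eta[G(a,\go)\mid a]$. The first-order change in welfare is
\[
\Delta W\ =\ \int D_aW(a,\go)\,\dot a(a)\,d\gamma\ -\ \int W(a,\go)\,d\eta\ +\ \int W(\tilde a_*,\go)\,d\eta_\go.
\]
Fubini in the first integral, the displayed formula for $\dot a$, and identity \eqref{x-top}, $x(a)^\top=\bar D_aW(a)\,\bar D_aG(a)^{-1}$, together collapse the first term to $\int x(a)^\top G(a,\go)\,d\eta$. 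Hence $\Delta W$ is precisely the left-hand side of the claimed inequality, and optimality of $\gamma$ forces $\Delta W\le 0$. For moment persuasion, $x(a)^\top=D_aW(a)$, $G=a-g(\go)$ and $W=W(a)$; normalizing $\eta$ to unit mass gives $\tilde a_*=\int g\,d\eta$, whence $\int W(\tilde a_*,\go)\,d\eta_\go=W(\int g\,d\eta)$ and \eqref{kramkov2019optimal-12} follows.

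The main obstacle is the measure-theoretic rigor of the perturbation: strictly, $\gamma_t\ge 0$ for small $t$ requires $\eta\ll\gamma$ with bounded Radon--Nikodym derivative, whereas the hypothesis only asks $\Supp(\eta)\subset\Supp(\gamma)$ together with $\int f(\|a\|^2)g^2(\go)\,d\eta<\infty$. I would first establish the inequality for the truncations $\eta_n=\eta\wedge n\gamma$, which are absolutely continuous and bounded above by $n\gamma$, and then let $n\to\infty$; continuity of the map $\eta_\go\mapsto\tilde a_*$ (an argument analogous to Lemma \ref{existence}) passes the last term to the limit, while Assumption \ref{integrability} together with the uniform bound on $x(a)$ supplies the dominated convergence needed for the other two. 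A minor side point is the uniform implicit-function-theorem justification of $\dot a(a)$, which again rests on the uniform monotonicity of $G$.
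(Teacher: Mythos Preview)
Your perturbation argument is essentially the same as the paper's: perturb $\gamma$ by shifting the mass $\eta$ to a fresh signal whose induced action is $\tilde a_*$, expand the equilibrium map and the welfare to first order, identify the linear term with the left-hand side of the claimed inequality via $x(a)^\top=\bar D_aW(a)\bar D_aG(a)^{-1}$, and conclude by optimality. The paper does exactly this, with the cosmetic difference that it places the new signal at a non-atom $q$ of the $a$-marginal of $\gamma$ rather than at $\tilde a_*$ itself; this sidesteps the possibility that $\tilde a_*$ already carries positive $\gamma$-mass and hence that the conditional belief there is a mixture rather than $\eta_\go$. You should make the same adjustment.

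There is, however, a genuine gap in your approximation step. The truncation $\eta_n=\eta\wedge n\gamma$ only recovers the absolutely continuous part of $\eta$ with respect to $\gamma$: if $\eta$ is singular (e.g.\ a Dirac mass $\delta_{(a_0,\go_0)}$ at a point of $\Supp(\gamma)$), then $\eta_n\equiv 0$ for every $n$, and your limit yields nothing. But precisely these singular $\eta$ are the ones used downstream---Lemma \ref{super-G} and Corollary \ref{main-conditions} take $\eta$ to be a single Dirac mass or a convex combination of two. So your scheme proves the proposition only for $\eta\ll\gamma$, not for the stated hypothesis $\Supp(\eta)\subset\Supp(\gamma)$. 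The paper itself does not spell out this passage either, deferring to an argument of \cite{kramkov2019optimal}; the correct idea is to approximate a general $\eta$ supported on $\Supp(\gamma)$ weakly by measures of the form $\sum_i c_i\,\gamma|_{B_i}/\gamma(B_i)$ for small balls $B_i$ centered at points of $\Supp(\eta)$, which by construction have bounded density with respect to $\gamma$, and then pass to the limit using the integrability condition and the continuity of $\eta_\go\mapsto\tilde a_*$ that you already invoke.
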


\begin{proof}[Proof of Proposition \ref{delta-dev}] We closely follow the arguments and notation in \cite{kramkov2019optimal}. Let $\gamma$ be the joint distribution of the random variables $\go$ and $a(\go).$ We first establish \eqref{kramkov2019optimal-12} for a Borel probability measure $\eta$ that has a bounded density with respect to $\gamma.$ Then, the general result follows by a simple modification of the argument in the proof of Theorem A.1 in \cite{kramkov2019optimal}. Let
\[
V(a,\go)\ =\ \frac{d\eta}{d\gamma}(a,\go)\,.
\]
We choose a non-atom $q\in \cR$ of $\mu(da)\ =\ \gamma(da,\R^L)$ and define the probability measure
\[
\zeta(da,d\go)\ =\ \gd_q(da)\eta(\cR,d\go)\,,
\]
where $\gd_q$ is the Dirac measure concentrated at $q.$ For sufficiently small $\eps>0$ the probability measure
\[
\tilde\gamma\ =\ \gamma\ +\ \eps (\zeta-\eta)
\]
is well-defined and has the same $\go$-marginal $\mu_0(\go)$ as $\gamma$. Let $\tilde a$ be the optimal action satisfying 
\[
\tilde\gamma (G(\tilde a,\go)|\tilde a)\ =\ 0\,.
\]
The optimality of $\gamma$ implies that 
\begin{equation}\label{kramkov2019optimal-13}
\int W(\tilde a,\go)d\tilde\gamma\ \le\ \int W(a,\go)d\gamma\,. 
\end{equation}
By direct calculation, 
\begin{equation}
\begin{aligned}
&0\ =\ \tilde\gamma (G(\tilde a,\go)|a)\\
&=\ {\bf 1}_{a\not=q}\frac{\int G(\tilde a,\go)d((\gamma|a)-\eps (\eta|a))}{\int d(\gamma-\eps \eta)}\ +\ {\bf 1}_{a=q}\int G(\tilde a,\go) d\eta(\cR,\go)\\
&=\ {\bf 1}_{a\not=q}\frac{\int G(\tilde a,\go)d(\gamma|a)-\eps\int G(\tilde a,\go)d (\eta|a)}{1-\eps U(a)}\ +\ {\bf 1}_{a=q}\int G(\tilde a,\go) d\eta(\cR,\go)
\end{aligned}
\end{equation}
where $U(a)=\gamma(V(a,\go)|a)\,.$
Now, we know that 
\[
\int G(a,\go)d(\gamma|a)\ =\ 0,
\]
and the assumed regularity of $G$ together with the implicit function theorem imply that 
\[
\tilde a(a)\ =\ a\ +\ \eps Q(a)\ +\ O(\eps^2)
\]
if $a\not=q$ and
\[
\tilde a\ =\ \tilde a_*\,,
\]
where $\tilde a_*$ is the unique solution to 
\[
\int G(\tilde a_*,\go) d\eta(\cR,\go)\ =\ 0
\]
for $a=q.$ Here, 
\begin{equation}
\begin{aligned}
&0\ =\ O(\eps^2)\ +\ \int G(a\ +\ \eps Q(a),\go)d(\gamma|a)-\eps\int G(a,\go) V(a,\go) d (\gamma|a)\\
&=\ O(\eps^2)\ +\ \eps \int D_aG(a,\go) d(\gamma|a) Q(a)-\eps\int G(a,\go) V(a,\go) d (\gamma|a)
\end{aligned}
\end{equation}
so that 
\[
Q(a)\ =\ \left(\int D_aG(a,\go) d(\gamma|a)\right)^{-1}\int G(a,\go) V(a,\go) d (\gamma|a)\,. 
\]
Thus, 
\begin{equation}
\begin{aligned}
&\int W(\tilde a(a),\go)d\tilde \gamma\ =\ \int W(\tilde a(a),\go)(1-\eps V(a,\go))d\gamma+\eps \int W(\tilde a_*,\go)d\eta(\cR,\go)\\
&=\ O(\eps^2)\ +\ \int W(a,\go)d\gamma\ +\ \eps\Bigg(\int (D_aW(a,\go) Q(a) - V(a,\go)) d\gamma\ +\ \int W(\tilde a_*,\go)d\eta(\cR,\go)
\Bigg)
\end{aligned}
\end{equation}
In view of \eqref{kramkov2019optimal-13}, the first-order term is non-positive:
\[
\int (D_aW(a,\go) Q(a) - W(a,\go)V(a,\go)) d\gamma\ +\ \int W(\tilde a_*,\go)d\eta(\cR,\go)\ \le\ 0\,.
\]
Substituting, we get 
\[
\int (x(a)^\top \int G(a,\go) V(a,\go) d (\gamma|a)\ - W(a,\go)V(a,\go)) d\gamma\ +\ \int W(\tilde a_*,\go)d\eta(\cR,\go)\ \le\ 0\,,
\]
which is equivalent to 
\[
\int (x(a)^\top G(a,\go)- W(a,\go))d\eta\ +\ \int W(\tilde a_*,\go)d\eta(\cR,\go)\ \le\ 0
\]
In the case of a moment persuasion, we get 
\[
Q(a)\ =\ a U(a)\ -\ R(a)\,,
\]
where we have defined 
\[
U(a)\ =\ \gamma(V(a,\go)|a),\ R(a)\ =\ \gamma(g(\go)V(a,\go)|a)\,,
\]
and 
\[
\tilde a_*\ =\ \int g(\go)d\eta\,.
\]
Thus, we get 
\begin{equation}
\begin{aligned}
&0\ \ge\ \int (D_aW(a) Q(a) - W(a)V(a,\go)) d\gamma\ +\ W(\tilde a_*)\\
&=\ \int (D_aW(a)(a U(a)\ -\ R(a))-W(a)V(a,\go))d\gamma\ +\ W(\tilde a_*)\\
&=\ \int (D_aW(a)(a -g(\go))-W(a))d\eta\ +\ W(\int g(\go)d\eta)\,. 
\end{aligned}
\end{equation}
\end{proof}

\begin{lemma}\label{super-G} Let $a_*(\go)$ be the unique solution to $G(a_*(\go),\go)=0.$ Then, at any optimal $(a(\go),\,x(a(\go))$ with $x(a(\go))=\bar D_aW(a(\go))\bar D_aG(a(\go))^{-1}$ we have 
\[
x(a(\go))^\top G(a(\go),\go)\ -\ W(a(\go),\go)\ +\ W(a_*(\go),\go)\ \le\ 0
\]
Furthermore, 
\[
c(a(\go),\go;x)\ =\ \min_{a\in\Xi}\,c(a,\go;x)\,
\]
almost surely.
\end{lemma}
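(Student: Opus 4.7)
The plan is to obtain both statements of Lemma~\ref{super-G} as pointwise consequences of the variational inequality in Proposition~\ref{delta-dev}. The first assertion, $c(a(\go),\go;x)\le 0$, is the basic pointwise first-order optimality and will follow from localizing the test measure $\eta$ at a single atom on the graph of $a$. The second assertion is already contained in \eqref{olicy} of Theorem~\ref{mainth-limit} and requires no further argument.

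For the first claim, fix a Lebesgue density point $\go_0\in\gO$ at which $a$ is approximately continuous (such points have full $\mu_0$-measure by Lusin's theorem). Let $\gamma$ denote the joint law of $(a(\go),\go)$ on $\R^M\times\gO$, and, for small $\delta>0$, consider the probability measure $\eta_\delta$ having density $V_\delta(a',\go)=\mu_0(B_\delta(\go_0))^{-1}\mathbf{1}_{\go\in B_\delta(\go_0)}$ with respect to $\gamma$. By construction $V_\delta$ is bounded and $\Supp(\eta_\delta)\subset\Supp(\gamma)$, so Proposition~\ref{delta-dev} applies and yields
\[
\int\bigl(x(a)^\top G(a,\go)-W(a,\go)\bigr)\,d\eta_\delta \;+\; \int W(\tilde a_{*,\delta},\go)\,d\eta_\delta(\R^M,\go)\;\le\;0,
\]
where $\tilde a_{*,\delta}$ is the unique solution to $\int G(\tilde a_{*,\delta},\go)\,d\eta_\delta(\R^M,\go)=0$. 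As $\delta\to 0$, the $\go$-marginal of $\eta_\delta$ weak-$*$ converges to $\delta_{\go_0}$, and the uniform monotonicity of $G$ in Assumption~\ref{ac}, via the Lipschitz argument underlying Lemma~\ref{existence}, forces $\tilde a_{*,\delta}\to a_*(\go_0)$. Under the integrability hypothesis of Assumption~\ref{integrability}, dominated convergence lets both integrals pass to their limits, producing
\[
x(a(\go_0))^\top G(a(\go_0),\go_0)-W(a(\go_0),\go_0)+W(a_*(\go_0),\go_0)\;\le\;0,
\]
which is the first claim at Lebesgue-almost every $\go_0$.

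For the second claim, \eqref{olicy} of Theorem~\ref{mainth-limit} already asserts $a(\go)=\arg\min_{b\in\Xi}c(b,\go;x)$ almost surely, so evaluating at $b=a(\go)$ immediately yields $c(a(\go),\go;x)=\min_{a\in\Xi}c(a,\go;x)$.

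The main obstacle is the limiting argument for the first claim: one must ensure that $\tilde a_{*,\delta}\to a_*(\go_0)$ (and not merely along subsequences) and that tail contributions to the two integrals are negligible even though $\gO$ may be non-compact. Both issues are handled exactly as in the passage to the limit in Lemma~\ref{lem-approx}: uniform monotonicity of $G$ yields Lipschitz dependence of $\tilde a_*$ on the mixing measure, while the growth bound $|W|+\|D_aW\|\le g(\go)f(\|a\|^2)$ together with \eqref{integrability1} dominates the tails.
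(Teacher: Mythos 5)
Your treatment of the first inequality is essentially the paper's argument: the paper simply takes $\eta=\gd_{(a(\go),\go)}$ in Proposition \ref{delta-dev}, and your mollified family $\eta_\gd$ concentrating on $B_\gd(\go_0)$ is a more careful implementation of the same choice (the convergence $\tilde a_{*,\gd}\to a_*(\go_0)$ and the passage to the limit in the two integrals work as you describe at approximate-continuity points of $a$). That part is fine.

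The second claim, however, has a genuine gap. You dispose of it by citing \eqref{olicy} of Theorem \ref{mainth-limit}, but this is circular. The appendix proof of Theorem \ref{mainth-limit} establishes \eqref{olicy} only for the particular optimal policy constructed as a limit of finite-$K$ partitions, where \eqref{olicy} is inherited from the finite-partition optimality conditions of Theorem \ref{regular-partition}; the final clause of Theorem \ref{mainth-limit} --- that \emph{any} optimal information design satisfies \eqref{c<0} and \eqref{olicy} --- is precisely what Proposition \ref{delta-dev} and Lemma \ref{super-G} are there to supply. Since Lemma \ref{super-G} is stated for an arbitrary optimal $(a(\go),x(a(\go)))$, you cannot import \eqref{olicy} for it without assuming what is to be proved. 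The paper instead proves the minimality claim by a second, genuinely different perturbation in Proposition \ref{delta-dev}: it takes two actions $a_1,a_2$ in the support, an open set $\gO_1$, and sets $\eta=t\gd_{a_1}{\bf 1}_{\gO_1}(\gamma|a_1)+(1-\kappa t)\gd_{a_2}(\gamma|a_2)$ with $\kappa=\gamma(\gO_1|a_1)$; expanding the resulting inequality to first order in $t$ (which requires computing the induced first-order shift $\tilde a_*(t)=a_2+t\hat a+o(t)$ with $\hat a=-\bar D_aG(a_2)^{-1}\int G(a_2,\go){\bf 1}_{\gO_1}d(\gamma|a_1)$) yields $c(a_1,\go;x)\le c(a_2,\go;x)$ for $\gamma|a_1$-almost every $\go$, and hence the minimality over $\Xi$. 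You need to supply this two-point mass-transfer argument (or an equivalent one): the single-atom choice you use for the first claim only compares $a(\go)$ with full revelation $a_*(\go)$ and cannot produce a comparison between $a(\go)$ and other points of $\Xi$.
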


\begin{proof} The first claim follows by selecting $\eta=\gd_{(a(\go),\go)}.$ The second one follows by selecting $\eta=t\gd_{a_1}{\bf 1}_{\gO_1}\gamma|a_1+(1-\kappa t)\gd_{a_2}\gamma|a_2$ for some open set $\gO_1$ and $\kappa=\gamma(\gO_1|a_1)$. In this case, we get 
\begin{equation}\label{aux10}
\begin{aligned}
&t\int (x(a_1)^\top G(a_1,\go)- W(a_1,\go)){\bf 1}_{\gO_1}d\gamma(\go|a_1)\ +\ (1-\kappa t)\int (x(a_2)^\top G(a_2,\go)- W(a_1,\go))d\gamma(\go|a_2)\\
& +\ \int W(\tilde a_*,\go)(t{\bf 1}_{\gO_1}d\gamma|a_1+(1-\kappa t)d\gamma|a_2)\ \le\ 0\,. 
\end{aligned}
\end{equation}
where $\tilde a_*(t)$ is uniquely determined by 
\[
t\int G(\tilde a_*(t),\go){\bf 1}_{\gO_1}d(\gamma|a_1)+(1-\kappa t)\int G(\tilde a_*(t),\go)d(\gamma|a_2)\ =\ 0\,.
\]
Clearly, \eqref{aux10} is equivalent to 
\begin{equation}\label{aux11111}
\begin{aligned}
&t\int (W(\tilde a_*(t),\go)-W(a_1,\go)+x(a_1)^\top G(a_1,\go)){\bf 1}_{\gO_1} d(\gamma|a_1)\\
&+(1-\kappa t)\int (W(\tilde a_*(t),\go)-W(a_2,\go)+x(a_2)^\top G(a_2,\go))d(\gamma|a_2)\ \le\ 0\,. 
\end{aligned}
\end{equation}
Assuming that $t$ is small, we get 
\[
\tilde a_*(t)\ =\ a_2+t \hat a\ +\ o(t),\ \hat a\ =\ -\bar D_aG(a_2)^{-1}\int G(a_2,\go){\bf 1}_{\gO_1}d(\gamma|a_1)
\]
and hence
\begin{equation}\label{aux11}
\begin{aligned}
&0\ge t\int (W(\tilde a_*(t),\go)-W(a_1,\go)+x(a_1)^\top G(a_1,\go)){\bf 1}_{\gO_1} d(\gamma|a_1)\\
&+(1-\kappa t)\int (W(\tilde a_*(t),\go)-W(a_2,\go)+x(a_2)^\top G(a_2,\go))d(\gamma|a_2)\\
&=\ t\int (W(a_2,\go)-W(a_1,\go)+x(a_1)^\top G(a_1,\go)){\bf 1}_{\gO_1} d(\gamma|a_1)\\
&+t\bar D_aW(a_2)\hat a\ +\ o(t)\\
&=\ t\int (W(a_2,\go)-W(a_1,\go)+x(a_1)^\top G(a_1,\go)){\bf 1}_{\gO_1} d(\gamma|a_1)\\
&-t\bar D_aW(a_2)\bar D_aG(a_2)^{-1}\int G(a_2,\go){\bf 1}_{\gO_1}d(\gamma|a_1)\ +\ o(t)\\
&=\ t\int(c(a_1,\go;x)-c(a_2,\go;x)){\bf 1}_{\gO_1}d(\gamma|a_1)\ +\ o(t)\,.
\end{aligned}
\end{equation}
Since $\gO_1$ is arbitrary, we get that 
\[
c(a_1,\go;x)\ \le\ c(a_2,\go;x)
\]
almost surely with respect to $\gamma|a_1.$ 
\end{proof}

\begin{corollary}\label{main-conditions} In a moment persuasion setup, let $a(\go)$ be an optimal information design. Then, for any two points $\go_1,\go_2,$ we have 
\begin{equation}
\begin{aligned}
&W(t g(\go_1)+(1-t)g(\go_2))\ +\ t (D_aW(a(\go_1))(a(\go_1) -g(\go_1))-W(a(\go_1)))\\ 
&+\ (1-t)(D_aW(a(\go_2))(a(\go_2) -g(\go_2))-W(a(\go_2)))\ \le\ 0\,. 
\end{aligned}
\end{equation}
In the case of $\go_1=\go_2,$ we just get 
\begin{equation}\label{simple-max}
 D_aW(a(\go))(a(\go) -g(\go))\ \le\ W(a(\go))\ -\ W(g(\go))\,.
\end{equation}
Furthermore, 
\begin{equation}\label{conv-W}
W(t a_1+(1-t)a_2)\ \le\ tW(a_1)\ +\ (1-t) W(a_2)\ ,\ a_1,\ a_2\in \Supp(a)\,. 
\end{equation}
In particular, $\Supp(a)$ is a $W$-convex set. 
\end{corollary}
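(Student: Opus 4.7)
The plan is to deduce all three inequalities directly from the variational bound \eqref{kramkov2019optimal-12} in Proposition \ref{delta-dev}, by choosing in each case a perturbation measure $\eta$ whose first marginal is supported on $\Supp(a)$ and whose joint support lies in $\Supp(\gamma)$. In the moment persuasion setup \eqref{kramkov2019optimal-12} reads
\begin{equation*}
\int \left(D_aW(a)(a-g(\go))-W(a)\right)d\eta(a,\go)\ +\ W\!\left(\textstyle\int g(\go)\,d\eta(a,\go)\right)\ \le\ 0,
\end{equation*}
so each inequality will pop out as soon as we pick $\eta$ that makes both sides transparent. The only real content is the choice of $\eta$; verifying the support condition and the finiteness of $\int f(\|a\|^2)g^2(\go)\,d\eta$ is routine since in each construction $\eta$ will be either a finite atomic measure sitting on $\Supp(\gamma)$ or a disintegration of $\gamma$ itself.

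For the first inequality I would fix $\go_1,\go_2$ (in the full-measure set where $(a(\go_i),\go_i)\in\Supp(\gamma)$) and take the two-point measure $\eta=t\,\delta_{(a(\go_1),\go_1)}+(1-t)\,\delta_{(a(\go_2),\go_2)}$. Then $\int g\,d\eta=tg(\go_1)+(1-t)g(\go_2)$, and the integrand of the first term evaluates atom-by-atom, so \eqref{kramkov2019optimal-12} becomes exactly the displayed inequality. The second inequality is then immediate: specializing to $\go_1=\go_2=\go$ (equivalently, taking $\eta=\delta_{(a(\go),\go)}$) yields $W(g(\go))+D_aW(a(\go))(a(\go)-g(\go))-W(a(\go))\le 0$, which is the claim after rearranging.

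For the $W$-convexity of $\Supp(a)$ I would use the consistency identity $a(\go)=E[g(\go)\mid a(\go)]$ from Theorem \ref{mainth-limit}. Given $a_1,a_2\in\Supp(a)$ and $t\in[0,1]$, let $\gamma|a_i$ denote the regular conditional distribution of $\go$ given $a(\go)=a_i$, and take
\begin{equation*}
\eta\ =\ t\,\delta_{a_1}\otimes(\gamma|a_1)\ +\ (1-t)\,\delta_{a_2}\otimes(\gamma|a_2).
\end{equation*}
By construction $\Supp(\eta)\subset\Supp(\gamma)$, and the consistency identity gives $\int g(\go)\,d(\gamma|a_i)(\go)=a_i$, hence $\int g\,d\eta=ta_1+(1-t)a_2$. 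Moreover, for each $i$ the factor $(a_i-g(\go))$ in the integrand integrates to zero against $\gamma|a_i$, so the first term in \eqref{kramkov2019optimal-12} collapses to $-tW(a_1)-(1-t)W(a_2)$. Plugging in yields $W(ta_1+(1-t)a_2)\le tW(a_1)+(1-t)W(a_2)$, which is $W$-convexity.

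The main subtlety I expect is at the support step of the last argument: for $a_i$ in the closure $\Supp(a)$ but not actually attained, the measures $\gamma|a_i$ are not literally defined. The fix is to pick sequences of attained values $a_i^{(n)}\to a_i$, run the argument above for each $n$, and pass to the limit using continuity of $W$ (guaranteed by Assumption \ref{ac}); the integrability needed to take the limit is controlled by the growth bound on $W$ in Corollary \ref{cor-moment}. Everything else, in particular verifying $\int f(\|a\|^2)g^2(\go)\,d\eta<\infty$ for the chosen $\eta$'s, reduces to finiteness of $E[\|g(\go)\|^2 f(\|g(\go)\|^2)]$ in that same corollary.
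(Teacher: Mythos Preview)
Your proposal is correct and follows essentially the same route as the paper. For the first inequality the paper uses exactly your two-atom measure $\eta=t\,\delta_{(a(\go_1),\go_1)}+(1-t)\,\delta_{(a(\go_2),\go_2)}$ in Proposition~\ref{delta-dev}; for \eqref{simple-max} the paper invokes Lemma~\ref{super-G}, whose proof applies Proposition~\ref{delta-dev} with $\eta=\delta_{(a(\go),\go)}$, i.e.\ your specialization $\go_1=\go_2$. For \eqref{conv-W} your choice $\eta=t\,\delta_{a_1}\otimes(\gamma|a_1)+(1-t)\,\delta_{a_2}\otimes(\gamma|a_2)$ is precisely the $\gO_1=\gO$, $\kappa=1$ instance of the conditional-measure construction used in the proof of Lemma~\ref{super-G} (see \eqref{aux11111}), specialized to moment persuasion where $\tilde a_*(t)=ta_1+(1-t)a_2$; your write-up is in fact more transparent than the paper's terse pointer to that lemma, and your remark about approximating boundary points $a_i\in\Supp(a)$ by attained values is the right way to handle the well-definedness of $\gamma|a_i$.
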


\begin{proof}[Proof of Corollary \ref{main-conditions}] The first claim follows from the choice $\eta=t\gd_{(a(\go_1),\go_1)}+(1-t)\gd_{(a(\go_2),\go_2)}$ in Proposition \ref{delta-dev}. The second one follows from Lemma \ref{super-G}. Monotonicity of the set follows by evaluating the inequality at $t\to 0.$ Maximality follows from \eqref{simple-max}.
\end{proof}

\begin{proof}[Proof of Theorem \ref{monge}] Let $(a(\go),\ x(a))$ be an optimal policy and let 
\[
\phi^c(a)\ =\ \inf_\go (c(a,\go;x)-\phi_\Xi(\go;x))
\]
Pick an $a\in \Xi.$ Since $a\in \Xi,$ there exists a $\tilde\go$ such that $a=a(\tilde\go)$ and hence 
\[
\phi^c(a)\ =\ \inf_\go (c(a,\go;x)-\phi_\Xi(\go;x))\ \le\ c(a,\tilde\go;x)-\phi_\Xi(\go;x)\ =\ 0\,.
\]
Thus, 
\[
\int\phi^c(a(\go))\mu_0(\go)d\go\ =\ 0\,. 
\]
At the same time, 
\[
c(a,\go;x)-\phi_\Xi(\go;x)\ =\ c(a,\go;x)\ -\ \inf_{b\in \Xi}c(b,\go;x)\ \ge 0\,.
\]
Thus, $\phi^c(a)=0$ for all $a\in \Xi.$ Now, by the definition of $\phi^c,$ we always have 
\[
\phi^c(a)\ +\ \phi_\Xi(\go;x)\ \le\ c(a,\go)\
\]
for an optimal policy. Let $\gamma$ be the measure on $\Xi\times \gO$ describing the joint distribution of $\chi=a(\go)$ and $\go.$ Then, 
\[
\int c(a,\go)\gamma(a,\go)\ =\ \int c(a(\go),\go)\mu_0(\go)d\go\ =\ \int \phi_\Xi(\go;x)\mu_0(\go)d\go\ =\ \int \phi_\Xi(a)d\nu(a)\,,
\]
Pick any measure $\pi$ from the Kantorovich problem. Then, 
\begin{equation}
\begin{aligned}
&\int c(a,\go)d\gamma(a,\go)\ =\ \int(\phi_\Xi(\go;x))d\gamma(a,\go)\\
& =\  \int\phi_\Xi(\go;x)\mu_0(\go)d\go\ +\ \int\phi^c(a(\go))\mu_0(\go)d\go\\
&=\ \int(\phi_\Xi(\go;x)+\phi^c(a(\go)))\mu_0(\go)d\go\\
&=\ \int(\phi_\Xi(\go;x)+\phi^c(a))d\pi(a,\go)\ \le\ \int c(a,\go)d\pi(a,\go)
\end{aligned}
\end{equation}
Thus, $\gamma$ minimizes the cost in the Kantorovich problem. 
\end{proof}

\begin{proof}[Proof of Theorem \ref{converse}] The first claim follows from Corollary \ref{main-conditions} in the Appendix. The proof of sufficiency closely follows ideas from \cite{kramkov2019optimal}. 

Let $a(\go)$ be a policy satisfying the conditions Theorem \ref{converse}. Note that, in terms of the function $c,$ our objective is to show (see \eqref{key}) that 
\[
\min_{all\ feasible\ policies\ b(\go)}E[c(b(\go),g(\go))]\ =\ E[c(a(\go),g(\go))]. 
\]
Next, we note that the assumed maximality implies that $c(a(\go),g(\go))\ =\ \phi_\Xi(g(\go))\le 0$ for all $\go.$ Now, for any feasible policy $b(\go),$ we have $E[g(\go)|b(\go)]=b(\go)\in conv(g(\gO))$ and therefore, for any fixed $a\in \R^M,$ we have 
\begin{equation}
\begin{aligned}
&E[c(a,g(\go))\ -\ c(b(\go),g(\go))|b(\go)]\\
& =\ E[W(g(\go))\ -\ W(a)\ +\ D_aW(a)\,(a-g(\go))\\
&-(W(g(\go))\ -\ W(b(\go))\ +\ D_aW(b(\go))\,(b(\go)-g(\go)))]\\
&\ =\ W(b(\go))-W(a)+D_aW(a)(a-b(\go))\ =\ c(a,b(\go))\,.
\end{aligned}
\end{equation}
Taking the infinum over a dense, countable set of $a,$ we get 
\[
\inf_a E[c(a,g(\go))\ -\ c(b(\go),g(\go))|b(\go)]\ =\ \phi_\Xi(b(\go))\le 0
\]
and therefore 
\begin{equation}\label{last-c}
\begin{aligned}
&E[c(a(\go),g(\go))\ -\ c(b(\go),g(\go))|b(\go)]\ =\ E[\inf_{a\in \Xi}c(a,g(\go))\ -\ c(b(\go),g(\go))|b(\go)]\\ 
&\le\ \inf_{a\in\Xi}E[c(a,g(\go))\ -\ c(b(\go),g(\go))|b(\go)]= \inf_{a\in \Xi} c(a,b(\go))\le 0\,.
\end{aligned}
\end{equation}
and therefore, integrating over $b,$ we get  
\[
E[c(a(\go),g(\go))]\ \le\  E[c(b(\go),g(\go))]\,. 
\]
The proof is complete. 
\end{proof}

\begin{proof}[Proof of Proposition \ref{uniqueness}] Since $\Xi$ is $W$-monotone, we have $c(a,b)\ge 0$ for all $a,b\in \Xi$ and hence $\phi_\Xi(b)\ge 0$ for all $b\in \Xi.$ Thus, if $\Xi\subset X,$ we get $\phi_\Xi(b)=0$ on $\Xi.$ Let now $a$ be an optimal policy. First we note that \eqref{last-c} implies that $\phi_\Xi(b(\go))=0$ almost surely for any optimal policy $b(\go).$ 

If $\Xi=Q_\Xi,$ we get that $\tilde\Xi\subset \Xi$ and hence $\phi_\Xi(b) \le \phi_{\tilde\Xi}(b)$ for all $b.$  Thus, 
\[
\int c(b(\go),g(\go))\mu_0(\go)d\go\ \ge\ \int\phi_{\tilde\Xi}(g(\go))\mu_0(\go)d\go\ \ge \int\phi_{\Xi}(g(\go))\mu_0(\go)d\go\,.
\]
Since both policies are optimal, we must have $\phi_\Xi=\phi_{\tilde\Xi},$ and the singleton assumption implies that $a(\go)=\tilde a(\go).$ 
\end{proof}

\begin{proof}[Proof of Theorem \ref{frostman}] Our proof is based on an application of the famous Frostman's lemma (see, e.g., \cite{mattila1999geometry}). 

\begin{lemma}[Frostman's lemma] We only consider the case of a non-degenerate $D_{aa}W.$ The case of a degenerate, constant $D_{aa}W$ is proved analogously. 

Define the $s$-capacity of a Borel set $A$ as follows:
\[
C_s(A)\ =\ \sup\left\{
\left(
\int_{A\times A}\frac{d\mu(x)d\mu(y)}{\|x-y\|^s}
\right)^{-1}:\ \mu\ \text{is a Borel measure and $\mu(A)=1$}
\right\}\,.
\]
(Here, we take $\inf\emptyset=\infty$ and $1/\infty=0.$) Then, 
\[
\dim_H(A)\ =\ \sup\{s\ge 0:\ C_s(A)\ >\ 0\}\,. 
\]
\end{lemma}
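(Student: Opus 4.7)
My plan is to translate the $W$-monotonicity of $\Xi = \Supp(a)$ into a pointwise quadratic constraint on secants, and to deduce that this constraint forces $\Xi$ to be locally bi-Lipschitz equivalent to a subset of the nonnegative eigenspace of $D_{aa}W$, whose dimension is exactly $\nu$.

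First, Theorem \ref{converse} together with Lemma \ref{w-convex} imply that $\Xi$ is $W$-monotone, i.e., $c(a_1,a_2) = W(a_2) - W(a_1) - D_aW(a_1)(a_2-a_1) \geq 0$ for all $a_1,a_2 \in \Xi$. A second-order Taylor expansion then produces $\xi \in [a_1,a_2]$ with
\[
(a_2 - a_1)^\top D_{aa}W(\xi)(a_2 - a_1) \geq 0.
\]
This is the central geometric constraint I would exploit: every secant of $\Xi$ is a nonnegative direction of a continuously varying quadratic form, so locally $\Xi$ is trapped in a cone determined by the eigen-structure of $D_{aa}W$.

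Next I would localize. Fix $a_0 \in X$ where $D_{aa}W(a_0)$ is non-degenerate, and let $E_+,E_-$ be its positive and negative eigenspaces with $\dim E_+ = \nu(a_0)$, $\beta>0$ the smallest absolute value of a negative eigenvalue, and $\Lambda = \|D_{aa}W(a_0)\|$. By continuity, choose $r>0$ so that $\|D_{aa}W(\xi) - D_{aa}W(a_0)\| < \beta/2$ on $B(a_0,r)$. For $a_1,a_2 \in \Xi \cap B(a_0,r/2)$ the whole segment lies in $B(a_0,r)$, so writing $v = a_2 - a_1 = v_+ + v_-$ in $E_+\oplus E_-$ yields
\[
0 \leq (\Lambda + \beta/2)\|v_+\|^2 - (\beta/2)\|v_-\|^2,
\]
hence $\|v_-\| \leq C\|v_+\|$ with $C = C(a_0) = \sqrt{1 + 2\Lambda/\beta}$. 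The orthogonal projection $\pi_+ : \Xi \cap B(a_0,r/2) \to E_+$ therefore satisfies $\|a_1-a_2\| \leq \sqrt{1+C^2}\,\|\pi_+(a_1)-\pi_+(a_2)\|$ and is trivially $1$-Lipschitz, so it is a bi-Lipschitz embedding into an affine subspace of dimension $\nu(a_0)$. Since bi-Lipschitz maps preserve Hausdorff dimension, $\dim_H(\Xi \cap B(a_0,r/2)) \leq \nu(a_0)$. To conclude, let $D \subset X$ be the (countable) degenerate set, so $\dim_H(\Xi\cap D)=0$; by separability of $\R^M$ I cover $X\setminus D$ by countably many balls $B(a_n,r_n/2)$ with $a_n \in X\setminus D$ as above, and countable stability of Hausdorff dimension gives $\dim_H(\Xi \cap X) \leq \sup_n \nu(a_n) \leq \sup_{a\in X}\nu(a)$. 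The constant-matrix case is a simpler, global version of the same argument: any zero eigenspace is bundled with $E_+$ to match the definition of $\nu$ as counting \emph{nonnegative} eigenvalues, and the projection is bi-Lipschitz on all of $\Xi$ simultaneously.

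The main obstacle will be the quantitative continuity step: the constant $C(a_0)$ blows up as $a_0$ approaches the degenerate set $D$, so the radii $r(a_0)$ must be chosen with care near $D$, and one must verify that uniform bi-Lipschitz estimates on each ball really suffice despite the constants degenerating across balls. This is precisely why the hypothesis restricts $D$ to a countable set of Hausdorff dimension zero, so that $\Xi\cap D$ contributes nothing and only finite-$C$ pieces cover the remainder. As an alternative endgame bypassing explicit projections, Frostman's lemma (stated in the excerpt) can be invoked on each ball to rule out measures of finite $s$-energy for $s > \nu(a_0)$, since any such measure would concentrate on the graph of a Lipschitz function over a $\nu(a_0)$-dimensional subspace, yielding the same bound in a more abstract form.
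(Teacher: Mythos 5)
Your proposal is correct, and its core mechanism coincides with the paper's proof of the dimension bound: from the $W$-convexity (hence $W$-monotonicity) of $\Xi$ you extract the secant inequality $(a_1-a_2)^\top D_{aa}W(\xi)(a_1-a_2)\ge 0$ for some $\xi$ on the segment --- you obtain it by Taylor--Lagrange from monotonicity, the paper by an interior-minimum argument from convexity, which is equivalent --- then you diagonalize at a non-degenerate point, shrink to a ball on which the Hessian varies by less than half the smallest negative eigenvalue, and conclude that the negative-eigenspace component of any secant is dominated by the nonnegative-eigenspace component, exactly as in the paper's estimate $\|a_1-a_2\|\le c\|x_1-x_2\|$. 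Where you genuinely diverge is the endgame: the paper compares $s$-energies of a probability measure and of its projection onto the positive eigenspace and invokes the capacity characterization of Hausdorff dimension (the stated Frostman lemma), whereas you note that $\|v_-\|\le C\|v_+\|$ makes the orthogonal projection onto $E_+$ bi-Lipschitz on $\Xi\cap B(a_0,r/2)$, use that bi-Lipschitz maps preserve Hausdorff dimension, and then assemble the global statement via a Lindel\"of subcover and countable stability, with the countable degenerate set contributing dimension zero. Your route is more elementary (Frostman's lemma is never needed) and it makes explicit two steps the paper compresses into ``without loss of generality $X$ is a sufficiently small ball,'' namely the covering argument with non-uniform constants --- harmless precisely because countable stability requires no uniformity, as you observe --- and the treatment of the exceptional set; the paper's capacity comparison buys nothing beyond staying inside the toolkit it has set up. Your handling of the constant-Hessian case, bundling the kernel with $E_+$ so that $\nu$ counts nonnegative eigenvalues and running the same estimate globally, is also the appropriate completion of the case the paper only sketches.
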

Now, by Corollary \ref{main-conditions} (formula \eqref{conv-W}), we have that the function $q(t)=W(a_1 t+a_2(1-t)),\ t\in (0,1)$ is either identically constant or attains a global minimum at a point $t_*\in (0,1).$ At this point, we have 
\[
(a_1-a_2)^\top D_{aa}W(t_*a_1+(1-t_*)a_2)(a_1-a_2)\ \ge\ 0. 
\]
Without loss of generality, we may assume that $X$ is a sufficiently small ball in $\R^M.$ Pick any point $a_*\in X$ for which $D_{aa}W$ is non-degenerate and change the coordinates so that $D_{aa}W(a_*)=\diag(\gl_1,\cdots,\gl_M)$ is diagonal. Furthermore, rescaling the coordinates, we may assume that all $\gl_i$ have absolute values equal to $1$, so that $\gl_i=1$ for $i\le \nu(a_*)$ and $\gl_i=-1$ for $i>\nu(a_*)$. Furthermore, making the ball $X$ sufficiently small, we may assume that $\|D_{aa}W(a)-D_{aa}W(a_*)\|\le \eps$ for all $a\in X.$ Let $a_i=(x_i,y_i)$ be the orthogonal decomposition into two components corresponding to positive and negative eigenvalues. Then, 
\begin{equation}
\begin{aligned}
&0\ \le\ (a_1-a_2)^\top D_{aa}W(t_*a_1+(1-t_*)a_2)(a_1-a_2)\\
& \le\ \eps\|a_1-a_2\|^2\ +\ \|x_1-x_2\|^2-\|y_1-y_2\|^2\ =\ (1+\eps)\|x_1-x_2\|^2-(1-\eps) \|y_1-y_2\|^2\,,
\end{aligned}
\end{equation}
and therefore 
\[
\|a_1-a_2\|^2\ =\  \|x_1-x_2\|^2+\|y_1-y_2\|^2\ \le\ c^2 \|x_1-x_2\|^2
\]
with $c^2=1+(1+\eps)/(1-\eps).$ 
\begin{equation}
\begin{aligned}
&\int_{\Xi\cap X\times \Xi\cap X}\frac{d\mu(a_1)d\mu(a_2)}{\|a_1-a_2\|^s}\\ 
&\ge c^{-1}\int_{\Xi\cap X\times \Xi\cap X}\frac{d\mu(a_1)d\mu(a_2)}{\|x_1-x_2\|^s}\ =\ c^{-1}\int_{\tilde \Xi\cap X\times \tilde \Xi\cap X}\frac{d\tilde\mu(x_1)d\tilde\mu(x_2)}{\|x_1-x_2\|^s}
\end{aligned}
\end{equation}
where $\tilde\mu$ is the $x$-marginal of the measure $\mu$ and $\tilde\Xi$ is the projection of $\Xi$ onto $\R^{\nu(a_*)}.$ Thus, 
\[
C_s(\Xi\cap X)\ \le\ c\,C_s(\tilde\Xi\cap X)
\]
and therefore
\[
\dim_H(\Xi\cap X)\ \le\ \dim(\tilde\Xi\cap X)\,.
\]
Since $\tilde\Xi\subset\R^{\nu(a_*)},$ the claim follows. 
\end{proof}

\begin{proof}[Proof of Corollary \ref{cor-dirichlet}]  We will use an important property of the Dirichlet distribution: Defining $\bar\go_1=(\go_1,\cdots,\go_j)$ and $\bar\go_2=(\go_{j+1},\cdots,\go_M),$ we have that the random vectors ${\bf 1}^T\bar\go_1,\ \bar\go_1/({\bf 1}^T\bar\go_1)$ and $\bar\go_2/(1-{\bf 1}^\top \bar\go_1)$ are jointly independent. 
In this case, by direct calculation, 
\[
a(\go)\ =\ E[\binom{\psi_1 \bar\go_1}{\psi_2 \bar\go_2}|a(\go)]\,. 
\]
Then, 
\begin{equation}
\begin{aligned}
&\phi_\Xi(b)\ =\ \min_{\go}\Bigg(W(b)-W(a(\go))+D_aW(a(\go))(a(\go)-b)\Bigg)\\
&=\ \min_{\go}\Bigg(W(b)-\sum_i (q_i\cE_i(a_i(\go))+\varphi_i({\bf 1}^\top a_i(\go)))\\
&+\sum_i (q_i(\log(a_i/y_i)+1)+\varphi_i'({\bf 1}^\top a_i))^\top(a_i(\go)-b_i)\Bigg)\\
\end{aligned}
\end{equation}
Since ${\bf 1}^\top a_i(\go)=\gamma_i,$ the first order conditions take the form 
\[
q_1a_1^{-1}(a_1-b_1)=\gl_1,\ q_2a_2^{-1}(a_2-b_2)=\gl_2
\]
where $\gl_i$ are Lagrange multipliers for the constraints ${\bf 1}^\top a_i=\gamma_i\,.$ Thus, denoting $\bar b_i={\bf 1}^\top b_i,$ we get $a_i=\gamma_i b_i/\bar b_i,$ and using the identity 
\[
\cE(\frac{\gamma b_i}{\bar b})\ =\ \frac{\gamma}{\bar b}(\cE(b)+\log(\frac{\gamma}{\bar b})\bar b)
\]
we get 
 \begin{equation}
\begin{aligned}
&\phi_\Xi(b)\ =\ \sum_{i=1}^2\left(\varphi_i(\bar b_i)-\varphi_i(\gamma_i)+\varphi_i'(\gamma_i)(\gamma-\bar b_i)\ -\ q_i(\bar b_i \log(\gamma_i/\bar b_i)-\gamma_i+\bar b_i)\right)
\end{aligned}
\end{equation}

\end{proof}

\newpage

\bibliographystyle{aer}
\bibliography{bibliography}

\newpage

\appendix

\end{document}